\newenvironment{reminder}[1]{\smallskip
	\noindent {\bf Reminder of #1 }\em}{}
\newtheorem{theorem}{Theorem}[section]
\newtheorem{lemma}[theorem]{Lemma}
\newtheorem{corollary}[theorem]{Corollary}
\newenvironment{definition}[1][Definition]{\begin{trivlist}
		\item[\hskip \labelsep {\bfseries #1}]}{\end{trivlist}}
 \gdef\xxxmark{%
   \expandafter\ifx\csname @mpargs\endcsname\relax % in minipage?
     \expandafter\ifx\csname @captype\endcsname\relax % in figure/caption?
     \marginpar{xxx}% not in a caption or minipage, can use marginpar
          \else
       xxx % notice trailing space
    \fi
   \else
     xxx % notice trailing space
   \fi}
 \gdef\xxx{\@ifnextchar[\xxx@lab\xxx@nolab}
 \long\gdef\xxx@lab[#1]#2{\textbf{[\xxxmark #2 ---{\sc #1}]}}
 \long\gdef\xxx@nolab#1{\textbf{[\xxxmark #1]}}
\newcommand{\kstr}{k^\star}
\newcommand{\RandomSat}{\textsc{$k$-RandomSat }}
\newcommand{\HammDist}{\textsc{$\alpha$-SampleAndTest}}
\newcommand{\ncs}{\textsc{NumClausesSAT}}
\newcommand{\ncus}{\textsc{NumClausesUnSAT}}
\newcommand{\smlHDAlg}{\textsc{SAT-from-$\alpha$-Small-HD}}
\newcommand{\pprom}{p_{\textrm{promising}}}
\newcommand{\pbad}{p_{\textrm{bad}}}
\newcommand{\Sbad}{S_{\textrm{bad}}}
\newcommand{\DPa}{D_{pa}}
\newcommand{\DPc}{D_{pc}}
\newcommand{\eps}{\epsilon}
\newcommand{\HD}{HammingDistance}
\newcommand{\kval}{60}
\begin{document}
\title{Faster Random $k$-CNF Satisfiability} 
\author{Andrea Lincoln\thanks{MIT, andreali@mit.edu}~\footnote{Some of this work was completed while this student was an intern at VMware. This work supported in part by NSF Grants CCF-1417238, CCF-1528078 and CCF-1514339, and BSF Grant BSF:2012338.}, ~~Adam Yedidia \thanks{MIT, adamy@mit.edu}}
\date{}
\maketitle
\thispagestyle{empty}

\setcounter{page}{0}

%\setcounter{page}{1}
%\section*{Abstract}
\begin{abstract}
%Random $k$-CNF Satisfiability (random $k$-SAT), where all $m$ clauses are chosen uniformly at random from the available $2^kn^k$ possible choices, can be solved with high probability in time $O\left(2^{n (1- \Omega(\lg^2(k)/k))}\right)$ for large $k$.

We describe an algorithm to solve the problem of Boolean CNF-Satisfiability when the input formula is chosen randomly. 

We build upon the algorithms of Sch{\"{o}}ning 1999 and Dantsin et al.~in 2002.
The Sch{\"{o}}ning algorithm works by trying many possible random assignments, and for each one searching systematically in the neighborhood of that assignment for a satisfying solution.  Previous algorithms for this problem run in time $O(2^{n (1- \Omega(1)/k)})$.

Our improvement is simple: we count how many clauses are satisfied by each randomly sampled assignment, and only search in the neighborhoods of assignments with abnormally many satisfied clauses. We show that assignments like these are significantly more likely to be near a satisfying assignment. This improvement saves a factor of $2^{n \Omega(\lg^2 k)/k}$, resulting in an overall runtime of $O(2^{n (1- \Omega(\lg^2 k)/k)})$ for random $k$-SAT. 

%For each assignment it runs many random walks hoping that one of them will find a satisfying solution. 
%If an assignment has small hamming distance ($<n/k$) then one of these walks is likely to find the satisfying solution. This algorithm runs in time $O\left(2^{n (1- \Omega(1)/k)}\right)$.

%We devise a simple test for assignments which preferentially accepts assignments of small Hamming distance. This allows us to run the random walks on significantly fewer assignments saving a factor of $2^{O(\lg(k))/k}$.
\end{abstract}

\newpage

%intro (duh)
\section{Introduction}
\label{sec:intro}

The Boolean Satisfiability problem, known as SAT for short, is one of the best-known and most well-studied problems in computer science (e.g.  \cite{PPSZ05,PvsNPSurvey17,satAlgSurvey96,randomSATSurveyDimi, localSearchAlg}). In its general form, it describes the following problem: given an input formula $\phi$ composed of conjunctions, disjunctions, and negations of a list of Boolean-valued variables ($x_1$, $x_2$, \dots, $x_n$), determine whether or not there exists an assignment of variables to Boolean values such that $\phi$ evaluates to TRUE. SAT was the first problem shown to be NP-complete \cite{cook71,Levin73}. 

Every Boolean formula $\phi$ can be written in \emph{conjunctive normal form}, meaning that it is written as the logical conjunction of a series of disjunctive clauses. Each disjunctive clause takes as its value the logical disjunction of a series of \emph{literals}, which takes on either the same value of one of the variables $x_i$ or the negation of that value.

If we constrain the input formula $\phi$ to contain only disjunctive clauses that are of size $k$ or smaller, then that more constrained problem is known as \emph{$k$-Satisfiability}, or \emph{$k$-SAT} for short. When $k>2$ it is known to be NP-complete \cite{Karp72}. As $k$ grows, the best known runtime of the worst-case $k$-SAT problem, $O(2^{1- 1/\Theta(k)})$, grows~\cite{ETHandKSAT, PPSZ05}. %In fact, the time complexity of $k$-SAT grows larger as $k$ grows for infinitely many $k$ if the $k$-SAT problem takes exponential time (i.e. if the Exponential Time Hypothesis holds)  \cite{ETHandKSAT}. 
%Where $O^*(\cdot)$ denotes big-O notation in which polynomial factors in $n$ are suppressed. 

%\paragraph{Background on Random SAT}

It is well-known that in real-world Boolean Satisfiability problems, SAT solvers often vastly outperform the best known theoretical runtimes~\cite{satSolverPractice,generateSat96}. One possible explanation for this gap in performance is that most input formulas are easily solved without much computation being necessary, but that there exists a ``hard core'' of difficult-to-solve formulas that are responsible for the apparent difficulty of worst-case SAT. 

Another possible explanation for this gap in performance is that, in practice, people usually care about highly structured formulas that are much easier to solve than typical formulas---according to this explanation, there would be an ``easy core'' of tractable formulas that are responsible for the apparent simplicity of most practical SAT problems.

To try to distinguish between these two explanations, one can study \emph{random Satisfiability}: Boolean Satisfiability for which the input formula $\phi$ is chosen according to some known uniform probability distribution $D_\Phi$, and where we expect to be able to return the correct answer (satisfiable or unsatisfiable) with probability that is exponentially close to 1 in the size of the input. Random k-SAT is a very well studied problem (e.g.~\cite{randomSATSurveyDimi,threasholdSAT,machineLearnRandomSat,coja2007almost, coja2010better,ming1990probabilistic,vyas2018super}).

Typically, attention is restricted to $k$-CNF formulas whose ratio of clauses to variables is \emph{at the threshold}, meaning that the number of clauses $m$ is drawn from a Poisson distribution centered at $d_k n$, where $n$ is the number of variables and $d_k$ is a function of $k$ close to $2^k \ln(2) - \frac{1}{2}(1+\ln(2))$ \cite{threasholdSAT}. Such formulas are conjectured to be the hardest instances for a given $n$~\cite{cook1997finding, generateSat96}. It was shown by Ding, Sly, and Sun~\cite{threasholdSAT} that away from this threshold, formulas are either overwhelmingly satisfied or overwhelmingly unsatisfied, making the problem less interesting. Notably, away from this threshold one can simply return True or False based on the number of clauses and give the correct answer with high probability. We go into much greater detail about $D_\Phi$ and the threshold in Section~\ref{sec:prelims}.

%TODO: Very soft: increase justification of why our algorithm being fast is intersting (seems dubm of course it is)

%If the set of formulas that are hard to solve is small relative to the overall set of possible formulas, we expect random SAT to be much easier from a computational perspective than worst-case SAT, because most formulas won't belong to the small set of formulas that are hard to solve. If, on the other hand, almost all SAT formulas are hard to solve, we would expect random SAT to be almost as hard as worst-case SAT.

Away from the threshold, polynomial-time algorithms for SAT have been found and analyzed, first by Chao and Franco~\cite{ming1990probabilistic}, and later by Coja-Oghlan et al.~\cite{coja2007almost, coja2010better}. %Recently, random k-SAT has even been studied using machine learning~\cite{machineLearnRandomSat}. %In the book chapter ``Random Satisfiability'' by Achlioptas, he notes that there is an algorithmic barrier for known algorithms that occurs at a particular asymptotic ratio of number of clauses to number of variables. 
Additionally, a recent result by Vyas~\cite{vyas2018super} re-analyzes the algorithm of Paturi et al.~\cite{PPZ} in the case when the input is drawn from a random distribution, and finds the algorithm to run faster on average in this case by a factor of $2^{n \Omega(\lg k)/k}$, giving a running time of $O(2^{n (1- \Omega(\lg k)/k)})$. 

We build upon the work of Sch{\"{o}}ning~\cite{randomLocalSearch} to solve random $k$-SAT in time $O(2^{n (1- \Omega(\lg^2 k)/k)})$. This represents an algorithmic improvement of $2^{n \Omega(\lg^2 k)/k}$ over the runtime of the algorithm of Paturi et al.~as analyzed by Vyas in~\cite{vyas2018super}.
%Our algorithm succeeds with high probability even above this barrier.

\subsection{A New Algorithm \label{sec:newalg}}

In this paper, we restrict our attention to the problem of random $k$-CNF Satisfiability in the limit of large $k$, which approaches general Boolean CNF-Satisfiability. Our algorithm improves upon the previous best known algorithm for solving random $k$-SAT in the limit of large $k$, assuming that the input formulas are chosen according to a known uniform distribution.

Our algorithm improves the running time of $k$-CNF Satisfiability at the threshold by modifying the algorithm of Sch{\"{o}}ning to only explore in the neighborhood of those sampled assignments that pass an additional test.
%by taking the algorithm of Sch{\"{o}}ning~and adding a test for which assignments should be further explored ~\cite{randomLocalSearch}. 
By adding this test, we get a $2^{n\Omega(\lg^2 k)/k}$ improvement in the runtime of the algorithm. The test is simple: we count how many clauses are satisfied, and if that number is large, only then do we search in the neighborhood of the assignment. In Appendix \ref{subsec:motivation}, we provide additional motivation for why our improved running time is remarkable. 

\begin{theorem}[Main Theorem Informal]
Let $\phi$ be drawn uniformly at random from formulas at the threshold (defined formally in Section~\ref{sec:prelims}). There exists an algorithm, \HammDist~ (described in Section~\ref{sec:algorithm}), such that:
\begin{itemize}
	\item If $\phi$ is satisfiable, then with probability at least $1-3\cdot 2^{-n/(3\ln(2)2^k)}$, \HammDist~returns an assignment $\vec{a}$ that satisfies $\phi$.
	\item If $\phi$ is not satisfiable, then \HammDist~ will return False with certainty. 
	\item \HammDist($\phi$) will run 
	in time $O(2^{n(1- \Omega(\lg^2 k)/k)}).$
\end{itemize}

\end{theorem}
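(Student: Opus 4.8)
The plan is to read \HammDist{} as a filtered version of Sch\"oning's random-restart scheme wrapped around the deterministic bounded-radius search of Dantsin et al.: repeat $N$ times --- draw $\vec{a}\in\{0,1\}^n$ uniformly, compute the number $s(\vec{a})$ of clauses of $\phi$ that $\vec{a}$ satisfies, and only if $s(\vec{a})\ge T$ run the depth-bounded recursive search that in time $\tO(k^{r})$ either returns a satisfying assignment within Hamming distance $r$ of $\vec{a}$ or certifies that none exists; output the first such assignment found, otherwise output False. Here $r=\rho n$ and the threshold $T$ are parameters fixed at the very end (a length-$O(n)$ Sch\"oning walk with $\Theta((k-1)^r)$ restarts could replace the recursive search without changing anything essential). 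The second bullet is then immediate: the inner search only ever returns an assignment it has explicitly checked against $\phi$, so on an unsatisfiable $\phi$ the algorithm outputs False with certainty.

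For the first bullet, use satisfiability to fix a solution $\vec{a}^\star$, and recall that conditioned on $\vec{a}^\star$ being a solution the $m$ clauses of $\phi$ are i.i.d.\ uniform among the clauses $\vec{a}^\star$ satisfies. For a single draw let $E_{\mathrm{near}}=\{\|\vec{a}-\vec{a}^\star\|\le r\}$ and $E_{\mathrm{pass}}=\{s(\vec{a})\ge T\}$; any iteration realizing both definitely succeeds, since $\vec{a}^\star$ itself lies in the searched ball. As $\Pr[E_{\mathrm{near}}]=V(r)/2^n$ with $V(r)=\sum_{i\le r}\binom ni$, everything reduces to a lower bound $\beta$ on $\Pr[E_{\mathrm{pass}}\mid E_{\mathrm{near}}]$. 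The key computation: for $\vec{a}$ at Hamming distance $\rho n$ from $\vec{a}^\star$, a uniformly random $\vec{a}^\star$-satisfied clause is violated by $\vec{a}$ with conditional probability only $2^{-k}(1-(1-\rho)^k)/(1-2^{-k})$, far below the unconditional $2^{-k}$; hence $\mathbb E[s(\vec{a})\mid E_{\mathrm{near}}]$ exceeds the mean of $s$ over uniform $\vec{a}$ by $\Theta(m2^{-k})=\Theta(n)$, while the standard deviation of $s$ (over the clause randomness together with the choice of $\vec{a}$ in the shell) is only $\poly(k)\sqrt n=\ttildeo(n)$. A bounded-differences/martingale inequality --- handling the mild dependence from clauses sharing variables --- then places $T$ below $\mathbb E[s\mid E_{\mathrm{near}}]$ and gives $\beta\ge 1-2^{-\Omega(n/\poly(k))}$. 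Since the iterations are independent and each succeeds with probability $\ge\beta V(r)/2^n$, taking $N=\tO(2^n/(\beta V(r)))$ drives the probability that no iteration succeeds below $2^{-n/(3\ln(2)2^k)}$; a union bound over this event, the event that $\phi$ is atypical enough to break the mean/variance estimates, and the event that $\phi$'s solution structure is too degenerate to support conditioning on $\vec{a}^\star$, yields the stated $3\cdot 2^{-n/(3\ln(2)2^k)}$ failure bound.

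For the running time, each iteration costs $\poly(n)$ for the sampling and clause count plus $\tO(k^{r})$ for the inner search, which is \emph{run} only when $E_{\mathrm{pass}}$ holds --- over a uniform $\vec{a}$ this has probability $\pprom:=\Pr_{\vec{a}}[s(\vec{a})\ge T]$. I bound $\pprom$ with the same concentration used in reverse: the randomness of $\phi$ pins $\mathbb E[s]=m(1-2^{-k})$, and $T$ sits a $\Theta(n)$-sized deviation above it, so $\pprom\le 2^{-\Omega(n/\poly(k))}$ for all but a $2^{-\Omega(n/\poly(k))}$ fraction of $\phi$. The expected total work is thus $\tO\bigl(\tfrac{2^n}{\beta V(r)}(\poly(n)+\pprom\,k^{r})\bigr)$. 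It remains to choose $\rho,T$: push $T$ up to the limit allowed by the near-solution signal $\mathbb E[s\mid E_{\mathrm{near}}]$ (to shrink $\pprom$ while keeping $\beta$ non-negligible), and push $\rho$ up to the largest value for which $\pprom\,k^{\rho n}$ does not dominate $\poly(n)$. Using $H(\rho)\approx\rho\lg(1/\rho)$ and $2^n/V(\rho n)=2^{n(1-H(\rho))}$, this optimization lands at $\rho$ of order roughly $(\lg k)/k$ with $H(\rho)=\Theta(\lg^2 k/k)$, giving a total of $2^{n(1-\Omega(\lg^2 k/k))}$ --- beating the Sch\"oning/Dantsin bound $2^{n(1-\Omega(1/k))}$ and the Vyas bound $2^{n(1-\Omega(\lg k/k))}$ by the claimed factor.

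The step I expect to be the real obstacle is making the two-sided concentration of $s$ simultaneously sharp enough for both conclusions --- strong enough just below $\mathbb E[s\mid E_{\mathrm{near}}]$ to keep $\beta$ non-negligible \emph{and} strong enough just above $\mathbb E[s]$ that $\pprom$ is small enough for the inner search to remain affordable at the radius one wants --- while being honest about the dependencies in a random $k$-CNF: the clauses share variables, so $s(\vec{a})$ is not a sum of independent indicators, and as $\vec{a}$ moves within a Hamming shell the bounded-differences constant is the maximum variable degree $\Theta(k^2)$; the $\poly(k)$ factors this produces are exactly what decides whether $\rho$ can be taken as large as $(\lg k)/k$, hence whether the final savings is $\Omega(\lg^2 k/k)$ rather than something weaker. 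A secondary obstacle, which is what forces the careful constant in the $2^{-n/(3\ln(2)2^k)}$ bound, is ruling out that a random satisfiable $\phi$ at the threshold has an atypical abundance of ``deceptive'' assignments satisfying nearly all clauses yet far from every solution, since these would inflate $\pprom$ without contributing to $\beta$.
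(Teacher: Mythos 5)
Your proposal lands on the same algorithm and the same core analysis as the paper: sample, filter by number of satisfied clauses against a threshold $T$, run the Dantsin-style bounded search only on assignments that pass, with $\alpha=\Theta(\lg k/k)$; the true-positive and false-positive analyses hinge on the same conditional falsification probability $(1-(1-\alpha)^k)/(2^k-1)$ for a planted clause viewed from a shell assignment, which is exactly the paper's Corollary~\ref{cor:boundSmallHDCl}. Two places in your sketch need to be tightened, and they correspond to real technical weight in the paper.

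First, the concentration step. You propose a bounded-differences/martingale inequality to ``handle the mild dependence from clauses sharing variables.'' In the $D_{replace}$ model this dependence never arises: the clauses are drawn i.i.d.\ (with replacement), so \emph{for a fixed assignment $\vec{h}$} the per-clause falsification indicators are genuinely independent Bernoullis, and a plain multiplicative Chernoff bound applies (this is the paper's Lemma~\ref{lem:boundExistenceOfOutliers}); one then only needs to average over $\vec{h}$ afterward, and the uniform Chernoff estimate survives the average. Going through bounded differences over the coordinates of $\vec{h}$ is not merely unnecessary, it is quantitatively lossy: the relevant Lipschitz constant is the maximum variable degree, which in a formula with $m=\Theta(2^k n)$ clauses is $\Theta(k 2^k)$ (not $\Theta(k^2)$), and even bounding differences clause-by-clause costs a multiplicative $2^k$ in the exponent relative to Chernoff because it does not exploit the $\Theta(2^{-k})$ falsification probability. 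Either way the exponent degrades to $e^{-\Theta(n/(2^k\poly(k)))}$ and the final runtime would not beat $2^{n(1-\Omega(1/k))}$. The ``condition on $\vec{h}$ first, then Chernoff'' order of operations is load-bearing.

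Second, the planted-to-uniform conversion. You write that conditioned on $\vec{a}^\star$ being a solution the clauses are i.i.d.\ uniform over $\vec{a}^\star$-satisfied clauses, and you defer the mismatch to ``a union bound over the event that $\phi$'s solution structure is too degenerate.'' That statement is true only if $\vec{a}^\star$ is fixed in advance; when you first condition on $\phi$ being satisfiable and then pick $\vec{a}^\star$ as some solution of $\phi$, the resulting conditional law on $\phi$ is \emph{not} the planted distribution, because the planted distribution over-weights formulas with many solutions. This is not a tail event one can wave away; the paper spends Lemma~\ref{lem:plantedToAllReal}, Corollary~\ref{lem:plantedToSatReal}, and all of Section~\ref{sec:lotsOfSAT} on it. The essential quantitative fact is a lower bound on $p_{\mathrm{sat}}$ of the form $\geq \frac{1}{n}2^{-n/e^{k/e^2}}e^{-O(n/2^k)}$ (Corollary~\ref{cor:numSatIsHigh}), proved via an upper bound on the expected number of solutions in the planted model, which in turn is a multi-case computation. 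Without this step the rare-bad-event bounds in the planted model do not transfer to the uniform-over-satisfiable model at the right rate, and the $3\cdot 2^{-n/(3\ln(2)2^k)}$ failure probability cannot be reached. Flag this as the main technical content rather than a secondary obstacle.

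With those two adjustments your plan matches the paper's proof; the structure, the filter, the role of the planted distribution, and the $\alpha=\Theta(\lg k/k)$ optimization are all the same.
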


%While we do analyze the constant in the exponent of our running time, $O(2^{n(1- \Omega(\lg^2(k)/k))})$, we do so loosely. Our analysis shows that our algorithm achieves a running time of $O(2^{n(1- \lg^2(k)/(32k))})$ with high probability. 

A key technique in the proof of our result is connecting a different distribution over inputs (the \emph{planted $k$-SAT distribution}) to the uniformly random $k$-SAT distribution. Reductions between planted $k$-SAT and random $k$-SAT have been shown in previous work as well \cite{plantedToRandom2,plantedToRandom1,vyas2018super}. In the planted $k$-SAT distribution, an assignment, $\vec{a}$, is picked first. The formula $\phi$ is selected uniformly over $k$-SAT clauses \emph{conditioned on} $\vec{a}$ satisfying those clauses. As a result, the planted distribution has a bias towards picking formulas that have many satisfying assignments, relative to the uniform distribution over all satisfiable formulas. For this reason, the planted distribution tends to generate easier-to-solve formulas $\phi$ than the uniform distribution~\cite{plantedSATPoly}. We also find that the planted distribution is more easily analyzed.

It would be possible, and simpler, to analyze our algorithm only in the planted distribution over formulas. This would \emph{not}, however, correspond to a complete analysis of the algorithm in the random case. In this work, we begin by analyzing the performance of our algorithm when run on inputs drawn from the planted distribution. However, in Lemma~\ref{lem:plantedToAllReal}, we show that algorithms with a sufficiently low probability of failure in the planted distribution over input formulas continue to have a low probability of failure in the uniform distribution over input formulas. Similar reductions have been proven in previous work~\cite{plantedToRandom2, plantedToRandom1,vyas2018super}. %We believe that this lemma may prove useful for proving future results in random $k$-SAT.

%In this paper we use an algorithm that works with very high probability when the input is drawn from the planted $k$-SAT distribution. We then show that any algorithm that returns a SAT assignment with sufficiently high probability in the planted case implies the existence of an algorithm with the same runtime that works in the uniform case with high probability. 

The bulk of the analysis of our algorithm presented in this paper will focus on four quantities. Informally:

\begin{enumerate}
    \item The \emph{true positive rate} $p_{TP}$ describes the fraction of all assignments that are both close to a satisfying assignment in Hamming distance and satisfy a large number of clauses.
    \item The \emph{false negative rate} $p_{FN}$ describes the fraction of all assignments that are close to a satisfying assignment in Hamming distance, but do not satisfy a large number of clauses.
    \item The \emph{false positive rate} $p_{FP}$ describes the fraction of all assignments that are not close to any satisfying assignment in Hamming distance, but satisfy a large number of clauses.
    \item The \emph{true negative rate} $p_{TN}$ describes the fraction of all assignments that are neither close to any assignment in Hamming distance, nor satisfy a large number of clauses.
\end{enumerate}
    
By showing that the true positive rate is large enough relative to the false positive rate, we show that we do not too often perform a ``useless search,'' i.e.~one that will not find a satisfying assignment. And by showing that the true positive rate is large enough relative to the total number of possible assignments, we show that we eventually do find a satisfying assignment without needing to take too many samples. See Fig.~\ref{fig:binom} for an illustration of these concepts.

To show that our algorithm achieves the desired runtime, we must demonstrate two things. First, we must show that false positives are sufficiently rare; in other words, conditioned on an assignment passing our test, it is sufficiently likely to be a small-Hamming-distance assignment. We prove this in Section~\ref{sec:falsePositive}. Second, we must show that true positives are sufficiently common; in other words, conditioned on an assignment being close in Hamming distance to a satisfying assignment, it is sufficiently likely to pass our test. We prove this in Section~\ref{sec:falseNegative}.

% Lion 

We also note that our algorithm can potentially be used as the seed for a worst-case algorithm. Informally, the correctness of the analysis in this paper depends only on the false positive and false negative rates being sufficiently low. As long as the inputs are guaranteed to come from a family of formulas for which this is the case, our algorithm will work even in the worst case. Or, to put it another way, to build a working worst-case algorithm using our algorithm as a template, one may now restrict one's attention to solving input formulas for which assignments in the neighborhood of the solution do \emph{not} have an abnormally-high number of satisfied clauses; our algorithm can solve the others.

%We finally note that our algorithm is non-constructive. Specifically, our algorithm is an improvement over the state of the art for sufficiently large \emph{constant} $k$ (defined in Section \ref{sec:prelims}). However, the exact value of this constant is not known. We note in Section \ref{sec:conclusion} that finding the value of this constant is an open problem which, if solved, would create a constructive improved algorithm for Random $k$-SAT. 

\begin{figure}
    \centering
    \subfloat[The full histogram.]{\includegraphics[width=3in]{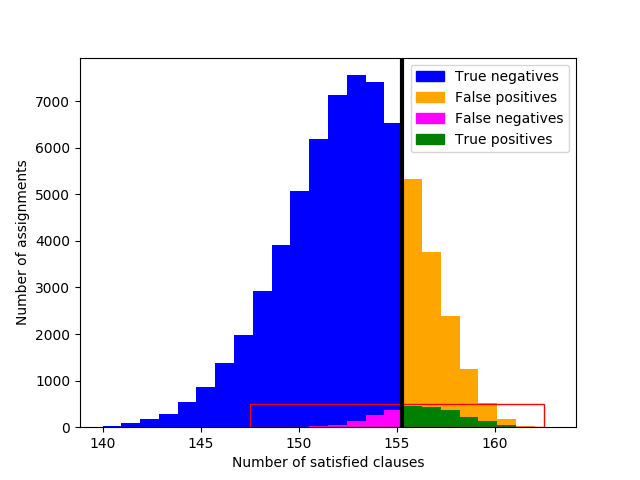}}
    \subfloat[A magnification of the red region shown on the left.]{\includegraphics[width=3in]{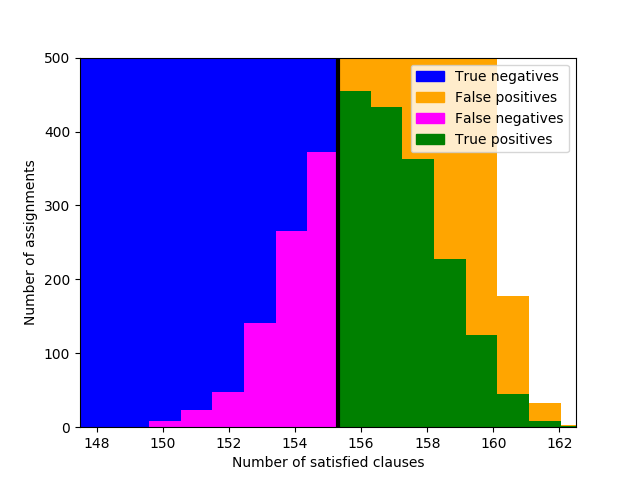}}
    \caption{A histogram of how many clauses are satisfied by every possible assignment. In this example, there are $n=16$ variables, $m = 163$ clauses, and $k = 4$ literals per clause. For the example, we take $T = 155.5$ to be the clause-satisfaction threshold above which we explore further, and $\alpha n = 4$ to be the small-Hamming-distance threshold at which the exhaustive search algorithm finishes. (In actual runs of the algorithm, both of these parameters are selected more conservatively; we chose these parameters for clarity.)}
    \label{fig:binom}
\end{figure}

\subsection{Previous work}
Satisfiability and $k$-SAT have been thoroughly studied. We will cover some of the previous work in the area, focusing on the Random $k$-SAT problem. 

\paragraph{Structural Results About Random $k$-SAT}

To make the study of the Satisfiability of random formulas interesting, it is important to choose the probability distribution over formulas $D_\phi$ judiciously. In particular, $D_\phi$ must contain formulas where the ratio of Boolean variables to disjunctive clauses is such that the resulting formulas are neither overwhelmingly satisfiable, nor overwhelmingly unsatisfiable. Let $n$ be the number of variables, and $m$ be the number of clauses. If $n \gg m$, then nearly all formulas chosen uniformly from $D_\phi$ will be satisfiable; if $m \gg n$, then nearly all formulas will be unsatisfiable. In order for the problem of correctly identifying formulas as satisfiable or unsatisfiable to be nontrivial, we must choose $m$ and $n$ to be at the right ratio. Throughout this paper we will refer to the ratio of $m$ to $n$ as the \emph{density} of a formula. 

In work by Ding, Sly and Sun~\cite{threasholdSAT}, it was shown that a sharp threshold exists between formulas which are satisfied with high probability and those that are unsatisfied with high probability. More precisely, they describe what happens when the number of clauses $m$ is drawn from a Poisson distribution with mean $d_k n$. When the number of clauses drawn is below $(d_k - \epsilon)n$, only an exponentially small fraction of formulas will be unsatisfiable; when the number of clauses drawn is greater than $(d_k + \epsilon)n$, only an exponentially small fraction of formulas will be satisfiable. This holds true for any $\epsilon > 0$ constant in $n$.

% $m$ is drawn from a distribution with mean outside of a particular range $d_k(m)$, it will either be the case that only an exponentially small fraction of formulas are satisfiable
%in particular, only when the number of clauses $m$ is drawn from the Poison centered at $(2^k \ln(2) - \frac{1}{2}(1+\ln(2)))n$ is it the case that there is a substantial fraction of both satisfiable and unsatisfiable formulas when we draw uniformly at random from $D_\phi$. 
%in particular, assuming the number of clauses $m$ is drawn from a Poisson distribution with mean $d_k(m)$, there is a substantial (more than exponentially small) fraction of both satisfiable and unsatisfiable formulas.

%In work by Ding, Sly and Sun~\cite{thresholdSAT}, it was shown that a sharp threshold exists between formulas which are satisfied with high probability and those that are unsatisfied with high probability. This begs the question: where are all the hard instances of worst case sat? They don't seem to be bellow or above the threshold, so perhaps they are at this threshold. 

% I love you Andrea

\paragraph{Previous Average-Case $k$-SAT Algorithms}

Feldman et al.~studied planted random $k$-SAT and found that given $m = \Omega(n \lg n )$ clauses, the planted solution can be determined using statistical queries \cite{plantedSATPoly}. Feldman et al.~also conjecture that planted $k$-SAT is easier than random $k$-SAT more generally. Previous work has shown a connection between algorithms that work in the planted distribution and algorithms that work in the random distribution~\cite{plantedToRandom1,plantedToRandom2,vyas2018super}.
%The existence of efficient algorithms in the RAM model returning a SAT assignment in the planted distribution with very high probability imply efficient random $k$-SAT algorithms in the uniform distribution as shown in this paper, and in previous work \cite{plantedToRandom1,plantedToRandom2,vyas2018super}. 
An algorithm was found by Valiant which runs in time $O(2^{n(1-\Omega(\log(k))/k)})$ at the threshold~\cite{Val}, improving upon PPSZ~\cite{PPSZ05}. Additionally, Vyas~\cite{vyas2018super} obtained the same runtime by re-analyzing the algorithm of Paturi et al.~\cite{PPZ} in the random case. 

\paragraph{Worst-Case $k$-SAT Algorithms}

%\xxx{TODO: there is the PPSZ worst case alg where we beat it both on the random instances and on the hard instances it was shown to run slowly on. So our algorithm is robustly fast.}

The previously best-known worst-case $k$-SAT algorithms for large $k$ are due to Paturi et al.~who get a running time of $O(2^{n(1-\Omega(1)/k)})$ \cite{PPSZ05}.
Previous work by Sch{\"{o}}ning gave an algorithm to solve $k$-SAT in the worst case with an expected running time of $O(2^{n(1-\Omega(1)/k)})$ (but with a worse constant in the $\Omega(1)$) ~\cite{randomLocalSearch}. Our algorithm is a modification of Sch{\"{o}}ning's. Their algorithm runs by choosing an assignment at random, and searching in the immediate neighborhood of that assignment by repeatedly choosing an unsatisfied clause and flipping a variable in that clause to satisfy it. They perform the search near the randomly chosen assignment via an exhaustive search. Their algorithm is an improvement over a naive brute-force algorithm because of the savings that result from only considering variable-flips that could possibly cause the formula to become satisfied (rather than also exploring variable-flips that can't possibly be helpful).

\section{Preliminaries}
\label{sec:prelims}
In this section we will give the definition of random $k$-CNF Satisfiability (random $k$-CNF SAT) at the threshold. We additionally present definitions of several important distributions and functions that are used later in the paper. %We present theorems and lemmas from previous work we use later in the paper. 

\paragraph{Notation for This Paper}

We use $x \sim D$ to indicate that $x$ is a random value drawn from the distribution $D$.

We use $f(x) = O^*\left(g(x)\right)$ to denote that there exists some constant $c$ such that $f(x) = O\left(g(x) x^c\right)$. So, to say it another way, $f(x)$ grows at most as quickly as $g(x)$, ignoring polynomial and constant factors.

We often use ``small-Hamming-distance assignment'' to mean an assignment that is a small Hamming distance from a satisfying assignment. 

\paragraph{Definitions for This Paper}

% TODO worry about what happens if $\alpha n$ is not an integer

\begin{definition}
	Let $D_{replace}(n, k)$ be the uniform distribution over \textit{clauses} on $k$ variables where those $k$ variables are chosen with \textit{replacement} (e.g. $(\bar{x} \vee x \vee y )$ would be a valid clause). Under this definition, there are $n^k2^k$ possible clauses. 
\end{definition}

\begin{definition}
The density of a formula $\phi$ with $m$ clauses and $n$ variables is $m/n$.
\end{definition}

We will now define the \textit{satisfiability threshold}. Informally, this is a density of clauses such that formulas drawn from below this threshold are with high probability (whp) satisfied, and those formulas drawn from above the threshold are whp unsatisfied. 

\begin{definition}
The \textit{satisfiability threshold}, $d_k$, is a ratio of clauses to variables such that for all $\eps>0$:
\begin{itemize}
	\item If $m$ is drawn from $Pois[(d_k-\eps)n]$, the Poisson distribution with mean $(d_k-\eps)n$, and $\phi$ is formed by picking $m$ clauses independently at random from $D_{replace}(n, k)$, then $\phi$ is whp \textbf{satisfied}. 
	\item If $m$ is drawn from $Pois[(d_k+\eps)n]$, the Poisson distribution with mean $(d_k+\eps)n$, and $\phi$ is formed by picking $m$ clauses independently at random from $D_{replace}(n, k)$, then $\phi$ is whp \textbf{unsatisfied}. 
\end{itemize} 
\end{definition}

\begin{definition}
	Let $d_k$ be the density of clauses such that SAT is at its threshold.
\end{definition}

Note that it is not immediate that a satisfiability threshold exists for any given $k$. 
However, Jian Ding, Allan Sly, and Nike Sun showed that this threshold exists for sufficiently large $k$~\cite{threasholdSAT}. It has been proven that 
$$ 2^k \ln(2) - \frac{1}{2}(1+\ln(2)) -\epsilon_k \leq d_k \leq 2^k \ln(2)  - \frac{1}{2}(1+\ln(2)) +\epsilon_k ,$$ 
where $\epsilon_k$ is a term that tends to $0$ as $k$ grows \cite{boundDensitySAT,boundingSATDensity2}. It follows that there exists a large enough $k$ such that $ 2^k \ln(2) - 1 \leq d_k$. Also, there exists a large enough $k$ such that $ 2^k \ln(2) \geq d_k$.

This density determines the distribution over the number of clauses put in the formula. Specifically, $m$ is drawn from $Pois[d_kn]$. However, we can say that with high probability the number of clauses is nearly $d_kn$ (see Lem.~\ref{lem:boundingM}).

For many proofs it is convenient to assume $k$ is large (e.g. when $k$ is large, $2^k>10k$ not just asymptotically but also numerically). 
We will now define $\kstr$. It will be a value such that $k = \kstr$ is large enough that both $d_k$ is known to be close to $2^k \ln(2) - \frac{1}{2}(1+\ln(2))$ and large enough for our proofs that depend on $k$ being large. 

\begin{definition}
	Let $\eps_k = |d_k - 2^k \ln(2) + \frac{1}{2}(1+\ln(2))|$.
\end{definition}

\begin{definition}
Let $k_{\eps}$ be the minimum value such that for all $k\geq k_{\eps}$ we have that $\eps_k < \frac{1+\ln(2)}{2}$.
\end{definition}

\begin{definition}
Let $\kstr = \max \left(\kval , k_{\eps} \right)$.
\label{def:kstar}
\end{definition}

Our choice of $\kval$ in the above is somewhat arbitrary. When $k\geq \kval$ the proofs in section \ref{sec:lotsOfSAT} are simpler, so we analyze our core algorithm in that regime. 

\begin{lemma}
If $\eps_k < \frac{1+\ln(2)}{2}$ then
$ Pr[m>(d_k+1)n ] + Pr[m<  (d_k-1)n ]\leq2 \cdot 2^{-n/(3\ln(2)2^k)} $. 
\label{lem:boundingM}
\end{lemma}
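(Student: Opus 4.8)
The plan is to treat this as a standard Chernoff-type concentration bound for the Poisson variable $m \sim Pois[d_k n]$, bounding the upper deviation $\Pr[m > (d_k+1)n]$ and the lower deviation $\Pr[m < (d_k-1)n]$ separately and then summing. The hypothesis $\eps_k < \frac{1+\ln(2)}{2}$ enters only to control the size of the mean $\lambda := d_k n$: combining it with the definition of $\eps_k$ gives $d_k < 2^k \ln(2)$, and also $d_k > 2^k\ln(2) - (1+\ln(2)) > 0$, so that $\lambda > 0$ and $(d_k-1)n > 0$ for all $k$ in the relevant range. This is exactly what converts the $d_k$-dependent tail bounds into the clean target $2^{-n/(3\ln(2)2^k)} = e^{-n/(3 \cdot 2^k)}$.

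For the upper tail I would write the event as $m \ge \lambda + n$ and apply the standard Poisson upper-tail inequality $\Pr[m \ge \lambda + t] \le \exp(-t^2/(2(\lambda+t)))$ (which follows from $E[e^{sm}] = \exp(\lambda(e^s-1))$ and optimizing over $s$, i.e.\ the usual Chernoff bound $e^{-\lambda}(e\lambda/x)^x$ with $x = \lambda + t$). With $t = n$ this gives $\exp(-n/(2(d_k+1)))$. Using $d_k < 2^k\ln(2)$ one checks $2(d_k+1) < 2^{k+1}\ln(2) + 2 \le 3 \cdot 2^k$, which is equivalent to $2 \le (3 - 2\ln(2))2^k$ and holds comfortably for all sufficiently large $k$ (in particular for $k \ge k_\eps$). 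Hence $\Pr[m > (d_k+1)n] \le e^{-n/(3 \cdot 2^k)} = 2^{-n/(3\ln(2)2^k)}$.

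For the lower tail I would write the event as $m \le \lambda - n$ and apply the Poisson lower-tail bound $\Pr[m \le \lambda - t] \le \exp(-t^2/(2\lambda))$ with $t = n$, giving $\exp(-n/(2d_k))$; since $2 d_k < 2 \cdot 2^k \ln(2) < 3 \cdot 2^k$, this is again at most $2^{-n/(3\ln(2)2^k)}$. Summing the two bounds yields $2 \cdot 2^{-n/(3\ln(2)2^k)}$, as claimed.

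I do not anticipate a genuine obstacle: this is a routine computation. The points requiring care are (i) stating the Poisson tail inequalities self-containedly (they are the classical Chernoff bounds specialized to the Poisson moment generating function), (ii) checking the numeric comparisons of the form $2\ln(2) < 3$ so that the $d_k$-dependent exponents are dominated by $n/(3\ln(2)2^k)$, and (iii) confirming $(d_k - 1)n > 0$ so the lower-tail event really is the left deviation, which is immediate since $d_k$ is exponentially large in $k$ under the hypothesis.
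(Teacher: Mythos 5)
Your proof is correct and takes essentially the same approach as the paper: both bound the upper and lower tails of the Poisson variable $m$ at deviation $n$ via Chernoff-type inequalities and then use $\eps_k < \frac{1+\ln 2}{2}$ to replace $d_k$ by $2^k\ln 2$, yielding the target exponent $n/(3\ln(2)2^k)$. (The paper parametrizes the Chernoff bound multiplicatively with $\delta = 1/d_k$, you use the equivalent additive Poisson tail form; your version is in fact the cleaner write-up, as the paper's intermediate line substituting $d_k \mapsto \ln(2)2^k - \frac{1}{2}(1+\ln 2)$ goes the wrong direction, though its final conclusion is still correct.)
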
 
\begin{proof}
We apply the multiplicative form of the Chernoff bound. We have that $(d_k+1)n/(d_kn) = 1+ 1/d_k$. We also have that $(d_k+1)n/(d_kn) = 1- 1/d_k$. This gives us
$$Pr[m>(d_k+1)n ] + Pr[m<  (d_k-1)n ]\leq 2^{-(d_k)^{-2}d_kn/3}+2^{-(d_k)^{-2}d_kn/2}.$$
Which means 
$$Pr[m>(d_k+1)n ] + Pr[m<  (d_k-1)n ]\leq 2^{-n/(3d_k)}+2^{-n/(2d_k)}.$$
$$Pr[m>(d_k+1)n ] + Pr[m<  (d_k-1)n ]\leq 2 \cdot 2^{-n/\left(3(\ln(2)2^k-\frac{1}{2}(1+\ln(2))\right)}.$$
$$Pr[m>(d_k+1)n ] + Pr[m<  (d_k-1)n ]\leq 2 \cdot 2^{-n/(3\ln(2)2^k)}.$$
\end{proof}

% Thus

It follows that if our algorithm works efficiently for all values of $m \in [(d_k-1)n, (d_k+1)n]$, then it works with high probability at the threshold.

Below are some definitions used in later sections.

%\begin{definition}
%	Let $D_{unique}(k)$ be the uniform distribution over \textit{clauses} on $k$ variables where those $k$ variables are chosen with \textit{without replacement} (e.g. $(\bar{z} \vee x \vee y )$ is a valid clause). There are $n(n-1)(n-2)\ldots(n-k+1)2^k$ such clauses. 
%\end{definition}

\begin{definition}
	Let $D_{\Phi}(n,k)$ be the distribution over formulas $\phi$ where all clauses are chosen independently from $D_{replace}$ and the number of clauses is chosen from a Poisson distribution with mean $d_kn$.
\end{definition}

%\begin{definition}
%	Let $D_{\Phi}'(n,k)$ be the distribution over formulas $\phi$ where all clauses are chosen independently from $D_{unique}$ and the number of clauses is chosen from a Poisson distribution with mean $d_kn$.
%\end{definition}

\begin{definition}
Let $D_R(m,n,k)$ be the distribution over formulas $\phi$ where all $m$ clauses are chosen independently from  $D_{replace}$.
\end{definition}

%\begin{definition}
%	Let $D_R'(m,n,k)$ be the distribution over formulas $\phi$ where all $m$ clauses are chosen independently from  $D_{unique}$.
%\end{definition}

\begin{definition}
Let $D_S(m,n,k)$ be the uniform distribution over \emph{satisfied} formulas $\phi$ where all $m$ clauses are chosen from $D_{replace}$.
\end{definition}

\begin{definition}
Let $\DPc(n, k, \vec{a})$ (which we refer to as ``the planted-clause distribution'') be the uniform distribution over the $(2^k-1)n^k$ clauses $c$ which are satisfied by $\vec{a}$.
\end{definition}

\begin{definition}
Let $\DPa(m, n, k, \vec{a})$ (which we refer to as ``the planted distribution'') be the distribution over formulas $\phi$ where every clause is picked IID from $\DPc(n, k, \vec{a})$. Note that this is equivalent to the uniform distribution over formulas $\phi$ which are satisfied by $\vec{a}$ and where all $m$ clauses are in the support of $D_{replace}$.
\end{definition}

\begin{definition}
Let $U_{\vec{a}}(n)$ be the uniform distribution over assignments of length $n$, $\{0, 1\}^n$.
\end{definition} 

\begin{definition}
Let $\ncs(\phi, \vec{v})$ be the number of clauses in $\phi$ satisfied by the assignment $\vec{v}$.
\end{definition}

\begin{definition}
Let $\ncus(\phi, \vec{v})$ be the number of clauses in $\phi$ left unsatisfied by the assignment $\vec{v}$.
\end{definition}

\section{Algorithm}
\label{sec:algorithm}
We will describe our algorithm for random $k$-SAT in this section.

Informally, our algorithm works as follows. Given an input formula, we will sample many randomly-chosen assignments. On those that have a high number of satisfied clauses, we will run the deterministic algorithm for finding a satisfying assignment given an assignment that is within a Hamming distance of at most $\alpha n$ of that satisfying assignment (i.e. a small-Hamming-distance assignment). %We describe our algorithm in a testing framework. 

Unsurprisingly, in the average case, small-Hamming-distance assignments satisfy more clauses than random assignments\footnote{Consider changing one variable's assignment at random; in this case, almost all clauses will remain satisfied. This phenomenon persists even when we flip several variables at once.}. In fact, for many choices of criterion there will be a discrepancy between the values achieved by small-Hamming-distance assignments and random assignments. Lemma \ref{lem:generalSmallHDf}, which characterizes this discrepancy, is general enough to be applied immediately to analyzing algorithms that make use of any clause-specific criterion. 

We note the following from previous work:

\begin{lemma}[Small Hamming Distance Search \cite{localSearchAlg}]
	There is a deterministic algorithm 
	\smlHDAlg($\phi,\vec{v}$) which given 
	\begin{itemize}
		\item a k-CNF formula $\phi$ on $m$ clauses and $n$ variables, and
		\item an assignment $\vec{v}$ which has Hamming distance $\alpha n$ from a true satisfying assignment $\vec{a}^*$,
	\end{itemize}
	  will return a satisfying assignment within Hamming distance $\alpha n$ of $\vec{v}$ if one exists in $k^{\alpha n}$ time.
\end{lemma}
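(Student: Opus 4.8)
The plan is to establish this via the standard bounded-depth search-tree (``multistep local search'') argument, which is deterministic and whose recursion tree has branching factor at most $k$ and depth at most $\alpha n$, giving exactly the claimed bound (up to the polynomial factors hidden by the paper's $O^*$ convention).

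First I would describe $\smlHDAlg(\phi,\vec{v})$ recursively, carrying a ``flip budget'' $d$ initialized to $\alpha n$. On a call $\mathrm{Search}(\vec{u},d)$: if $\vec{u}$ satisfies $\phi$, return $\vec{u}$; else if $d=0$, report failure on this branch; else pick any clause $C$ of $\phi$ left unsatisfied by $\vec{u}$ (here $\ncus(\phi,\vec{u})\ge 1$, and $C$ has at most $k$ literals), and recursively call $\mathrm{Search}(\vec{u}^{(i)},d-1)$ for each variable $x_i$ occurring in $C$, where $\vec{u}^{(i)}$ is $\vec{u}$ with the value of $x_i$ flipped; return the first satisfying assignment found, or failure if all branches fail. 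Deciding satisfaction and selecting an unsatisfied clause take $O^*(1)$ time, the procedure is manifestly deterministic, and the recursion tree has at most $k$ children per internal node and depth at most $\alpha n$, hence at most $k^{\alpha n}$ leaves and $O^*(k^{\alpha n})$ total work. For correctness I would prove by induction on $d'$ the statement: if there is a satisfying assignment $\vec{a}$ with $d' = \mathrm{Hamm}(\vec{u},\vec{a}) \le d$, then $\mathrm{Search}(\vec{u},d)$ returns some satisfying assignment, and every assignment it can return lies within Hamming distance $d$ of $\vec{u}$. The base case $d'=0$ gives $\vec{u}=\vec{a}$, returned immediately. For the step, if $\vec{u}$ itself satisfies $\phi$ we are done; otherwise the chosen violated clause $C$ is satisfied by $\vec{a}$, so $\vec{a}$ and $\vec{u}$ disagree on at least one variable $x_i$ of $C$, whence $\mathrm{Hamm}(\vec{u}^{(i)},\vec{a}) = d'-1 \le d-1$; by the inductive hypothesis $\mathrm{Search}(\vec{u}^{(i)},d-1)$ returns a satisfying assignment within distance $d-1$ of $\vec{u}^{(i)}$, hence within distance $d$ of $\vec{u}$ since $\mathrm{Hamm}(\vec{u},\vec{u}^{(i)})=1$. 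Applying this with $\vec{u}=\vec{v}$ and $d=\alpha n$ yields the lemma.

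The argument is essentially routine; the only points needing care are the bookkeeping that keeps the returned assignment inside the radius-$\alpha n$ Hamming ball around the \emph{original} $\vec{v}$ (so the ``within Hamming distance $\alpha n$ of $\vec{v}$'' guarantee holds, not merely ``within $\alpha n$ of some visited node''), and the structural observation that any assignment satisfying a clause $C$ currently violated by $\vec{u}$ must differ from $\vec{u}$ on one of $C$'s at most $k$ variables --- this is what caps the branching at $k$ and gives the $k^{\alpha n}$ running time rather than $\binom{n}{\alpha n}2^{\alpha n}$. Since the lemma is quoted from prior work, in the paper it would suffice to cite the deterministic local-search procedure of \cite{localSearchAlg}; the sketch above is how I would reconstruct it if needed.
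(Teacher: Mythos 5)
Your proposal is correct and follows essentially the same approach as the paper's: branch on the variables of a currently unsatisfied clause (giving branching factor at most $k$), recurse with the flip budget decremented (giving depth at most $\alpha n$), and note that any satisfying assignment within the budget must disagree with the current assignment on some variable of the violated clause. The paper gives this as a one-paragraph sketch; your version merely makes the inductive correctness argument and the Hamming-ball bookkeeping explicit.
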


This algorithm simply takes the assignment $\vec{a}$ and branches on the first unsatisfied clause, trying all possible variable flips. For each assignment resulting from these possible variable flips, the algorithm repeats the process in what is now the first unsatisfied clause, until it either finds a satisfying assignment or has searched $\alpha n$ flips from the original assignment. This will deterministically yield a satisfying assignment, should one exist, within a Hamming distance of $\alpha n$ of the original assignment. 

So, if we find a small-Hamming-distance assignment and run \textsc{SAT-from-$\alpha$-Small-HD}($\phi,\vec{a}$) on this assignment, we are guaranteed to find the satisfying assignment. Therefore, we could randomly sample points until we expect to find an assignment at Hamming distance $\alpha n$ from the satisfying assignment (call this an $\alpha$-small-Hamming-distance assignment). This is indeed what Sch{\"{o}}ning's algorithm does for $\alpha = \Theta(1/k)$ \cite{randomLocalSearch}. 

A general class of improvements to this algorithm work by running \smlHDAlg($\phi,\vec{a}$) on only a cleverly-chosen subset of these sampled assignments. In our case, we choose this set to be assignments that satisfy an unusually large number of clauses, but in principle one could use any membership criterion for this set. 

Let $M$ be the runtime of the membership test for the set of assignments, and let $p_{TP}$, $p_{FP}$, $p_{FN}$, and $p_{TN}$ represent the fraction of assignments that are true positives, false positives, false negatives, and true negatives respectively. Here, just as in Section~\ref{sec:newalg}, we use ``positive'' or ``negative'' to mean an assignment that passes or doesn't pass the test for membership, respectively. The truth or falsehood of that positive or negative represents whether or not that assignment actually has a satisfying assignment within small Hamming distance.

We will have to draw samples until we would have found a satisfying assignment with high probability were one to exist. Next, we will have to run \smlHDAlg($\phi,\vec{a}$) at least once to find the satisfying assignment itself. Finally, we will have to run it once more for every false positive we find. Hence, the the generalized running time of this class of algorithms is 

\begin{equation}
O^*\left(\frac{M}{p_{TP}} + k^{\alpha n} + \left(\frac{p_{FP}}{p_{TP}}\right) k^{\alpha n}\right).
\label{eq:general_rt}
\end{equation}

This general formula is a powerful tool for analyzing the runtimes of algorithms from this class. For example, if we apply it to analyzing the algorithm of~\cite{localSearchAlg}, i.e. the special case where the test we use always returns a positive, we see that the third term in Equation~(\ref{eq:general_rt}) dominates, and that $p_{FP} \approx 1$ and $p_{TP} \approx \frac{\binom{n}{\alpha n}}{2^n}$, giving an overall expected runtime of $O^*(2^{n(1-\Theta(\frac{1}{k}))})$ when we choose $\alpha = \Theta(\frac{1}{k+1})$. In Appendix \ref{subsec:kalphaVSexaustive} we discuss a different deterministic search algorithm with a slightly improved runtime (yielding no relevant improvement on the runtime of the overall algorithm for our analysis).

\newcommand{\LocalSearch}{\textsc{DantsinLS}}

Our algorithm presents improvements for large $k$, but for small $k$ we will simply use the previous algorithm of Dantsin et al \cite{localSearchAlg}. 
\begin{lemma}[Algorithm for Small $k$ \cite{localSearchAlg}]
For $k \leq \kstr$ there exists a deterministic algorithm, \LocalSearch, that solves $k$-SAT in the worst case in time $2^{n(1-\gamma)}$ for some constant $\gamma>0$ \cite{localSearchAlg}.
\end{lemma}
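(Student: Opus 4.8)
The plan is to simply invoke the deterministic local-search algorithm of Dantsin et al., and then observe that $\kstr$ is a fixed constant, so the claimed uniform $\gamma$ follows by taking a minimum over finitely many values of $k$.

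First I would recall the structure of \LocalSearch. The algorithm covers the Boolean cube $\{0,1\}^n$ by Hamming balls of radius $\rho n$ for a suitable $\rho = \rho(k) \in (0, 1/2)$; a covering code of this radius has size $O^*(2^n / \binom{n}{\rho n})$, and Dantsin et al. show such a code can be \emph{constructed deterministically} within comparable time (splitting the coordinates into blocks and recursively combining good covering codes for the blocks). For each center $\vec{v}$ of the cover, the algorithm runs the deterministic ball-search procedure \smlHDAlg$(\phi, \vec{v})$ with parameter $\alpha = \rho$; by the Small Hamming Distance Search lemma this runs in $k^{\rho n}$ time and returns a satisfying assignment within Hamming distance $\rho n$ of $\vec{v}$ if one exists. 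Since the balls cover the cube, any satisfying assignment of $\phi$ lies in some ball, so the overall procedure is correct and fully deterministic. Its total runtime is
\[
O^*\!\left(\frac{2^n}{\binom{n}{\rho n}} \cdot k^{\rho n}\right) = 2^{n\left(1 - H(\rho) + \rho \log_2 k + o(1)\right)},
\]
using $\binom{n}{\rho n} = 2^{n(H(\rho) - o(1))}$ with $H$ the binary entropy. Choosing $\rho = 1/(k+1)$ as in \cite{localSearchAlg} makes the exponent $1 - \gamma_k$, where $\gamma_k := H\!\left(\tfrac{1}{k+1}\right) - \tfrac{\log_2 k}{k+1} > 0$ for every integer $k \ge 2$ (the entropy term dominates, and the $o(1)$ loss is absorbed by shrinking $\gamma_k$ slightly for large $n$).

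The only step genuinely specific to this paper is to note that $\kstr = \max(\kval, k_{\eps})$ is a constant independent of $n$: $\kval$ is an absolute constant, and $k_{\eps}$ is finite because $\eps_k \to 0$ as $k$ grows. Hence $\{k : 2 \le k \le \kstr\}$ is a finite set and one may set $\gamma := \min_{2 \le k \le \kstr} \gamma_k > 0$; for $k = 1$, $k$-SAT is linear-time solvable so any $\gamma < 1$ suffices. This yields the stated worst-case bound $2^{n(1-\gamma)}$ uniformly over $k \le \kstr$.

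The main obstacle lies entirely inside the cited work — making the covering-code construction deterministic without inflating the runtime, since a naive greedy cover would be too slow to build — and is not reproved here. On this paper's side the argument is routine: the essential observation is just that $\kstr$ is a constant, so ``for some constant $\gamma > 0$'' is immediate once one has strictly positive savings for each individual $k$.
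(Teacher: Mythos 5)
The paper offers no proof of this lemma; it is stated purely as a citation of Dantsin et al.\ and used as a black box, so there is no in-paper argument to compare against. Your sketch is a correct account of the cited result: the deterministic covering-code-plus-ball-search structure, the runtime bookkeeping, and the choice $\rho = 1/(k+1)$ are exactly as in \cite{localSearchAlg}. In fact your expression simplifies cleanly --- $\gamma_k = H\!\left(\tfrac{1}{k+1}\right) - \tfrac{\log_2 k}{k+1} = \log_2\!\frac{k+1}{k}$ --- recovering Dantsin et al.'s $\bigl(2 - \tfrac{2}{k+1}\bigr)^n$ bound directly, which you could have quoted instead of leaving the positivity to an entropy-domination remark. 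Your closing observation that $\kstr$ is a fixed constant (so one may take $\gamma = \min_{2 \le k \le \kstr} \gamma_k > 0$, with the trivial $k=1$ case handled separately) is the only step genuinely local to this paper, and it is correct; note though that the lemma is only ever invoked for a single fixed $k$, so the per-$k$ constant $\gamma_k > 0$ already suffices for the paper's purposes without the uniformization.
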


We will now give pseudocode for the \HammDist~ algorithm in Algorithm \ref{alg:randomSatAlgo}. 
Let \textsc{NumClausesSAT}($\phi$,$\vec{a}$) return the number of clauses in $\phi$ satisfied by the assignment $\vec{a}$. In Appendix \ref{subsec:regularizeVStest}, we describe a different set of concepts with which the algorithm can be understood.

\begin{algorithm*}[h]
    \SetAlgoLined
	\HammDist($\phi$):\\
	\nl \If{$k<\kstr$}
	{\nl \textbf{return \LocalSearch($\phi$)}\\}

	\nl Initialize $S$ to the empty set. \\
	\Comment{\emph{For $k\geq \kstr$ we run our variant of local search:}}\\
	\nl \For{$i \in [0,n^2 \cdot 2^n/\binom{n}{\alpha n}]$ }{
	\nl	Sample an assignment $\vec{a}$ uniformly at random from $\{0,1\}^n$.\\
		\Comment{\emph{Only keep assignments which satisfy abnormally many clauses.}}\\
	\nl	\If{\ncs($\phi$,$\vec{a}$) $\geq$ $(1-\frac{1-(1-\alpha)^{2k}}{2^k-1})m$}{
	\nl		Add $\vec{a}$ to $S$. \\
	\nl		\If{$|S| > 4 n^3 2^n/\left(\binom{n}{\alpha n} k^{\alpha n}\right) +1$}{\nl \textbf{return} False}
	\nl		Run \smlHDAlg($\phi,\vec{a}$).\\
	\nl		If an assignment was found, return it.}}
	\nl \textbf{return} False\\
	\caption{\HammDist($\phi$)}
	\label{alg:randomSatAlgo}
\end{algorithm*}

Note that our algorithm as stated is non-constructive due to our use of the constant $\kstr$. Other than this constant, our algorithm is explicit. While $\kstr$ is known to be constant~\cite{boundingSATDensity2}, its exact value is currently unknown. We note in Section~\ref{sec:conclusion} that finding the value of $\kstr$ is an open problem which, if solved, would make our algorithm constructive.

%Specifically, our algorithm is an improvement over the state of the art for sufficiently large \emph{constant} $k$ (defined in Section \ref{sec:prelims}). However, the exact value of this constant is not known. We note in Section \ref{sec:conclusion} that finding the value of this constant is an open problem which, if solved, would create a constructive improved algorithm for Random $k$-SAT.

\subsection{Correctness and Running Time}

%We will include the theorem statement of correctness and running time here. Its proof depends on bounds on the false positive rate and the true positive rate, which we prove later. In particular, the proof that there are few false positives is in Section \ref{sec:falsePositive}. The proof that there are enough true positives, which we achieve by upper-bounding the false negative rate, is presented in Section \ref{sec:falseNegative}.

We will include the theorem statement of correctness and running time here. Its proof depends on bounds on the false positive rate and the true positive rate, which we prove in later sections. In particular, we show in Section~\ref{sec:falsePositive}  that conditioned on an assignment passing the test, it is sufficiently likely to be an $\alpha$-small-Hamming-distance assignment. We additionally show  in Section~\ref{sec:falseNegative} that conditioned on an assignment being an $\alpha$-small-Hamming-distance assignment, it is sufficiently likely to pass the test.

% First, we must show that false positives are sufficiently rare; in other words, conditioned on an assignment passing our test, it is sufficiently likely to be a small-Hamming-distance assignment. We prove this in Section~\ref{sec:falsePositive}. Second, we must show that true positives are sufficiently common; in other words, conditioned on an assignment being close in Hamming distance to a satisfying assignment, it is sufficiently likely to pass our test. We prove this in Section~\ref{sec:falseNegative}.

Note that much of our probability of returning the wrong value comes from our bounds on the probability that we are drawing a formula with length $m < (d_k-1)n$. If we knew $m$ to be fixed and greater than $(d_k-1)n$, we would have a lower error probability.

We will show that \HammDist($\phi$) has one-sided error and returns the correct answer with high probability. Note that it returns the correct answer with high probability even conditioned on the input being unsatisfied or satisfied. We use Theorem \ref{thm:runningTimeIsGood} to bound the false positive rate and use Lemma \ref{lem:algIsCorrect} to bound the true positive rate, which gives us the desired result.

In the theorem that follows, we choose $\alpha$ such that $\alpha n$ is an integer. Specifically, we choose:

$$\alpha = \frac{\lfloor \frac{\lg(k)}{16k} n \rfloor}{n}.$$

Note that when we choose $\alpha$ to take on this value, it will always lie in the range $\frac{\lg(k)}{20k} \le \alpha \le \frac{\lg(k)}{16k}$ for large $n$.

\begin{theorem}
Assume $\phi$ is drawn from $D_{\Phi}(n,k)$. Let $\alpha = \frac{\lfloor n\lg(k)/(16k) \rfloor}{n}$.

Conditioned on there being at least one satisfying assignment to $\phi$, \HammDist($\phi$) will return some satisfying assignment with probability at least $ 1-3\cdot 2^{-n/(3\ln(2)2^k)}$.

Conditioned on there being no satisfying assignment to $\phi$, \HammDist($\phi$) will return False with probability $1$. 

\HammDist($\phi$) will run 
in time 
%     $$O^*\left(\frac{2^n}{\binom{n}{\alpha n}} + |S|k^{\alpha n}\right),$$
%     which simplifies to
$$O\left(2^{n (1- \Omega(\lg^2(k)/k)}\right).$$

	\label{thm:HammDistCorrectFast}
\end{theorem}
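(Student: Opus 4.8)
The plan is to break Theorem~\ref{thm:HammDistCorrectFast} into its three assertions --- one-sided error, high-probability correctness, and running time --- and handle them by conditioning on the event $E_m$ that the number of clauses satisfies $m \in [(d_k-1)n, (d_k+1)n]$. By Lemma~\ref{lem:boundingM}, $\Pr[\neg E_m] \le 2\cdot 2^{-n/(3\ln(2)2^k)}$, so it suffices to prove everything conditioned on $E_m$ and then absorb this term into the final $3\cdot 2^{-n/(3\ln(2)2^k)}$ bound. The one-sided error claim is immediate and should be dispatched first: \smlHDAlg never returns a non-satisfying assignment (by the Small Hamming Distance Search lemma it only returns assignments it has verified satisfy $\phi$), and the only other return value is \textbf{False}; hence if $\phi$ is unsatisfiable the algorithm returns \textbf{False} with probability $1$, with no conditioning needed.

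For correctness in the satisfiable case, I would argue as follows. Fix a satisfying assignment $\vec{a}^*$ (its existence is the conditioning hypothesis). The key structural facts, which I am allowed to cite from later sections, are the false-positive bound from Section~\ref{sec:falsePositive} (Theorem~\ref{thm:runningTimeIsGood}) and the true-positive bound from Section~\ref{sec:falseNegative} (Lemma~\ref{lem:algIsCorrect}). From the true-positive bound, an assignment drawn uniformly at random is, with probability $\ge p_{TP} = \Omega^*\!\big(\binom{n}{\alpha n}/2^n\big)$, both within Hamming distance $\alpha n$ of $\vec{a}^*$ and passes the \ncs test; conditioned on hitting such an assignment, \smlHDAlg$(\phi,\vec{a})$ returns a satisfying assignment deterministically. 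The loop runs $n^2 \cdot 2^n/\binom{n}{\alpha n}$ times, so the probability that no true positive is ever sampled is at most $(1-p_{TP})^{n^2 \cdot 2^n/\binom{n}{\alpha n}} \le e^{-\Omega(n^2)}$, which is far smaller than the $2^{-n/(3\ln(2)2^k)}$ budget. The remaining failure mode is an early \textbf{return False} triggered by $|S|$ exceeding $4n^3 2^n/(\binom{n}{\alpha n}k^{\alpha n}) + 1$ before a true positive is found; here I would use the false-positive bound to show the expected number of positives sampled in $n^2\cdot2^n/\binom{n}{\alpha n}$ iterations is $O(n^2 \cdot 2^n p_{FP}/\binom{n}{\alpha n})$ with $p_{FP}/p_{TP}$ small enough that, by Markov, $|S|$ stays below the threshold except with probability $\le 2^{-n/(3\ln(2)2^k)}$ (or even smaller). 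Union-bounding these three bad events with $\neg E_m$ gives the claimed $1 - 3\cdot 2^{-n/(3\ln(2)2^k)}$.

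For the running time, I would plug into the generalized formula~(\ref{eq:general_rt}): the membership test \ncs runs in $O^*(1)$ time (it is $M = O(mk) = \poly(n)$), so the first term is $O^*(1/p_{TP}) = O^*(2^n/\binom{n}{\alpha n})$; the second term is $k^{\alpha n}$; the third is $(p_{FP}/p_{TP})k^{\alpha n}$, which by the false-positive bound of Section~\ref{sec:falsePositive} is again $O^*(2^n/\binom{n}{\alpha n})$ up to the capping of $|S|$. Using $\binom{n}{\alpha n} = 2^{n H(\alpha)(1+o(1))}$ with $H(\alpha) = \Theta(\alpha \lg(1/\alpha))$ and our choice $\alpha = \Theta(\lg(k)/k)$, one computes $2^n/\binom{n}{\alpha n} = 2^{n(1 - \Theta(\alpha\lg(1/\alpha)))} = 2^{n(1 - \Omega(\lg^2(k)/k))}$, and $k^{\alpha n} = 2^{\alpha n \lg k} = 2^{\Theta(n\lg^2(k)/k)}$, which is smaller. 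The hard-coded loop bounds ($n^2 \cdot 2^n/\binom{n}{\alpha n}$ iterations, $|S|$ capped at $O(n^3 2^n/(\binom{n}{\alpha n}k^{\alpha n}))$) make this a worst-case (not just expected) runtime, so the bound holds unconditionally. I expect the main obstacle to be the bookkeeping around the $|S|$-cap: showing simultaneously that (a) capping does not prematurely kill a satisfiable instance with more than negligible probability and (b) the cap is nevertheless low enough to keep the third term in~(\ref{eq:general_rt}) under control; this is precisely where the quantitative gap between $p_{TP}$ and $p_{FP}$ established in Sections~\ref{sec:falsePositive} and~\ref{sec:falseNegative} must be invoked carefully, and where the choice $\alpha \in [\lg(k)/(20k), \lg(k)/(16k)]$ is calibrated.
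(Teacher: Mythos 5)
Your proposal follows essentially the same route as the paper's actual proof: invoke the true-positive bound (Lemma~\ref{lem:algIsCorrect}) for correctness, the false-positive bound (Theorem~\ref{thm:runningTimeIsGood}) to control the early \textbf{return False}, union with the $m$-concentration event from Lemma~\ref{lem:boundingM}, and read off the worst-case running time from the hard loop and $|S|$ caps together with the entropy estimate $2^n/\binom{n}{\alpha n} = 2^{n(1-\Theta(\alpha\lg(1/\alpha)))}$. Two small points of divergence from the paper are worth flagging. First, you never address the $k < \kstr$ branch of the algorithm: for small constant $k$ the code dispatches to \LocalSearch, which is deterministic and correct with probability $1$ and already runs in $2^{n(1-\gamma)}$ time for a constant $\gamma > 0$; since any constant exceeds $\Theta(\lg^2(k)/k)$, this case is trivially absorbed, but a complete proof must say so. Second, your Markov sketch for bounding $\Pr[|S|$ exceeds cap$]$ would need care: a raw Markov bound does give exponential smallness here (the cap is exponentially larger in $n$ than $\mathbb{E}[|S|]$), but by itself it does not yield the \emph{conditional-on-satisfiable} probability that the theorem requires, which is where the $1/p_{\mathrm{sat}}$ blow-up controlled in Theorem~\ref{thm:runningTimeIsGood} and Corollary~\ref{cor:numSatIsHigh} enters; since you say you would ultimately cite Theorem~\ref{thm:runningTimeIsGood}, this is a presentation wrinkle rather than a gap, but worth tightening. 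Otherwise your decomposition into the three assertions and the calibration $\alpha \in [\lg(k)/(20k),\lg(k)/(16k)]$ match the paper's treatment in Section~\ref{sec:putItTogether}.
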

\begin{proof}
Proof given in Section \ref{sec:putItTogether}. 
\end{proof}

%

%\input{falsePositive_alt}

%Show not too many garbage entires
\section{Bounding the False Positive Rate}
\label{sec:falsePositive}

To show that our algorithm runs efficiently, we must show that we do not too often find assignments that have an abnormally large number of satisfied clauses but are not close to a satisfying assignment in Hamming Distance, leading to wasted effort. To do this, we simply argue that few enough assignments pass our test that this is not an issue. This bounds the sum $p_{FP} + p_{TP}$ as defined in Sections~\ref{sec:intro}  and~\ref{sec:algorithm}, which of course is itself an upper bound on $p_{FP}$.

%\xxx{TODO; define random sat at threshold}

\begin{lemma}
	Let $\epsilon > 0$ and $0 < \alpha < 1$ be constants such that $2^k(1-\alpha)^{2k} > \frac{1+\epsilon}{\epsilon}$, and let $\tilde{T}=\frac{1-(1-\alpha)^{2k}}{2^k-1}m$.
    
    %$\alpha<1/(8\sqrt{k})$ and $\tilde{T}=\frac{1-(1-\alpha)^{2k}}{2^k-1}m$.

    Given a  $\phi$ drawn from $D_R(m,n,k)$, then with probability $$\geq 1-\exp \left(-\frac{1}{4(1+\epsilon)^2\cdot 2^k}(1-\alpha)^{4k}m \right) 2^n$$ 
    
% TODO CHeck with Andrea: should this 2^n be there?
    
%	Given a  $\phi$ drawn from $D_R(m,n,k)$, then with probability $$\geq 1-\exp \left(-\frac{1}{16\cdot 2^k}(1-\alpha)^{4k}m \right)$$ 
	we have that the number of assignments $\vec{v}$ such that
    $$\ncus(\phi, \vec{v}) \le \tilde{T}$$
    is less than $\exp \left(-\frac{1}{4(1+\epsilon)^2\cdot 2^k}(1-\alpha)^{4k}m \right) 2^n$.
%	 is less than $\exp \left(-\frac{1}{16\cdot 2^k}(1-\alpha)^{4k}m \right) 2^n$.
	 \label{lem:tailBoundGeneralM}
\end{lemma}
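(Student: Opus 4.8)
The plan is a first‑moment argument over all $2^n$ assignments, where the per‑assignment estimate comes from a multiplicative Chernoff bound on the number of clauses an assignment leaves unsatisfied, and the passage to a high‑probability statement about the count is done by Markov's inequality after taking a ``square root'' of the per‑assignment probability.

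First I would fix a single assignment $\vec{v}$ and analyze $\ncus(\phi,\vec{v})$ for $\phi\sim D_R(m,n,k)$. The $m$ clauses are drawn i.i.d.\ from $D_{replace}$, and for a single such clause each of its $k$ literal slots is false under $\vec{v}$ with probability exactly $1/2$ (the variable chosen is irrelevant; exactly one of the two signs makes that literal false), independently across slots, so the clause is left unsatisfied by $\vec{v}$ with probability $2^{-k}$. Hence $\ncus(\phi,\vec{v})$ is a sum of $m$ i.i.d.\ Bernoulli$(2^{-k})$ variables with mean $\mu:=m2^{-k}$, and the algorithm's threshold can be written as $\tilde{T}=(1-\delta)\mu$ with
\[
\delta \;=\; 1-\frac{\tilde{T}}{\mu}\;=\;\frac{(1-\alpha)^{2k}-2^{-k}}{1-2^{-k}}.
\]
Here the hypothesis $2^k(1-\alpha)^{2k}>\frac{1+\epsilon}{\epsilon}$ does exactly the needed work: it gives $2^{-k}<\frac{\epsilon}{1+\epsilon}(1-\alpha)^{2k}$, hence the numerator of $\delta$ is at least $\frac{1}{1+\epsilon}(1-\alpha)^{2k}$ and, since $1-2^{-k}<1$, we get $\delta>\frac{(1-\alpha)^{2k}}{1+\epsilon}>0$; the bound $\delta<1$ is immediate from $(1-\alpha)^{2k}<1$. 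So the event $\{\ncus(\phi,\vec v)\le\tilde T\}$ is a genuine lower‑tail event in the valid range for the multiplicative bound, and $\Pr[\ncus(\phi,\vec{v})\le\tilde T]\le\exp(-\delta^2\mu/2)\le q^2$, where
\[
q\;:=\;\exp\!\left(-\frac{(1-\alpha)^{4k}\,m}{4(1+\epsilon)^2\,2^k}\right),
\]
the extra factor of $2$ in the denominator coming from writing the Chernoff bound as the square of $q$.

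Next I would take the first moment. Let $N$ be the number of assignments $\vec v$ with $\ncus(\phi,\vec v)\le\tilde T$; by linearity of expectation, $\mathbb{E}[N]=\sum_{\vec v}\Pr[\ncus(\phi,\vec v)\le\tilde T]\le 2^n q^2$. Markov's inequality then gives $\Pr[N\ge 2^n q]\le \mathbb{E}[N]/(2^nq)\le q\le 2^n q$, i.e.\ with probability at least $1-q\ge 1-2^nq=1-\exp\!\big(-\tfrac{(1-\alpha)^{4k}m}{4(1+\epsilon)^2\,2^k}\big)2^n$ we have $N<2^n q$, which is the claim (and in fact a slightly stronger probability bound).

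I do not expect any real obstacle: the one point that needs care is the bookkeeping around $\epsilon$, namely verifying that the single inequality $2^k(1-\alpha)^{2k}>\frac{1+\epsilon}{\epsilon}$ simultaneously forces $\delta>0$, $\delta<1$, and $\delta\ge(1-\alpha)^{2k}/(1+\epsilon)$; all three follow from elementary manipulation together with $0<\alpha<1$. Everything else is the standard lower‑tail Chernoff bound, linearity of expectation, and Markov.
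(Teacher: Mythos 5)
Your proposal is correct and follows essentially the same route as the paper: both argue per-assignment that $\ncus(\phi,\vec v)\sim\mathrm{Bin}(m,2^{-k})$, apply a lower-tail multiplicative Chernoff bound with the same $\delta=\frac{2^k(1-\alpha)^{2k}-1}{2^k-1}$ and the same use of the hypothesis $2^k(1-\alpha)^{2k}>\frac{1+\epsilon}{\epsilon}$, and then split the resulting per-assignment probability as a product $q\cdot q$ to pass from expected count to a high-probability bound. Your final step is phrased as linearity of expectation plus Markov, while the paper's uses a $(q_1,q_2)$ bookkeeping that is a restatement of exactly this Markov argument (your phrasing is if anything cleaner, and it makes explicit that the failure probability is actually $q$, not $2^n q$, the latter being how the lemma statement happens to be written).
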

\begin{proof}	
%	For notational convenience let $\phi(\vec{v})$ be the number of satisfied clauses when we set the variables in $\phi$ using the assignment $\vec{v}$.
% TODO put this sentence earlier but where it's \ncs
	
%	We also have that given a uniformly random assignment $\vec{v}\in \{0,1\}^n$ and a random $\phi$ drawn from $D_R(m,n,k)$ that the distribution over number of clauses satisfied is a binomial with probability $p=1/2^k$ and number of samples $m$. 

    Given a uniformly random assignment $\vec{u}$, a $\phi$ chosen from $D_R(m,n,k)$ will have $m$ clauses, each of which will be chosen uniformly at random from the $(2n)^k$ possible clauses. Each literal of each clause will be unsatisfied with independent probability $\frac{1}{2}$; thus each of the $m$ clauses will be unsatisfied with independent probability $\frac{1}{2^k}$. Thus, the value of $\ncus(\phi, \vec{u})$ will be drawn from a binomial distribution with probability $p=\frac{1}{2^k}$ and number of samples $m$. We want to bound from above the probability of a draw with abnormally few unsatisfied clauses, 
    
    $$\pprom = Pr_{\phi \sim D_R(m, n, k), \vec{u} \sim U_{\vec{a}}(n)}[\ncus(\phi, \vec{u}) \le \tilde{T}].$$
    
    The mean $\mu$ of this binomial distribution is given by $\mu = \frac{m}{2^k}$. We can apply known bounds on the probability of extremal draws from binomial distributions. Note that $\tilde{T} = \mu(1-\delta)$, where:
    
    $$\delta = \frac{2^k(1-\alpha)^{2k} - 1}{2^k - 1}.$$
    
    Using the multiplicative Chernoff bound, we have that

    $$\pprom \le e^{\frac{-\delta^2 \mu}{2}}.$$
    
    which implies
    
    $$\pprom \le e^{-m\frac{(1-\alpha)^{4k}}{2(1+\epsilon)^2\cdot2^k}}$$
    
    for constant $\epsilon > 0$. 
    
    %This yields the lemma's statement if $\epsilon$ is chosen to be equal to 2. Chuck'd!
%    We have that $T/\mu$ 
    
%	Note that 
%	$$T/(m/2^k)= (1-(1-\alpha)^{2k})2^k/(2^k-1) = 1-(1-\alpha)^{2k}-(1-\alpha)^{2k}/(2^k-1)+1/(2^k-1).$$
%	We have that $T/(m/2^k) \leq 1- (1-\alpha)^{2k}/2$ if $\alpha > 4/\sqrt{k}$.
	
%	Using the multiplicative Chernoff bound we have have that 
%	$$	Pr[\phi(\vec{v})>T] \leq e^{-\frac{1}{2}((1-\alpha)^{2k}/2)^2m/2^k}.$$
	
%	Now note that if the probability of pairs of $\phi$ and $\vec{v}$ are above the threshold is $p$ then there is a probability of $\leq p^{1/2}$ that a formula has a $\geq p^{1/2}2^n$ number of assignments above the threshold. 

    The statement just proved is related to, but not precisely the same as, the lemma's statement. The former counts extremal \emph{pairs} of formulas and assignments, whereas the latter counts formulas with an extremal number of extremal assignments. In what follows, we use the statement above to prove the lemma's statement.

    %For a given formula $\phi' \sim D_R(m, n, k)$, let $(q_1, q_2)$ be a tuple with the following property: the probability that there exist at least $q_1 2^n$ assignments $\vec{w}$ such that $\ncus(\phi', \vec{w}) < \tilde{T}$ is at most $q_2$. 
    Let $(q_1, q_2)$ be a tuple with the following property: $q_2$ is the probability that there exist at least $q_1 2^n$ assignments $\vec{w}$ such that $\ncus(\phi', \vec{w}) < \tilde{T}$, when $\phi' \sim D_R(m,n,k)$. Then, by the definition of $\pprom$, we know that $\pprom \ge q_1q_2$. If we let $q_1 = q_2 = \sqrt{\pprom}$, then we have that the probability that there exists at least $2^n\sqrt{\pprom}$ assignments $\vec{w}$ such that $\ncus(\phi', \vec{w}) < \tilde{T}$ is at most $\sqrt{\pprom}$. 
	
	 Thus the fraction of formulas with at least 
	 $$\exp \left(-\frac{1}{4(1+\epsilon)^2\cdot 2^k}(1-\alpha)^{4k}m \right) 2^n$$
	  assignments above the threshold is 
	  $$\leq \exp \left(-\frac{1}{4(1+\epsilon)^2\cdot 2^k}(1-\alpha)^{4k}m \right)$$
     for constant $\epsilon > 0$. 
      
\end{proof}

\begin{corollary}
	Let $k\geq \kstr$.\\
	Let $\epsilon > 0$ and $0 < \alpha < 1$ be constants such that $2^k(1-\alpha)^{2k} > \frac{1+\epsilon}{\epsilon}$. 
%	Let $m= (\ln(2)2^k-\gamma)n$ and $\gamma\leq \ln(2)2^k/2$.
	Let $m = (\ln(2)2^k-\gamma)n$ and $\gamma \in [0, \ln(2)2^k/4]$.

	Given a  $\phi$ drawn from $D_R(m,n,k)$, then with probability greater than $$1-\exp \left(-\frac{\ln(2)}{8(1+\epsilon)^2}(1-\alpha)^{4k}n \right)$$ 
	we have that the number of assignments $\vec{v}$ such that 
$\ncs(\phi, \vec{v}) \ge \left(1-\frac{1-(1-\alpha)^{2k}}{2^k-1}\right)m$ 
    is less than
	$$\exp \left(-\frac{\ln(2)}{8(1+\epsilon)^2}(1-\alpha)^{4k}n \right)2^n.$$
	\label{cor:FPNumberbound}
\end{corollary}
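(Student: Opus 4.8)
The plan is to derive Corollary~\ref{cor:FPNumberbound} as a direct specialization of Lemma~\ref{lem:tailBoundGeneralM} to the density window $m = (\ln(2)2^k-\gamma)n$ with $\gamma \in [0,\ln(2)2^k/4]$. The first step is the trivial complementarity identity: for any fixed assignment $\vec v$, every clause of $\phi$ is either satisfied by $\vec v$ or not, so $\ncs(\phi,\vec v) + \ncus(\phi,\vec v) = m$, and hence the event
\[
\ncs(\phi,\vec v) \ge \left(1 - \frac{1-(1-\alpha)^{2k}}{2^k-1}\right)m
\]
is exactly the event $\ncus(\phi,\vec v) \le \frac{1-(1-\alpha)^{2k}}{2^k-1}m = \tilde{T}$, which is the quantity controlled by Lemma~\ref{lem:tailBoundGeneralM}. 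Since the corollary postulates precisely the hypothesis $2^k(1-\alpha)^{2k} > \frac{1+\epsilon}{\epsilon}$ required by that lemma, I can invoke it with this $m$: with probability at least $1 - \exp\!\left(-\frac{(1-\alpha)^{4k}m}{4(1+\epsilon)^2 2^k}\right)$, the number of assignments $\vec v$ meeting the displayed condition is below $\exp\!\left(-\frac{(1-\alpha)^{4k}m}{4(1+\epsilon)^2 2^k}\right)2^n$. Here I would use the bound actually established inside the proof of Lemma~\ref{lem:tailBoundGeneralM}, namely that the fraction of formulas with too many such assignments is at most $\sqrt{\pprom}$, rather than the looser form printed in its statement (which carries a spurious $2^n$ inside the probability).

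What remains is purely arithmetic. Substituting $m = (\ln(2)2^k-\gamma)n$ gives
\[
\frac{(1-\alpha)^{4k}m}{4(1+\epsilon)^2 2^k} = \frac{(1-\alpha)^{4k}n}{4(1+\epsilon)^2}\cdot\frac{\ln(2)2^k-\gamma}{2^k},
\]
and the constraint $\gamma \le \ln(2)2^k/4$ yields $\ln(2)2^k-\gamma \ge \frac{3}{4}\ln(2)2^k$, hence $\frac{\ln(2)2^k-\gamma}{2^k} \ge \frac{3\ln 2}{4}$. Therefore the exponent above is at least $\frac{3\ln 2}{16(1+\epsilon)^2}(1-\alpha)^{4k}n$, and since $\frac{3}{16} > \frac{1}{8}$ this is strictly larger than $\frac{\ln 2}{8(1+\epsilon)^2}(1-\alpha)^{4k}n$. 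Feeding this back into both the failure-probability term and the count bound from the first step, and using that $x \mapsto \exp(-x)$ is decreasing, replaces each occurrence of $\exp\!\left(-\frac{(1-\alpha)^{4k}m}{4(1+\epsilon)^2 2^k}\right)$ by the strictly larger quantity $\exp\!\left(-\frac{\ln 2}{8(1+\epsilon)^2}(1-\alpha)^{4k}n\right)$, which is precisely the form asserted in Corollary~\ref{cor:FPNumberbound}; the strict inequality in the probability is inherited from $\frac{3}{16} > \frac{1}{8}$.

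I do not anticipate any real obstacle: this is a bookkeeping corollary. The only two points that require care are (i) not propagating the extraneous $2^n$ that appears inside the probability in the displayed statement of Lemma~\ref{lem:tailBoundGeneralM} (the clean $1 - \sqrt{\pprom}$ bound from its proof is what is needed), and (ii) keeping the constant slack honest, i.e.\ using the deliberately weak comparison $\frac{3\ln 2}{16} > \frac{\ln 2}{8}$ rather than anything sharper, so that both inequalities point in the stated direction. The hypothesis $k \ge \kstr$ is not actually used in this argument; it is presumably retained only to guarantee that $m = (\ln(2)2^k-\gamma)n$ lies in the meaningful regime near the threshold density $d_k n$, which is where this corollary will be applied downstream.
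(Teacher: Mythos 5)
Your proposal is correct and follows the same route as the paper's own terse proof, which simply says to plug the bounding values of $m$ into Lemma~\ref{lem:tailBoundGeneralM} and observes the equivalence $\ncs(\phi,\vec v)\ge(1-\tilde T/m)m \Leftrightarrow \ncus(\phi,\vec v)\le\tilde T$. Your explicit arithmetic ($m/2^k\ge\frac{3}{4}\ln(2)n$, hence $\frac{3\ln 2}{16}>\frac{\ln 2}{8}$) is exactly the bookkeeping the paper leaves implicit, and your observation that the factor $2^n$ inside the probability in the displayed statement of Lemma~\ref{lem:tailBoundGeneralM} is a typo (the $1-\sqrt{\pprom}$ bound established in its proof is what must be used) is a correct and worthwhile catch, since taking that statement literally would render the lemma vacuous; you are also right that $k\ge\kstr$ plays no role in this particular step.
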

\begin{proof}
Plug in the bounding values of $m$ to Lemma \ref{lem:tailBoundGeneralM}. 

The condition of $\ncs(\phi, \vec{v}) \ge (1-\frac{(1-(1-\alpha)^{2k}}{2^k-1})m$ 
is equivalent to $\ncus(\phi, \vec{v}) \le \tilde{T}$ from Lemma \ref{lem:tailBoundGeneralM}. 
\end{proof}

\begin{lemma}
	Let $k\geq \kstr$.\\
	Let $\epsilon > 0$ and $0 < \alpha < 1$ be constants such that $2^k(1-\alpha)^{2k} > \frac{1+\epsilon}{\epsilon}$. \\
	Let $m = (\ln(2)2^k-\gamma)n$ and $\gamma \in [0, \ln(2)2^k/4]$.\\    
    Let $$p_{\mathrm{pos}} = \exp \left(-\frac{\ln(2)}{8(1+\epsilon)^2}(1-\alpha)^{4k}n \right).$$
    Let $c_s = n^2 2^n / \binom{n}{\alpha n}$, the total number of samples we take in \HammDist.

Let $\mu_{\mathrm{pos}} = c_s p_{\mathrm{pos}}$.
	
Run algorithm \HammDist($\phi$) on a randomly selected $\phi$ drawn from $D_R(m,n,k)$. Then, with probability at least
$$1-\exp \left(-\frac{\ln(2)}{8(1+\epsilon)^2}(1-\alpha)^{4k}n \right)-e^{-n \mu_{\mathrm{pos}}/3},$$
 we have that the size of $S$, as defined in Algorithm~\ref{alg:randomSatAlgo}, is less than $\leq 2n\mu_{\mathrm{pos}}$.
 
 \label{lem:boundNumSamplesGeneral}
\end{lemma}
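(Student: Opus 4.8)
The plan is to identify $S$ with (a subset of) the set of sampled assignments that pass the clause-count test in Line~6 of Algorithm~\ref{alg:randomSatAlgo}, and to bound its size in two stages: first fix the ``density'' of test-passing assignments for the random $\phi$ using Corollary~\ref{cor:FPNumberbound}, and then, conditioned on that density being small, treat the $c_s$ independent uniform samples as a binomial random variable and apply a Chernoff bound. Concretely, condition on $\phi$ and let $N(\phi)$ be the number of $\vec v \in \{0,1\}^n$ with $\ncs(\phi,\vec v) \ge \bigl(1 - \tfrac{1-(1-\alpha)^{2k}}{2^k-1}\bigr)m$, i.e.\ the assignments that would pass the test. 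Since $k \ge \kstr$, $m = (\ln(2)2^k-\gamma)n$ and $\gamma \in [0,\ln(2)2^k/4]$, Corollary~\ref{cor:FPNumberbound} applies directly and says that the event $E := \{\,N(\phi) \le p_{\mathrm{pos}} 2^n\,\}$ holds with probability at least $1 - p_{\mathrm{pos}}$, where $p_{\mathrm{pos}} = \exp\bigl(-\tfrac{\ln(2)}{8(1+\epsilon)^2}(1-\alpha)^{4k}n\bigr)$; this accounts for the first deficit term in the claimed success probability.

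For the second stage, note that the $c_s$ assignments sampled in Line~5 are drawn i.i.d.\ uniformly from $\{0,1\}^n$ and are independent of $\phi$. Condition on a fixed $\phi$ for which $E$ holds. Then each sample passes the test independently with probability $N(\phi)/2^n \le p_{\mathrm{pos}}$, and the value of $|S|$ at the end of the run is at most the number of the $c_s$ samples that pass --- it can only be smaller, since the early returns in Lines~8 and~10 may terminate the loop before all $c_s$ samples are drawn, which only helps us. Hence $|S|$ is stochastically dominated by $\mathrm{Bin}(c_s,p_{\mathrm{pos}})$, whose mean is exactly $\mu_{\mathrm{pos}} = c_s p_{\mathrm{pos}}$. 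Applying the multiplicative Chernoff bound with multiplier $1+\delta = 2n$ gives $\Pr[\,|S| > 2n\mu_{\mathrm{pos}} \mid E\,] \le \exp\bigl(-\tfrac{(2n-1)^2}{2n+1}\mu_{\mathrm{pos}}\bigr) \le e^{-n\mu_{\mathrm{pos}}/3}$, the last inequality being an elementary estimate valid for every $n \ge 1$ and every value of $\mu_{\mathrm{pos}}$. Combining with a union bound over $\neg E$ gives $\Pr[\,|S| > 2n\mu_{\mathrm{pos}}\,] \le p_{\mathrm{pos}} + e^{-n\mu_{\mathrm{pos}}/3}$, which is the desired bound.

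There is no deep obstacle here; the only things that require care are the two coupling/domination points in the second stage: that the sample randomness is genuinely independent of $\phi$ so that after conditioning on $E$ the per-sample pass probability is at most $p_{\mathrm{pos}}$ rather than an averaged quantity, and that the algorithm's early-termination behaviour can only decrease $|S|$ relative to the full-binomial count. The numerical checks --- that Corollary~\ref{cor:FPNumberbound} is applicable with the given parameter ranges, and that the large-deviation Chernoff tail with $\delta \approx 2n$ really collapses to $e^{-n\mu_{\mathrm{pos}}/3}$ even when $\mu_{\mathrm{pos}}$ is exponentially small --- are routine.
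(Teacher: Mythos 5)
Your proof is correct and follows essentially the same two-stage structure as the paper's: first invoke Corollary~\ref{cor:FPNumberbound} to bound the fraction of test-passing assignments with probability $1 - p_{\mathrm{pos}}$, then treat the $c_s$ samples as a binomial and apply a multiplicative Chernoff bound with $1+\delta = 2n$ to conclude. Your version is slightly more explicit than the paper's in two small but real ways — you make the conditioning on the good event $E$ explicit before computing the per-sample pass probability (the paper states ``a random assignment has probability at most $p_{\mathrm{pos}}$'' without saying it is conditioning on $\phi$ being good), and you note that early termination can only decrease $|S|$, so the binomial stochastically dominates — but these are polish rather than a different route.
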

\begin{proof}
Let the threshold  be $T=(1-\frac{1-(1-\alpha)^{2k}}{2^k-1})m$. By Corollary \ref{cor:FPNumberbound}, we have that with probability at least $1-\exp \left(-\frac{\ln(2)}{8(1+\epsilon)^2}(1-\alpha)^{4k}n \right)$, the total number of assignments $\vec{v}$ such that $\ncs(\phi, \vec{v}) \ge T$ is at most $\exp \left(-\frac{\ln(2)}{8(1+\epsilon)^2}(1-\alpha)^{4k}n \right) \cdot 2^n$.

We now want to bound the probability that too many of the sampled assignments are above the threshold. 

We know that a random assignment has probability at most $p_{\mathrm{pos}}$
%$$\leq \exp \left(-\frac{\ln(2)}{8(1+\epsilon)^2}(1-\alpha)^{4k}n \right)$$ 
of being above the threshold. And we will draw $c_s$ assignments.

Therefore, the mean of the number of drawn assignments above the threshold is $\mu \leq \mu_{\mathrm{pos}}.$

Now we can apply the multiplicative Chernoff bound (note that $(2n\mu_{\mathrm{pos}}-\mu)/\mu \geq 1$):

$$Pr[|S|>2n\mu_{\mathrm{pos}}]\leq e^{-(2x-\mu)\mu / (3\mu )} = e^{-(2n\mu_{\mathrm{pos}}-\mu)/3} \leq e^{-n\mu_{\mathrm{pos}}/3}.$$
\end{proof}

\begin{lemma}
	Let $k\geq \kstr$.\\
	Let $\alpha= c\lg(k)/k$ for some constant $c<1/8$.
    
    Let $\epsilon > 0$ be a constant such that $2^k(1-\alpha)^{2k} > \frac{1+\epsilon}{\epsilon}$.
	
	Let $m = (\ln(2)2^k-\gamma)n$ and $\gamma \in [0, \ln(2)2^k/4]$.   	
	
	Run algorithm \HammDist($\phi$, $n$, $m$) on a randomly selected $\phi$ drawn from $D_R(m,n,k)$. Then, with probability at least
	$$1-2 \cdot 2^{-n/(8(1+\epsilon)^2 k^{8c})},$$
	we have that the size of $S$ is less than or equal to $4 n^3 2^n/\left(\binom{n}{\alpha n} k^{\alpha n}\right)$.
	\label{lem:boundNumSamplesOurAlpha}
\end{lemma}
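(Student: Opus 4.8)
The plan is to derive this lemma as the quantitative specialization of Lemma~\ref{lem:boundNumSamplesGeneral} to the particular choice $\alpha=c\lg(k)/k$. The hypotheses needed for Lemma~\ref{lem:boundNumSamplesGeneral} are already at hand: $0<\alpha<1$ holds for $k\geq\kstr$, the bound $2^k(1-\alpha)^{2k}>\frac{1+\epsilon}{\epsilon}$ is assumed, and the range of $m$ is identical. So Lemma~\ref{lem:boundNumSamplesGeneral} gives, with probability at least $1-p_{\mathrm{pos}}-e^{-n\mu_{\mathrm{pos}}/3}$, that $|S|\leq 2n\mu_{\mathrm{pos}}$, where $p_{\mathrm{pos}}=\exp\!\left(-\frac{\ln(2)}{8(1+\epsilon)^2}(1-\alpha)^{4k}n\right)$, $c_s=n^2 2^n/\binom{n}{\alpha n}$, and $\mu_{\mathrm{pos}}=c_s p_{\mathrm{pos}}$. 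Two pieces of arithmetic then remain: massaging the two failure terms into the stated form $2\cdot 2^{-n/(8(1+\epsilon)^2 k^{8c})}$, and rewriting the conclusion $|S|\leq 2n\mu_{\mathrm{pos}}$ as $|S|\leq 4n^3 2^n/(\binom{n}{\alpha n}k^{\alpha n})$.

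For the failure probability, the key estimate is $(1-\alpha)^{4k}\geq k^{-8c}$. Using $\ln(1-\alpha)\geq -\alpha/(1-\alpha)$ we get $(1-\alpha)^{4k}\geq\exp(-4k\alpha/(1-\alpha))$; substituting $\alpha=c\lg(k)/k$ turns the right side into $k^{-4c/((1-\alpha)\ln 2)}$, and since $\alpha\to 0$ as $k$ grows we have $1-\alpha\geq\frac{1}{2\ln 2}$ for $k\geq\kstr$, so the exponent is at least $-8c$ and $(1-\alpha)^{4k}\geq k^{-8c}$. Plugging this into $p_{\mathrm{pos}}$ gives $p_{\mathrm{pos}}\leq\exp\!\left(-\frac{\ln(2)}{8(1+\epsilon)^2}k^{-8c}n\right)=2^{-n/(8(1+\epsilon)^2 k^{8c})}$. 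For the second term, I would observe that $\mu_{\mathrm{pos}}=c_s p_{\mathrm{pos}}$ is in fact exponentially large in $n$: with $H$ the binary entropy function, $\binom{n}{\alpha n}\leq 2^{H(\alpha)n}$ gives $c_s\geq n^2 2^{(1-H(\alpha))n}$, while $(1-\alpha)^{4k}\leq 1$ and $(1+\epsilon)^2\geq 1$ give $p_{\mathrm{pos}}\geq 2^{-n/8}$, so $\mu_{\mathrm{pos}}\geq n^2 2^{(1-H(\alpha)-1/8)n}$, and $H(\alpha)\to 0$ for $k\geq\kstr$ makes the exponent positive; hence $e^{-n\mu_{\mathrm{pos}}/3}\leq 2^{-n/(8(1+\epsilon)^2 k^{8c})}$ with large margin. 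Adding the two bounds yields the claimed $2\cdot 2^{-n/(8(1+\epsilon)^2 k^{8c})}$.

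It remains to convert the size bound. Since $2n\mu_{\mathrm{pos}}=\frac{2n^3 2^n}{\binom{n}{\alpha n}}p_{\mathrm{pos}}$, the target $2n\mu_{\mathrm{pos}}\leq\frac{4n^3 2^n}{\binom{n}{\alpha n}k^{\alpha n}}$ is equivalent to $p_{\mathrm{pos}}\,k^{\alpha n}\leq 2$, and because $p_{\mathrm{pos}}k^{\alpha n}=\exp\!\left(\big(\alpha\ln k-\tfrac{\ln(2)}{8(1+\epsilon)^2}(1-\alpha)^{4k}\big)n\right)$ it suffices that $\alpha\ln k\leq\frac{\ln(2)}{8(1+\epsilon)^2}(1-\alpha)^{4k}$. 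Plugging $\alpha=c\lg(k)/k$ makes the left side $\frac{c\ln^2 k}{k\ln 2}$, and using $(1-\alpha)^{4k}\geq k^{-8c}$ the right side is at least $\frac{\ln(2)}{8(1+\epsilon)^2}k^{-8c}$, so the inequality reduces to $k^{1-8c}\geq\frac{8c(1+\epsilon)^2}{(\ln 2)^2}\ln^2 k$. This last comparison is the main obstacle, and it is exactly where the hypothesis $c<1/8$ is used: it forces the exponent $1-8c$ to be strictly positive so that $k^{1-8c}$ outgrows $\ln^2 k$, and one must check that $\kstr$ is large enough (with $\epsilon$, constrained only by $2^k(1-\alpha)^{2k}>\frac{1+\epsilon}{\epsilon}$, fixed independently of $n$) for this to hold for every $k\geq\kstr$. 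Once it does, chaining with the application of Lemma~\ref{lem:boundNumSamplesGeneral} gives $|S|\leq 2n\mu_{\mathrm{pos}}\leq 4n^3 2^n/(\binom{n}{\alpha n}k^{\alpha n})$ on the good event, completing the proof.
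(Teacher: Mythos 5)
Your proposal is correct and follows essentially the same route as the paper: specialize Lemma~\ref{lem:boundNumSamplesGeneral} to $\alpha = c\lg(k)/k$, show $(1-\alpha)^{4k}\geq k^{-8c}$ to bound $p_{\mathrm{pos}}$ by $2^{-n/(8(1+\epsilon)^2 k^{8c})}$, and then use $c<1/8$ to make $k^{\alpha n}\,p_{\mathrm{pos}}\leq O(1)$ via the inequality $k^{1-8c}\gtrsim \ln^2 k$, which the paper also invokes (with the caveat ``for large enough $k$''). You are in fact somewhat more careful than the paper: you explicitly bound the secondary failure term $e^{-n\mu_{\mathrm{pos}}/3}$ by noting $\mu_{\mathrm{pos}}$ is exponentially large, whereas the paper silently folds it into the factor of $2$, and your route through $\ln(1-\alpha)\geq -\alpha/(1-\alpha)$ is cleaner than the paper's slightly garbled $(1/4)^{8\alpha k}$ intermediate step, which as written would give $k^{-16c}$ rather than the needed $k^{-8c}$.
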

\begin{proof}
Plug in $\alpha = c\lg(k)/k$ into Lemma \ref{lem:boundNumSamplesGeneral}.

Note that $x > 2^n/2$ and 
\begin{align}
\exp \left(-\frac{\ln(2)}{8(1+\epsilon)^2}(1-\alpha)^{4k}n \right) &\leq \exp \left(-\frac{\ln(2)}{8(1+\epsilon)^2}(1/4)^{8\alpha k}n \right)\\
&\leq 2 ~ \hat{} \left(-\frac{1}{8(1+\epsilon)^2}n/k^{8c} \right)\\
\end{align}
so the probability that the event happens is at least $1-2 \cdot 2^{-n/((8(1+\epsilon)^2 k^{8c})}$. 

The size of the set is at most $2^n/(\binom{n}{\alpha n}) \cdot 2 ~ \hat{} \left(-\frac{1}{8(1+\epsilon)^2}n/k^{8c} \right)$. 

If $c<1/8$ and constant, then for large enough $k$,
$$k^{\alpha n} = 2^{n \lg(k)^2/(ck)} \leq 2 ~ \hat{} \left(\frac{1}{8(1+\epsilon)^2}n/k^{8c} \right).$$

So the size of $S$ is less than or equal to $4 n^3 2^n/\left(\binom{n}{\alpha n} k^{\alpha n}\right)$.
\end{proof}

\begin{corollary}
	Let $m = (\ln(2)2^k-c)n$.	
	
	The probability that $\phi$ drawn from $D_R(m,n,k)$  has at least one satisfying assignment is at least $$\frac{1}{n}2^{- n/e^{k/e^2}}e^{-3cn/2^k}.$$
	\label{cor:numSatIsHigh}
\end{corollary}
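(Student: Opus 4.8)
The plan is the second moment method applied to $X=X(\phi)$, the number of satisfying assignments of $\phi$. By Cauchy--Schwarz, $E_{D_R}[X]=E_{D_R}[X\cdot\mathbf{1}\{X\ge 1\}]\le\sqrt{E_{D_R}[X^2]}\,\sqrt{\Pr_{D_R}[X\ge 1]}$, so
$$\Pr_{D_R(m,n,k)}[\phi\ \text{satisfiable}]\ \ge\ \frac{(E_{D_R}[X])^2}{E_{D_R}[X^2]}.$$
It is clean to recast the right side through the planted distributions: conditioning $D_R$ on a fixed $\vec a$ satisfying $\phi$ yields exactly $\DPa(m,n,k,\vec a)$ (as noted in the definition of $\DPa$), and summing over the second assignment gives $E_{D_R}[X^2]/E_{D_R}[X]=E_{\phi\sim\DPa(m,n,k,\vec a)}[X]$ for every $\vec a$; thus the bound says that a random formula is satisfiable with probability at least (expected number of solutions of a random formula)$/$(expected number of solutions of a planted formula).

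Next I would evaluate both expectations. A fixed assignment falsifies a uniformly random clause of $D_{replace}$ with probability $2^{-k}$, so $E_{D_R}[X]=2^n(1-2^{-k})^m$. For assignments $\vec a,\vec b$ at Hamming distance $h$, inclusion--exclusion over the clauses of $D_{replace}$ (all $2^kn^k$; subtract $n^k$ falsified by $\vec a$, $n^k$ by $\vec b$; add back $(n-h)^k$ falsified by both) gives $\Pr_{c\sim D_{replace}}[c\models\vec a,\ c\models\vec b]=(2^k-2+(1-h/n)^k)/2^k$, hence $E_{D_R}[X^2]=2^n\sum_{h=0}^n\binom nh\big((2^k-2+(1-h/n)^k)/2^k\big)^m$. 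Dividing, the $(1-2^{-k})^{2m}$ factors cancel and
$$\Pr_{D_R}[\text{sat}]\ \ge\ \frac{2^n}{\sum_{h=0}^n\binom nh q_h^{\,m}},\qquad q_h:=1+\frac{2^k(1-h/n)^k-1}{(2^k-1)^2}\ \le\ 1+\frac{(1-h/n)^k}{2^k-1},$$
the last inequality being equivalent to $(1-h/n)^k\le 1$.

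The task thus reduces to showing $S:=\sum_h\binom nh q_h^{\,m}\le n\cdot 2^{n(1+e^{-k/e^2})}$ (the corollary's extra factor $e^{-3cn/2^k}$ is then free in the regime $c\ge 0$ relevant here; in fact this gives the stronger $\Pr[\text{sat}]\ge\frac1n 2^{-n/e^{k/e^2}}$). Split at $h=n/2$: for $h\ge n/2$, $(1-h/n)^k\le 2^{-k}$ forces $q_h\le 1$, so those terms sum to $\le 2^n$. For $h<n/2$ use $q_h^m\le\exp(m(1-h/n)^k/(2^k-1))$ and $\binom nh\le 2^{nH(h/n)}$ ($H$ the binary entropy, in bits); with $m=(\ln 2\cdot 2^k-c)n$ and $c\ge 0$, the $h$-th term is $\le 2^{n\Phi(h/n)}$ with $\Phi(\beta)\le H(\beta)+(1-\beta)^k+\tfrac1{2^k-1}$. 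The heart of the proof is the one-variable bound $\max_{\beta\in[0,1]}\big(H(\beta)+(1-\beta)^k\big)\le 1+e^{-k/e^2}$, shown by a three-way split: on $[1/2,1]$ the derivative $\log_2\frac{1-\beta}{\beta}-k(1-\beta)^{k-1}$ is negative, so the max is at $\beta=1/2$ with value $1+2^{-k}$; on $[1/e^2,1/2]$, $H\le 1$ and $(1-\beta)^k\le(1-1/e^2)^k\le e^{-k/e^2}$; on $(0,1/e^2]$ use $H(\beta)\le\beta\log_2(e/\beta)$ and $(1-\beta)^k\le e^{-k\beta}$, and check that $\beta\mapsto\beta\log_2(e/\beta)+e^{-k\beta}$ --- which is non-monotone, with a local peak near $\beta\approx 2^{-\Theta(k)}$ --- never exceeds $1+O(k\,2^{-k})\le 1+e^{-k/e^2}$, using that $k\ge\kstr$ is large. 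Summing the at most $n$ terms with $h<n/2$ and absorbing the $h\ge n/2$ bucket and the $\tfrac1{2^k-1}$ into the polynomial factor gives $S\le n\cdot 2^{n(1+e^{-k/e^2})}$, hence the corollary.

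The main obstacle is this last one-variable optimization, specifically the $(0,1/e^2]$ regime: one must verify that the entropy term $H(\beta)$ never outruns the exponentially small $(1-\beta)^k$ term by more than $e^{-k/e^2}$, and the somewhat unusual constant in $e^{k/e^2}$ is exactly the slack that falls out of this trade-off once $k$ is large. (A side remark: since the plain second moment is not tight for random $k$-SAT near the satisfiability threshold, the bound is only exponentially small when $c$ is as large as $\Theta(2^k)$ --- but that suffices for the paper.)
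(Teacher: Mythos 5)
Your route is in substance the paper's own route, just entered through a different door. The key intermediate inequality you derive — $\Pr_{D_R}[\text{sat}]\ge E_{D_R}[X]/E_{\DPa}[X]$, i.e.\ (first moment over $D_R$)/(first moment over the planted distribution) — is exactly what the paper's Lemma~\ref{lem:plantedToRealSatCount} establishes, except the paper gets there by a Jensen-type argument on $\sum 1/f_{count}(\phi)$ (the reciprocal of the solution count, summed over planted pairs) while you use Paley--Zygmund / Cauchy--Schwarz together with the factorization $E_{D_R}[X^2]=E_{D_R}[X]\cdot E_{\DPa}[X]$. These are two standard ways to write the second-moment method, so no new idea is introduced at this step. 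The remaining work — evaluating $E_{\DPa}[X]=\sum_i\binom{n}{i}\bigl(1-\tfrac{1-(1-i/n)^k}{2^k-1}\bigr)^m$ and maximizing the per-term exponent over the Hamming-distance fraction — is Lemma~\ref{lem:expectedSatPlantedPrelim} plus the casework in Lemma~\ref{lem:expectedSatPlanted}; your repackaging of that casework as a one-variable bound $\max_\beta\bigl(H(\beta)+(1-\beta)^k\bigr)\le 1+e^{-k/e^2}$, split into three ranges, is a cleaner way to present the same optimization.

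Two technical points in your writeup are, however, genuinely off and need repair. First, the additive $\tfrac{1}{2^k-1}$ in $\Phi(\beta)$ cannot be ``absorbed into the polynomial factor'': it contributes a factor $2^{n/(2^k-1)}$, which is exponential in $n$, not polynomial. The way out is to observe that there is slack to spare — the true maximum of $H(\beta)+(1-\beta)^k$ is $1+\Theta(2^{-k})$ (it is attained near $\beta=1/2$ and near $\beta\asymp 2^{-k}$, not at $\beta=1/e^2$), so $\max_\beta\Phi(\beta)\le 1+\Theta(2^{-k})+\tfrac{1}{2^k-1}$, which is still comfortably below $1+e^{-k/e^2}$ since $2^{-k}\ll e^{-k/e^2}$ for $k\ge\kstr$; but as written, your $[1/e^2,1/2]$ bucket already uses up the whole $e^{-k/e^2}$ budget with no room for the extra term. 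Second, your proof explicitly restricts to $c\ge 0$, and you claim this is ``the regime relevant here.'' It is not quite: the corollary is invoked for every $m$ in the Poisson window $[(d_k-1)n,(d_k+1)n]$, and since $d_k\approx 2^k\ln 2-\tfrac12(1+\ln 2)$, the corresponding $c=\ln 2\cdot 2^k-m/n$ ranges over roughly $[-0.15,\,1.85]$ and can be (slightly) negative. For $c<0$ the corollary's stated bound $\tfrac{1}{n}2^{-n/e^{k/e^2}}e^{-3cn/2^k}$ exceeds $\tfrac{1}{n}2^{-n/e^{k/e^2}}$, so your stronger-looking conclusion does not imply it; you would need to trace the $c$-dependence through the $q_h^m$ bound (which picks up an extra $e^{|c|n/2^k}$-type factor when $c<0$) rather than drop it. Both issues are fixable without changing the approach, but they are not cosmetic.
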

\begin{proof}
	Proof given in Section \ref{sec:lotsOfSAT}.% in the Reminder of Corollary \ref{cor:numSatIsHigh}.
\end{proof}

Theorem~\ref{thm:runningTimeIsGood} bounds the probability that our algorithm stops prematurely and returns False on satisfiable assignments, thereby giving the wrong answer (see line 8 of Algorithm~\ref{alg:randomSatAlgo}). It considers only values of $m$ within a range that $D_\Phi(n, k)$ yields with high probability. Within that range of $m$, we can combine our total bound over the number of formulas with an abnormally large number of false positive assignments (as given by Lemma~\ref{lem:boundNumSamplesOurAlpha}) with our bound on the total number of satisfiable formulas, which we call $p_{SAT}(n, k)$ (as given by Corollary~\ref{cor:numSatIsHigh}). 

%TODO: Here we need to have this proof actually prove this statment 
\begin{theorem}%{Theorem \ref{thm:runningTimeIsGood}}
	Let $\alpha = \frac{\lfloor \frac{\lg(k)}{16k} n \rfloor}{n}$, and let the formula $\phi \sim D_{\Phi}(n,k)$ be the input to the problem. When running $\HammDist(\phi)$  on \RandomSat instances at the threshold, conditioned on the formula being satisfiable, the probability that $S > 4 n^3 2^n/\left(\binom{n}{\alpha n} k^{\alpha n}\right)$ is at most $2\cdot 2^{-n/(3\ln(2)2^k)}$.
	\label{thm:runningTimeIsGood}
\end{theorem}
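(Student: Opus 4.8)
The plan is to bound $\Pr[|S|>B\mid\phi\text{ satisfiable}]$, with $B=4n^3 2^n/\bigl(\binom{n}{\alpha n}k^{\alpha n}\bigr)$, by writing it as $\Pr[|S|>B\text{ and }\phi\text{ satisfiable}]/\Pr[\phi\text{ satisfiable}]$, controlling the numerator with the false‑positive estimate of Lemma~\ref{lem:boundNumSamplesOurAlpha} and the denominator with the lower bound on the satisfiability probability from Corollary~\ref{cor:numSatIsHigh}. Recall that $\phi\sim D_\Phi(n,k)$ is generated by drawing $m\sim\mathrm{Pois}[d_k n]$ and then $\phi\sim D_R(m,n,k)$; for $k<\kstr$ the statement is vacuous (and \HammDist~just runs the deterministic \LocalSearch, never populating $S$), so assume $k\ge\kstr$.

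For the numerator I would condition on $m$. Since $\alpha\le\lg(k)/(16k)$, writing $\alpha=c\lg(k)/k$ gives $c\le 1/16<1/8$ and $k^{8c}\le\sqrt{k}$; hence for every $m\ge\tfrac34\ln(2)2^k n$, Lemma~\ref{lem:boundNumSamplesOurAlpha} applies --- its proof uses the hypothesis $\gamma\le\ln(2)2^k/4$ only through the resulting lower bound $m\ge\tfrac34\ln(2)2^k n$ --- and gives $\Pr_{D_R(m)}[|S|>B]\le 2\cdot 2^{-n/(8(1+\epsilon)^2\sqrt{k})}$ for a suitably small constant $\epsilon>0$ (one exists because $2^k(1-\alpha)^{2k}\to\infty$). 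For $m<\tfrac34\ln(2)2^k n$ I would bound the joint probability by $1$ and use the Poisson Chernoff bound, together with the fact that $d_k$ is within an additive constant of $\ln(2)2^k$, to get $\Pr[\mathrm{Pois}[d_k n]<\tfrac34\ln(2)2^k n]\le 2^{-\Omega(2^k n)}$. Summing the contributions over $m$,
\[
\Pr_{\phi\sim D_\Phi}\bigl[\,|S|>B\text{ and }\phi\text{ satisfiable}\,\bigr]\ \le\ 2\cdot 2^{-n/(8(1+\epsilon)^2\sqrt{k})}+2^{-\Omega(2^k n)}\ \le\ 3\cdot 2^{-n/(8(1+\epsilon)^2\sqrt{k})}.
\]

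For the denominator I would lift Corollary~\ref{cor:numSatIsHigh} (stated for fixed $m$ and $D_R$) to $D_\Phi$ by keeping only the Poisson point mass at the mode $m_0=\lfloor d_k n\rfloor$: a local estimate gives $\Pr[\mathrm{Pois}[d_k n]=m_0]\ge c'/\sqrt{d_k n}\ge 1/(\poly(n)\,2^{k/2})$, and $m_0=(\ln(2)2^k-c_0)n$ with $c_0=O(1)$, so Corollary~\ref{cor:numSatIsHigh} yields $\Pr_{D_R(m_0)}[\phi\text{ satisfiable}]\ge\tfrac1n 2^{-n/e^{k/e^2}}e^{-O(n/2^k)}$. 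Hence $\Pr_{\phi\sim D_\Phi}[\phi\text{ satisfiable}]\ge p_{\mathrm{SAT}}:=\tfrac{1}{\poly(n)\,2^{k/2}}\,2^{-n/e^{k/e^2}}\,e^{-O(n/2^k)}$. Dividing, $\Pr[|S|>B\mid\phi\text{ satisfiable}]\le \poly(n)\cdot 2^{k/2}\cdot 2^{-n\beta_k}$, where $\beta_k=\tfrac{1}{8(1+\epsilon)^2\sqrt{k}}-\tfrac{1}{e^{k/e^2}}-O(1/2^k)$. For $k\ge\kstr$ we have $\beta_k>0$ and $\beta_k=\Theta(1/\sqrt{k})$, because $\sqrt{k}$ is vastly smaller than both $e^{k/e^2}$ and $2^k$; so for all sufficiently large $n$ the exponential factor dominates the $\poly(n)2^{k/2}$ prefactor and the bound is in particular at most $2\cdot 2^{-n/(3\ln(2)2^k)}$ (using once more $\sqrt{k}\ll 2^k$). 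The claimed rate is far from tight; $2\cdot 2^{-n/(3\ln(2)2^k)}$ is merely a convenient common form matching Lemma~\ref{lem:boundingM}.

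I expect the main difficulty to lie in the conditioning on satisfiability. Passing from $\Pr[\cdot\text{ and satisfiable}]$ to $\Pr[\cdot\mid\text{satisfiable}]$ costs a factor $1/p_{\mathrm{SAT}}$, and the only lower bound available on $p_{\mathrm{SAT}}$ is exponentially small, of rate roughly $2^{-n/e^{k/e^2}}$; the argument survives only because the false‑positive bound of Lemma~\ref{lem:boundNumSamplesOurAlpha} decays strictly faster, at rate $2^{-n/\sqrt{k}}$, and this is precisely where the choice $\alpha\le\lg(k)/(16k)$ (hence $k^{8c}\le\sqrt{k}$) is essential. It is also why the atypical‑$m$ contribution must be bounded using Lemma~\ref{lem:boundNumSamplesOurAlpha} itself, valid for all $m\ge\tfrac34\ln(2)2^k n$, rather than the Poisson tail estimate of Lemma~\ref{lem:boundingM}, whose slower rate $2^{-n/2^k}$ would not survive division by $p_{\mathrm{SAT}}$; only the genuinely negligible tail $\{m<\tfrac34\ln(2)2^k n\}$, of Poisson mass $2^{-\Omega(2^k n)}$, can afford the trivial bound.
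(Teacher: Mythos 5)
Your proof is correct and follows the same basic strategy as the paper: bound the number of promising assignments using Lemma~\ref{lem:boundNumSamplesOurAlpha} with $c=1/16$ (so the rate is $2^{-n/(8(1+\epsilon)^2\sqrt{k})}$), lower-bound the probability of satisfiability via Corollary~\ref{cor:numSatIsHigh}, and combine. The paper's proof, however, handles the atypical-$m$ contribution by simply invoking Lemma~\ref{lem:boundingM} to say $m$ is in $[(d_k-1)n,(d_k+1)n]$ with probability $\ge 1-2^{-n/(3\ln(2)2^k)}$ and adding that to the in-range bound, without accounting for the fact that conditioning on satisfiability can inflate $\Pr[m\notin\text{range}]$ by a factor of $1/\Pr[\text{sat}]$ --- and the only lower bound on $\Pr[\text{sat}]$ available from Corollary~\ref{cor:numSatIsHigh} decays at rate roughly $2^{-n/e^{k/e^2}}$, which is \emph{faster} than the Poisson tail rate $2^{-\Theta(n/2^k)}$ (since $e^{k/e^2}<2^k$), so the naive division would not give anything finite. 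You identify exactly this issue and fix it cleanly: you observe that the hypothesis of Lemma~\ref{lem:boundNumSamplesOurAlpha} really only requires $m\ge\tfrac34\ln(2)2^k n$ (the bound from Lemma~\ref{lem:tailBoundGeneralM} only improves as $m$ grows), so the false-positive rate $2^{-n/\Theta(\sqrt{k})}$ applies to a much wider window of $m$; the truly excluded event $\{m<\tfrac34\ln(2)2^k n\}$ has Poisson mass $2^{-\Omega(2^k n)}$, which easily absorbs the $1/\Pr[\text{sat}]$ blow-up. Your lifting of Corollary~\ref{cor:numSatIsHigh} to $D_\Phi$ via the point mass at the Poisson mode is also a clean, explicit version of what the paper does implicitly by taking a minimum over $m$ in range. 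The net effect is the same bound, but your write-up is tighter than the paper's where the paper's is arguably incomplete.
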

%\begin{theorem}%{Theorem \ref{thm:runningTimeIsGood}}
%	Let $\alpha = \frac{\lfloor \frac{\lg(k)}{16k} n \rfloor}{n}$, and let the formula $\phi \sim D_{\Phi}(n,k)$ be the input to the problem. The running time of $\HammDist(\phi)$ when run on \RandomSat instances at the threshold takes time 
%	$$O^{*}\left(2^{n (1-\Omega(\lg^2(k))/k)}\right)$$
%	with probability $\geq 1-2\cdot 2^{-n/(3\ln(2)2^k)}$.
%	\label{thm:runningTimeIsGood}
%\end{theorem}
\begin{proof}
Let $\epsilon > 0$ be a constant such that $2^k(1-\alpha)^{2k} > \frac{1+\epsilon}{\epsilon}$. If $k>\kstr$ and $\alpha \ge (1/20)\lg(k)/k$, then any $\epsilon > 10^{-4}$ works. 
	
Using Lemma \ref{lem:boundNumSamplesOurAlpha} when setting $c=1/16$, we get that the size of the set $S$ is at most $ 4 n^3 2^n/(\binom{n}{\alpha n} k^{\alpha n})$ with probability at least $ 1-2 \cdot 2^{-n/(32\sqrt{k})}$ if $m$ in is in the range $(d_k - 1)n \le m \le (d_k + 1)m$. 

We can use Corollary \ref{cor:numSatIsHigh} to bound the probability that $S > 4 n^3 2^n/\left(\binom{n}{\alpha n} k^{\alpha n}\right)$  conditioned on the formula drawn from $D_{\Phi}(n,k)$ being satisfiable. Let $p_{SAT}(n,k)$ be the probability a formula drawn from $D_{\Phi}(n,k)$ is satisfiable conditioned on  $m$ being in the range $(d_k - 1)n \le m \le (d_k + 1)m$. Then, the probability that $|S| > 4 n^3 2^n/(\binom{n}{\alpha n} k^{\alpha n})$ is at most $2 \cdot 2^{-n/(32\sqrt{k})}/p_{SAT}(n,k)$, conditioned on the formula being satisfiable and $m$ being in the range $(d_k - 1)n \le m \le (d_k + 1)m$. 

We can use Corollary \ref{cor:numSatIsHigh} to bound $p_{SAT}(n,k) \geq \frac{1}{n}2^{- n/e^{k/e^2}}e^{-9n/2^k}$. So $1/p_{SAT}(n,k) \leq n2^{n/e^{k/e^2}}e^{9n/2^k}$. 
Thus, the probability that $|S| > 4 n^3 2^n/(\binom{n}{\alpha n} k^{\alpha n})$ is at most $2 \cdot 2^{-n/(32\sqrt{k})} \cdot n2^{n/e^{k/e^2}}e^{9n/2^k}$, if $m$ in is in the range $(d_k - 1)n \le m \le (d_k + 1)m$. Note that 
$2 \cdot 2^{-n/(32\sqrt{k})} \cdot n2^{n/e^{k/e^2}} \cdot e^{9n/2^k} \leq 2 \cdot 2^{-n/(64\sqrt{k})} .$

Furthermore, $m$ is in the appropriate range with probability at least $ 1- 2^{-n/(3\ln(2)2^k)}$, by Lemma~\ref{lem:boundingM}.
 
Thus, the probability that $S > 4 n^3 2^n/\left(\binom{n}{\alpha n} k^{\alpha n}\right)$ is at most $2\cdot 2^{-n/(3\ln(2)2^k)}$ conditioned on the formula being satisfiable.
\end{proof}

%Don't throw out the baby with the bath water
\section{Bounding the True Positive Rate}
\label{sec:falseNegative}

To demonstrate the correctness of our algorithm, we must show that if the input formula $\phi$ is satisfiable we will find some satisfying assignment with high probability. For this we must argue that a substantial fraction of randomly chosen assignments with low Hamming distance to a satisfying assignment will satisfy a large number of clauses. 
%Recall that we are picking an assignment uniformly from $\{0,1\}^n$.

First, we will define a \emph{bad formula} as a formula with too few true positives, i.e. when too few of the small-Hamming-distance assignments satisfy a large number of clauses. We will then bound how often a formula in the \emph{planted} distribution, $\DPa$, is bad\footnote{We will achieve an upper bound on how often a formula is bad by bounding how often a formula has no \emph{single} satisfying assignment which has the desired number of true positives in its Hamming ball of radius $\alpha n$. This bound is, of course, not tight. However, this does give an upper bound on how often a formula can have too few false positives.}. We will show that if an event (such as a formula being bad) happens with low enough probability in $\DPa$, then it necessarily also has low probability in the real distribution, $D_R$. Finally we will combine these results to get the desired result that \HammDist~runs correctly with high probability.

Throughout this section, ``small Hamming distance'' refers to a Hamming distance less than or equal to $\alpha n$. %For our algorithms optimal running time $\alpha = \Theta\left(\frac{\lg(k)}{k}\right)$.
For our algorithm to run as efficiently as possible, $\alpha$ should be chosen to be $\Theta\left(\frac{\lg(k)}{k}\right)$. Throughout this section, we will use $T$ to refer to the threshold of our algorithm: the number of satisfied clauses above which we consider it worthwhile to run a local search for a satisfying assignment. We further explore assignments that satisfy at least $T$ clauses. Our algorithm uses the following threshold: $T =\left(1-\frac{1-(1-\alpha)^{2k}}{2^k-1}\right)m$.

The end goal of this section is to prove that conditioned upon a formula being satisfiable, one of its satisfying assignments has an $(\alpha n)$-Hamming ball with at least $\binom{n}{\alpha n}\frac{1}{2}$ assignments that are above the threshold (and so are true positives). The \HammDist~algorithm will sample roughly $n^2$ assignments in the $(\alpha n)$-Hamming ball of that satisfying assignment. Thus, with high probability, one such small-Hamming-distance assignment will be randomly sampled by our algorithm. Therefore, we find a satisfying assignment, if one exists, with high probability.

\paragraph{Bounding a Bad Event in the Planted Distribution}

We will first define $S_{bad}$, which will contain all formulas for which the fraction of false negatives is too high. $S_{bad}$ will be a superset of all formulas that have too few false positives. $S_{bad}$ contains some formulas on which our algorithm will run correctly; however, we can use it to get an upper bound on how often our algorithm will fail, because all formulas on which our algorithm won't succeed on with high probability lie in $S_{bad}$.

\begin{definition}
	Let $H(\vec{a},\alpha, n)$ be the set of all assignments in $\{0,1\}^n$ with Hamming distance at most $\alpha n$ from $\vec{a}$.	
\end{definition}

\begin{definition}
    Let $H_=(\vec{a},\alpha, n)$ be the set of all assignments in $\{0,1\}^n$ with Hamming distance exactly $\alpha n$ from $\vec{a}$.	
\end{definition}

\begin{definition}
	Let $S_{bad}(m,n,k,\alpha, T)$ be the set of all formulas $\phi$ for which:
	\begin{itemize}
		\item $\phi$ has at least one satisfying assignment
		\item Every assignment $\vec{a}$ which satisfies $\phi$ has the following property: fewer than $1/2$ of the assignments $\vec{h} \in H_=(\vec{a},\alpha, n)$ satisfy at least $m-T$ of the clauses of $\phi$.
        \item Have $m$ clauses of $k$ literals each, and at most $n$ variables total.  
	\end{itemize}      
\end{definition}

In other words, $S_{bad}$ is the set of satisfiable formulas where most of the small-Hamming-distance assignments don't satisfy a large number of clauses. Formulas in this set pose a problem for our algorithm, because we can't reliably identify assignments that are a small Hamming distance from their satisfying assignment. In this section, we'll show that this set of formulas makes up an exponentially small fraction of the formulas in $U_\phi$.

In the lemma below, we express an expectation over small-Hamming-distance assignments in terms of an expectation over random assignments and an expectation over assignments that are a small Hamming distance from a \emph{falsifying} assignment---an assignment that makes every literal in a given clause false. This is helpful because while the former quantity is what we care about, the latter two qualities are more easily analyzed.

\begin{lemma}
%    Let $c$ be a clause of size $k$ drawn at random from $D_{replace}(k)$. Let $r_c(\vec{a})$ be the number of literals in $c$ satisfied by some assignment $\vec{a}$. 
Let $\vec{a} \in \{0,1\}^n$ be an assignment of length $n$. Let $c$ be a clause of size $k$. %drawn uniformly at random from $\DPc(n, k ,\vec{a})$. 
Let $r_c(\vec{a})$ be the number of literals in $c$ satisfied by some assignment $\vec{a}$. 

%Let $\lambda(\vec{a},\alpha,n)$ be a function that returns an assignment uniformly at random from $H(\vec{a},\alpha, n)$. 

Let $\mathcal{H}$ be the uniform distribution over assignments in $H_=(\vec{a}, \alpha, n)$.

Let $\{f'(\cdot) : [0, k] \rightarrow \mathbb{R}\}$ be any function taking as input the number of literals satisfied in $c$ and returning a real number. Let $f(c,\vec{a}) = f'(r_c(\vec{a}))$.

$\DPc(n, k, \vec{a})$ is the uniform distribution over clauses that are satisfied by $\vec{a}$.

$D_{fc}(n, k, \vec{a})$ is the uniform distribution over clauses that are \emph{not} satisfied by $\vec{a}$.

$D_{replace}(k)$ is the uniform distribution over \emph{all} clauses. 

    % $U_{\vec{a}(n)}$ be the uniform distribution over assignments of length $n$, $\{0,1\}^n$.
    
    %Let $U_{\mathrm{SHD}(\vec{a}, \alpha, n)}$ be the uniform distribution over assignments of Hamming distance $\alpha$ from $\vec{a}$.

 Then, we have:
 \begingroup
 \Large
 \begin{equation*}
    E_{c \sim \DPc(n, k, \vec{a}), \vec{h} \sim \mathcal{H}}[f(c,\vec{h}))] = 
\end{equation*}
\endgroup
 \vspace{-2.5ex}
 \begingroup
 \Large
 \begin{equation}
  \frac{2^k}{2^k - 1} E_{c \sim D_{replace}(n,k), \vec{h} \sim \mathcal{H}}[f(c,\vec{h})]- \frac{1}{2^k-1} E_{c \sim D_{fc}(n, k, \vec{a}), \vec{h} \sim \mathcal{H}}[f(c,\vec{h})].
\end{equation}
\endgroup
        
%   $$ E_{\vec{a} \sim U_{\mathrm{SHD}(\vec{a}, \alpha, n)}}[f(c,\vec{a})] = \frac{2^k}{2^k - 1} E_{\vec{a} \sim U_{\vec{a}(n)}}[f(c,\vec{a})]+ \frac{1}{2^k-1} E_{\vec{a} \sim U_{\mathrm{SHD}(\vec{0}, \alpha, n)}}[f(c,\vec{a})].$$
    
%In the lemma's statement, we restricted $\alpha < 1/1000$; this is an unimportant restriction, because we will eventually choose  $\alpha = \Theta(\lg(k)/k)$. 

\label{lem:generalSmallHDf}
\end{lemma}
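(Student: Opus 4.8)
The plan is to obtain the identity by a one-line law-of-total-expectation argument: decompose the uniform distribution over all clauses, $D_{replace}(n,k)$, into its restriction to clauses satisfied by $\vec{a}$ and its restriction to clauses falsified by $\vec{a}$, derive the identity pointwise in $\vec{h}$, and then average over $\vec{h}\sim\mathcal{H}$ using linearity. No step is genuinely hard; the only care needed is the counting in the first step and the independence of $c$ and $\vec{h}$ in the last.

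\textbf{Step 1 (partition the clause space).} A clause in the support of $D_{replace}(n,k)$ is specified by an ordered $k$-tuple of variables chosen with replacement ($n^k$ choices) together with a polarity for each of the $k$ literals ($2^k$ choices), for $2^kn^k$ clauses total. For a fixed $\vec{a}$ and a fixed $k$-tuple of variables, exactly one polarity pattern makes all $k$ literals false under $\vec{a}$ (each variable's polarity is forced), and the remaining $2^k-1$ patterns give a clause satisfied by $\vec{a}$. Hence the $2^kn^k$ clauses partition into the $(2^k-1)n^k$ clauses in the support of $\DPc(n,k,\vec{a})$ and the $n^k$ clauses in the support of $D_{fc}(n,k,\vec{a})$, and $D_{replace}(n,k)$ is exactly the mixture of $\DPc(n,k,\vec{a})$ and $D_{fc}(n,k,\vec{a})$ with weights $\frac{2^k-1}{2^k}$ and $\frac{1}{2^k}$ respectively.

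\textbf{Step 2 (condition at a fixed $\vec{h}$).} Fix any $\vec{h}\in H_=(\vec{a},\alpha,n)$. None of the three clause distributions depends on $\vec{h}$, so applying the law of total expectation to the mixture from Step 1 gives
\[
E_{c\sim D_{replace}(n,k)}[f(c,\vec{h})]=\frac{2^k-1}{2^k}\,E_{c\sim \DPc(n,k,\vec{a})}[f(c,\vec{h})]+\frac{1}{2^k}\,E_{c\sim D_{fc}(n,k,\vec{a})}[f(c,\vec{h})],
\]
and solving for the planted-clause term,
\[
E_{c\sim \DPc(n,k,\vec{a})}[f(c,\vec{h})]=\frac{2^k}{2^k-1}\,E_{c\sim D_{replace}(n,k)}[f(c,\vec{h})]-\frac{1}{2^k-1}\,E_{c\sim D_{fc}(n,k,\vec{a})}[f(c,\vec{h})].
\]

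\textbf{Step 3 (average over $\vec{h}$).} In each of the three joint distributions appearing in the statement, $c$ and $\vec{h}$ are independent, so $E_{c,\vec{h}}[\,\cdot\,]=E_{\vec{h}\sim\mathcal{H}}\big[E_{c}[\,\cdot\,]\big]$. Taking $E_{\vec{h}\sim\mathcal{H}}$ of both sides of the identity from Step 2 and using linearity of expectation yields precisely the claimed equation. Note that the hypothesis that $f$ factors through $r_c$ (i.e.\ $f(c,\vec{h})=f'(r_c(\vec{h}))$) is never used here—linearity requires nothing of $f$; that structure matters only when the three expectations are later evaluated in closed form. The only place to be careful is Step 1 (that per variable-tuple exactly one polarity pattern is falsifying) and the independence invoked in Step 3, which is what lets conditioning on $\vec{h}$ and averaging over $\vec{h}$ commute.
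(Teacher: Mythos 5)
Your proof is correct and follows essentially the same route as the paper's: decompose $D_{replace}$ as a $\bigl(\tfrac{2^k-1}{2^k},\tfrac{1}{2^k}\bigr)$-mixture of $\DPc$ and $D_{fc}$, write the corresponding law-of-total-expectation identity, and solve for the $\DPc$ term. Your version is slightly more explicit about the counting (noting that exactly one polarity pattern per variable tuple is falsifying, fixing a minor slip in the paper's stated ratio $|S(D_{fc})|/|S(D_{replace})|$) and about the conditioning-then-averaging over $\vec{h}$, and your remark that the hypothesis $f(c,\vec{h})=f'(r_c(\vec{h}))$ is never actually used in this lemma is accurate.
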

\begin{proof}
Let $S(\cdot)$ be the support of the distribution $(\cdot)$.
Note that $S(\DPc(n, k, \vec{a})) \cap S( D_{fc}(n, k, \vec{a})) = \emptyset$. Further note that $ S(D_{replace}(n,k)) =S(\DPc(n, k, \vec{a})) \cup S(D_{fc}(n, k, \vec{a}))$. So, these distributions are non-overlapping and all uniform. Moreover, $|S(\DPc(n, k, \vec{a}))| = \frac{2^k-1}{2^k}|S(D_{replace}(n, k))|$, and $|S(D_{fc}(n, k, \vec{a}))| = \frac{1}{2^k-1}|S(D_{replace}(n, k))|$.

As a result we can say the following:

\begin{align}
E_{c \sim D_{replace}(n,k), \vec{h} \sim \mathcal{H}}[f(c,\vec{h})] &= \frac{2^k-1}{2^k} E_{c \sim \DPc(n, k, \vec{a}), \vec{h} \sim \mathcal{H}}[f(c,\vec{h}))]+ \frac{1}{2^k} E_{c \sim D_{fc}(n, k, \vec{a}), \vec{h} \sim \mathcal{H}}[f(c,\vec{h})] \\
\frac{2^k-1}{2^k} E_{c \sim \DPc(n, k, \vec{a}), \vec{h} \sim \mathcal{H}}[f(c,\vec{h}))] &=E_{c \sim D_{replace}(n,k), \vec{h} \sim \mathcal{H}}[f(c,\vec{h})] - \frac{1}{2^k} E_{c \sim D_{fc}(n, k, \vec{a}), \vec{h} \sim \mathcal{H}}[f(c,\vec{h})] \\
 E_{c \sim \DPc(n, k, \vec{a}), \vec{h} \sim \mathcal{H}}[f(c,\vec{h}))] &= \frac{2^k}{2^k-1} \left(E_{c \sim D_{replace}(n,k), \vec{h} \sim \mathcal{H}}[f(c,\vec{h})] - \frac{1}{2^k} E_{c \sim D_{fc}(n, k, \vec{a}), \vec{h} \sim \mathcal{H}}[f(c,\vec{h})] \right)\\
  E_{c \sim \DPc(n, k, \vec{a}), \vec{h} \sim \mathcal{H}}[f(c,\vec{h}))] &= \frac{2^k}{2^k-1} E_{c \sim D_{replace}(n,k), \vec{h} \sim \mathcal{H}}[f(c,\vec{h})] - \frac{1}{2^k-1} E_{c \sim D_{fc}(n, k, \vec{a}), \vec{h} \sim \mathcal{H}}[f(c,\vec{h})] 
\end{align}
\end{proof}

\begin{corollary}
Given $c \sim \DPc(n,k,\vec{a})$ and $\vec{h} \sim \mathcal{H}$, then $\vec{h}$ leaves $c$ \emph{unsatisfied} with probability $\left(1 - \left(1-\alpha\right)^k \right)/\left(2^k-1 \right)$. %\geq (1-(0.367)^{-\alpha k})/(2^k-1)$ if $\alpha < 1/1000$.
\label{cor:boundSmallHDCl}
\end{corollary}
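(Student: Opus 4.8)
The plan is to read the corollary off directly from Lemma~\ref{lem:generalSmallHDf}. First I would instantiate the identity with the test function $f'\colon[0,k]\to\mathbb{R}$ given by $f'(0)=1$ and $f'(j)=0$ for $j\ge 1$. Since a clause $c$ (a disjunction) is left unsatisfied by an assignment $\vec{v}$ exactly when $r_c(\vec{v})=0$, the composite $f(c,\vec{v})=f'(r_c(\vec{v}))$ is the indicator of the event ``$\vec{v}$ falsifies $c$'', and the left-hand side of Lemma~\ref{lem:generalSmallHDf}, namely $E_{c\sim\DPc(n,k,\vec{a}),\,\vec{h}\sim\mathcal{H}}[f(c,\vec{h})]$, is precisely the probability the corollary asks for. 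So everything reduces to evaluating the two expectations on the right-hand side.

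The $D_{replace}$ expectation is the easy one: for any fixed $\vec{h}$, a uniformly random clause picks each of its $k$ variable slots with replacement and each polarity uniformly, so each literal is falsified by $\vec{h}$ independently with probability $1/2$, hence the whole clause is falsified with probability $2^{-k}$; this is independent of $\vec{h}$, so $E_{c\sim D_{replace}(n,k),\,\vec{h}\sim\mathcal{H}}[f(c,\vec{h})]=2^{-k}$. For the $D_{fc}$ expectation, I would fix $\vec{h}\in H_=(\vec{a},\alpha,n)$ and let $F\subseteq[n]$ with $|F|=\alpha n$ be the coordinates on which $\vec{h}$ and $\vec{a}$ disagree. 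A clause drawn from $D_{fc}(n,k,\vec{a})$ has every literal falsified by $\vec{a}$, so each slot's variable is an independent uniform draw from $[n]$ and its polarity is forced; such a literal on $x_i$ is falsified by $\vec{h}$ iff $\vec{h}(x_i)=\vec{a}(x_i)$, i.e.\ iff $i\notin F$. Thus $\vec{h}$ falsifies the clause iff all $k$ chosen variables avoid $F$, which by independence of the draws happens with probability $(1-\alpha)^k$, independent of which $\vec{h}$ we fixed; so $E_{c\sim D_{fc}(n,k,\vec{a}),\,\vec{h}\sim\mathcal{H}}[f(c,\vec{h})]=(1-\alpha)^k$. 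Substituting into the identity gives $\frac{2^k}{2^k-1}\cdot 2^{-k}-\frac{1}{2^k-1}(1-\alpha)^k=\frac{1-(1-\alpha)^k}{2^k-1}$, as claimed.

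The only step with any content is the $D_{fc}$ computation, and the point to be careful about is that the ``with replacement'' clause model is what makes the avoidance event factor cleanly: the $k$ variable slots are independent, so ``all $k$ slots miss $F$'' is $(1-\alpha)^k$ on the nose, whereas a without-replacement model would produce a ratio of binomial coefficients instead. I would therefore flag explicitly that we are working with $D_{replace}$ and its sub-distribution $D_{fc}$, and that $\mathcal{H}$ ranges over $H_=(\vec{a},\alpha,n)$ so $|F|$ is exactly $\alpha n$ and the per-slot avoidance probability is exactly $1-\alpha$. Everything else is pure substitution, so I do not anticipate any obstacle beyond bookkeeping.
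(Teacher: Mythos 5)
Your proposal is correct and follows essentially the same route as the paper: instantiate Lemma~\ref{lem:generalSmallHDf} with the indicator $f'(0)=1$, $f'(j)=0$ for $j\ge 1$, compute the $D_{replace}$ expectation as $2^{-k}$ by per-literal independence, compute the $D_{fc}$ expectation as $(1-\alpha)^k$ by the ``all $k$ slots avoid the flipped set'' argument, and substitute. The paper's write-up is terser but contains exactly these three steps; your explicit note about why the with-replacement model makes the avoidance probability factor as $(1-\alpha)^k$ (rather than a ratio of binomials) is a useful clarification that the paper only gestures at.
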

\begin{proof}
We will use  Lemma \ref{lem:generalSmallHDf} and the function $f'(x)= \begin{cases}
0 \text{ if } x>0\\
1 \text{ if } x=0
\end{cases}$. In other words, the function is $1$ when the clause is unsatisfied and $0$ when it is satisfied.

For this case, $E_{c \sim D_{replace}(n,k)}[f(c,\lambda(\vec{a}))] = \frac{1}{2^k}$ because given any assignment and a random clause, the probability the assignment falsifies the clause is $\frac{1}{2^k}$.

Note that in the above, the $k$ literals are indeed independently falsified with probability $\frac{1}{2}$. This is because the clause $c$ is chosen such that each literal in the clause is picked independently and uniformly at random from the $2n$ possible literals (by the definition of $D_{replace}$). 

Additionally $E_{c \sim D_{fc}(n, k, \vec{a}), \vec{h} \sim \mathcal{H}}[f(c,\vec{h})] = (1-\alpha)^k$, because if all literals start out false, then the clause is only remains falsified if none of the variables in the clause are chosen from the set of $\alpha n$ variables that have their values flipped. 

Once again, the $k$ literals are independently falsified, this time with probability $1-\alpha$. This is because the clause $c$ is chosen such that each literal in the clause is picked independently and uniformly at random from the $n$ possible literals that are falsified by $\vec{a}$.
    
Using Lemma \ref{lem:generalSmallHDf} we have that

\begin{align}
E_{c \sim \DPc(n, k, \vec{a}), \vec{h} \sim \mathcal{H}}[f(c,\vec{h}))] &= \frac{2^k}{2^k - 1} E_{c \sim D_{replace}(n,k)}[f(c,\lambda(\vec{a}))]- \frac{1}{2^k-1} E_{c \sim D_{fc}(n, k, \vec{a}), \vec{h} \sim \mathcal{H}}[f(c,\vec{h})]\\
E_{c \sim \DPc(n, k, \vec{a}), \vec{h} \sim \mathcal{H}}[f(c,\vec{h}))] &= \frac{2^k}{2^k - 1} \frac{1}{2^k}- \frac{1}{2^k-1} (1-\alpha)^k\\
E_{c \sim \DPc(n, k, \vec{a}), \vec{h} \sim \mathcal{H}}[f(c,\vec{h}))] &= \frac{1}{2^k - 1} - \frac{1}{2^k-1} (1-\alpha)^k\\
E_{c \sim \DPc(n, k, \vec{a}), \vec{h} \sim \mathcal{H}}[f(c,\vec{h}))] &= \frac{1-(1-\alpha)^k}{2^k - 1}.
\end{align}

Thus, the probability that clause chosen at random from $\DPc(n, k, \vec{a})$ is \emph{unsatisfied} by $\vec{h} \sim \mathcal{H}$ is $\frac{1-(1-\alpha)^k}{2^k - 1}$.
\end{proof}

\begin{lemma}

Let $m = (\ln(2) 2^k - x)n$	where $x<\ln(2) 2^k /2$.
	
Let $\alpha < 1/2$ and $k > 2$.

Choose an assignment, $\vec{a}$, at random from $U_{\vec{a}}$ and fix it. Now pick a vector $\vec{h}$ uniformly at random from among $H(\vec{a}, \alpha, n)$. Finally pick a formula, $\phi$, uniformly at random from the planted distribution, $\DPa(m, n, k, a)$.

Let $\ncus(\phi, \vec{h})$ be the number of \textit{unsatisfied} clauses when the variables in the formula $\phi$ are set according to the assignment $\vec{h}$.

Let $\pbad$ be $Pr[\ncus{\phi, \vec{h}}> m\left(1-(1-\alpha)^{2k}\right)/\left(2^k-1\right)]$.

Then 
$$\pbad<e^{-(1-\alpha)^{2k}\ln(2)n/(16)}$$

\label{lem:boundExistenceOfOutliers}
\end{lemma}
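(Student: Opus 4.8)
The plan is to fix the planted assignment $\vec a$ (legitimate by symmetry: relabelling variables and flipping signs shows the law of $\ncus(\phi,\vec h)$ depends on $\vec a,\vec h$ only through the Hamming distance $d(\vec a,\vec h)$), condition on $d(\vec a,\vec h)=\beta n$ with $0\le\beta\le\alpha$, and reduce $\ncus(\phi,\vec h)$ to a binomial random variable. Conditioned on $\beta$, the $m$ clauses of $\phi$ are i.i.d.\ draws from $\DPc(n,k,\vec a)$, and, by exactly the computation in the proof of Corollary~\ref{cor:boundSmallHDCl} but with $\beta$ in place of $\alpha$, each clause is left unsatisfied by $\vec h$ independently with probability $p_\beta:=\bigl(1-(1-\beta)^k\bigr)/(2^k-1)$; hence, conditioned on $\beta$, $\ncus(\phi,\vec h)$ is distributed as $\mathrm{Bin}(m,p_\beta)$.

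Next I would remove the conditioning. Since $t\mapsto 1-(1-t)^k$ is increasing on $[0,1]$, we have $p_\beta\le p_\alpha$ for every $\beta\le\alpha$, so $\mathrm{Bin}(m,p_\beta)$ is stochastically dominated by $\mathrm{Bin}(m,p_\alpha)$; averaging over the distribution of $\beta$ shows $\ncus(\phi,\vec h)$ is itself stochastically dominated by $\mathrm{Bin}(m,p_\alpha)$. Writing $\tilde T:=(1-(1-\alpha)^{2k})m/(2^k-1)$ for the threshold, this gives
\[
\pbad \;\le\; \Pr\bigl[\,\mathrm{Bin}(m,p_\alpha)\ge \tilde T\,\bigr].
\]
Now set $A:=(1-\alpha)^k\in(0,1)$. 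The mean of this binomial is $\mu:=mp_\alpha=(1-A)m/(2^k-1)$, and factoring $1-A^2=(1-A)(1+A)$ gives the clean identity $\tilde T=(1+A)\,\mu$; so we only need to bound the probability that a binomial exceeds its mean by the multiplicative factor $1+\delta$ with $\delta:=A\le 1$.

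Finally I would apply a multiplicative Chernoff bound and use the hypothesis on $m$. With $\Pr[\mathrm{Bin}(m,p_\alpha)\ge(1+\delta)\mu]\le\exp\!\bigl(-\mu\,((1+\delta)\ln(1+\delta)-\delta)\bigr)$, and with $x<\ln(2)2^k/2$ forcing $m>\tfrac12\ln(2)2^kn$ and hence $\mu>\tfrac12\ln(2)(1-A)\,n$, the exponent is at least $\tfrac12\ln(2)(1-A)\,g(A)\,n$ where $g(A)=(1+A)\ln(1+A)-A$; so the claim reduces to the one-variable inequality $(1-A)\,g(A)/A^2\ge 1/8$. I expect this numerical step to be the only real difficulty, and to be where essentially all the slack in the statement is consumed: the inequality holds for $A$ up to roughly $0.69$, which is exactly the range produced by the intended regime $\alpha\approx\lg(k)/(16k)$, $k\ge\kstr$, whereas replacing the sharp Chernoff bound by the cruder $\exp(-\delta^2\mu/3)$ would only tolerate $A\le 5/8$ and so would need either a larger $\kstr$ or the stronger lower bound $m\approx\ln(2)2^kn$ that in fact holds in the eventual application. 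Everything else — the symmetry reduction and the binomial identity — is routine once Corollary~\ref{cor:boundSmallHDCl} is available.
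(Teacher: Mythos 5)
Your proof follows the same route as the paper's: condition on the Hamming distance of $\vec h$ from the planted assignment, observe that conditionally $\ncus(\phi,\vec h)$ is binomial with falsification probability $p_\beta$, reduce to the extreme case $\beta=\alpha$ by stochastic dominance (you make explicit what the paper only gestures at with ``$p_f\le p_\alpha$''), factor the threshold as $(1+A)\mu$ with $A=(1-\alpha)^k$, and apply a multiplicative Chernoff bound. The one genuinely substantive difference is your choice of Chernoff form. The paper uses $\exp(-\delta^2\mu/3)$ and then asserts $p_fm\ge\ln(2)n/4$; given only the hypothesis $m>\tfrac12\ln(2)2^kn$, that assertion needs $1-A\ge 1/2$ (or $1-A\ge 3/8$ to hit the stated constant $1/16$), i.e.\ $A\le 5/8$, but for $\alpha\approx\lg(k)/(16k)$ and $k=\kstr=\kval$ one has $A\approx 0.69 > 5/8$, so the paper's final chain of inequalities does not actually close as written under the lemma's own hypotheses. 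You correctly diagnose exactly this: the sharp bound $\exp\bigl(-\mu((1+\delta)\ln(1+\delta)-\delta)\bigr)$ extends the tolerable range to $A$ up to roughly $0.695$, which just covers the intended regime, and you also identify the two alternative repairs (a larger $\kstr$, or the tighter $m\approx\ln(2)2^kn$ available in the eventual application). So this is the same decomposition carried out more carefully, and your version is in fact the one that proves the stated constant in the stated regime. The only loose end in your write-up is that you state but do not verify the reduction to $(1-A)g(A)\ge A^2/8$ with $g(A)=(1+A)\ln(1+A)-A$; it does hold, barely, for the relevant $A\approx 0.69$.
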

\begin{proof}
%Let $\vec{h}$ represent a random choice of assignment with hamming distance $\alpha n$ from $\vec{a}$ and let $\phi$ represent a random choice of formula which is satisfied by $\vec{a}$.

%$\phi$ is chosen at random from among formulas that are satisfied by $\vec{a}$; considering that $\vec{a}$ is itself chosen uniformly at random, we can say that $\phi$ is drawn from the planted distribution, $\DPa(\vec{a})$.

%Let the probability that randomly chosen $\vec{h}$ and $\phi$ have $\phi(\vec{h})> m\left(1-(1-\alpha)^{2k}\right)/\left(2^k-1\right)$ clauses be $p$.

Let $p_f$ be the probability that a given clause is falsified by $\vec{h}$.

Let $p_{HD}(d)$ be the probability a given clause is falsified by a uniformly random assignment with Hamming distance exactly $d$ from $\vec{a}$.

If $\HD(\vec{h}, \vec{a}) = \alpha n$, then by Corollary \ref{cor:boundSmallHDCl} %Lemma \ref{lem:boundSmallHDClause}
 we have that $p_{HD} (\alpha n)= (1-(1-\alpha)^{k})/(2^k-1)$. If instead $\HD(\vec{h}, \vec{a}) = \alpha' n$, then by Corollary~\ref{cor:boundSmallHDCl} %Lemma \ref{lem:boundSmallHDClause}
  we have that $p_{HD} (\alpha' n)= (1-(1-\alpha')^{k})/(2^k-1)$. If $\alpha'\leq \alpha$ then $p_{HD} (\alpha' n)\leq  p_{HD} (\alpha n)$.
Thus, $p_f\leq (1-(1-\alpha)^{k})/(2^k-1)$.

So the mean number of clauses falsified is $\mu = p_f m$.

Let $p_{new} = (1-(1-\alpha)^{2k})/(2^k-1)$. Note that $p_{new} > p_f$.
In our algorithm, we chose the threshold to be $T=  p_{new}m$. 
So 
$$T/\mu = p_{new}/p_f = \frac{1-(1-\alpha)^{2k}}{1-(1-\alpha)^{k}} = 1+(1-\alpha)^k.$$

Now let $\delta_\mu = (1-\alpha)^k<1$. Using the multiplicative Chernoff bound: 

\begin{align}
Pr[\ncus(\phi, \vec{h})]> T = (1+\delta_\mu)\mu]&<e^{-\delta_\mu^2\mu/3} \\
&<e^{-(1-\alpha)^{2k}p_fm/3}\\
&<e^{-(1-\alpha)^{2k}\ln(2)n/(12)}
\end{align}

\end{proof}

\begin{lemma}
Let $m = (\ln(2) 2^k - x)n$	where $x<\ln(2) 2^k /2$.
Let $\alpha < 1/2$ and $k >2$.

Choose an assignment, $\vec{a}$, uniformly at random and fix it. Now pick a formula, $\phi$, uniformly at random from the planted distribution about $\vec{a}$, $\DPa(m, n, k, \vec{a})$. Let $\tilde{T}$ be 

$$m \frac{1-(1-\alpha)^{2 k}}{2^k-1}.$$

Let $S_{\alpha n}(\phi)$ be the set of all assignments in $H(\vec{a},\alpha, n)$ that leave more than $\tilde{T}$ clauses in $\phi$ unsatisfied. Then,
	
$$Pr_{\phi}[|S_{\alpha n}(\phi)| > \frac{1}{2}|H(\vec{a},\alpha,n)|] \leq 2 e^{-(1-\alpha)^{2k}\ln(2)n/(16)}$$
\label{cor:badFormulasBoundedInPlant}
\end{lemma}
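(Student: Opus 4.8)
The plan is to derive the bound from a single application of Markov's inequality to the nonnegative random variable $|S_{\alpha n}(\phi)|$, using Lemma~\ref{lem:boundExistenceOfOutliers} to control its expectation. First I would fix the planted assignment $\vec{a}$ and set $N = |H(\vec{a},\alpha,n)|$, observing that $N$ is the same for every $\vec{a}$ (it depends only on $n$ and $\alpha$). By linearity of expectation,
$$E_{\phi \sim \DPa(m,n,k,\vec{a})}\bigl[|S_{\alpha n}(\phi)|\bigr] = \sum_{\vec{h} \in H(\vec{a},\alpha,n)} Pr_{\phi}\bigl[\ncus(\phi,\vec{h}) > \tilde{T}\bigr],$$
since $\vec{h}$ belongs to $S_{\alpha n}(\phi)$ exactly when it leaves more than $\tilde{T}$ clauses of $\phi$ unsatisfied, and $\tilde{T} = m(1-(1-\alpha)^{2k})/(2^k-1)$ is precisely the threshold appearing in Lemma~\ref{lem:boundExistenceOfOutliers}.

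Next I would identify the quantity $\pbad$ of Lemma~\ref{lem:boundExistenceOfOutliers} with the per-assignment tail probabilities above. In that lemma $\vec{a}$ is drawn from $U_{\vec{a}}$, then $\vec{h}$ uniformly from $H(\vec{a},\alpha,n)$, then $\phi$ from $\DPa(m,n,k,\vec{a})$, so
$$\pbad = E_{\vec{a}}\!\left[\frac{1}{N}\sum_{\vec{h}\in H(\vec{a},\alpha,n)} Pr_{\phi}\bigl[\ncus(\phi,\vec{h}) > \tilde{T}\bigr]\right] = E_{\vec{a}}\!\left[\frac{E_{\phi}\bigl[|S_{\alpha n}(\phi)|\bigr]}{N}\right].$$
The inner bracket does not depend on which $\vec{a}$ was chosen: relabeling variables by negating exactly the coordinates on which two assignments differ is a bijection of $\{0,1\}^n$ that preserves Hamming distance, carries clauses to clauses, and carries "clauses satisfied by $\vec{a}$" to "clauses satisfied by $\vec{a}'$," hence preserves the whole joint distribution. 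Therefore for every fixed $\vec{a}$ we get $E_{\phi}[|S_{\alpha n}(\phi)|] = N\pbad$, and Lemma~\ref{lem:boundExistenceOfOutliers} gives $\pbad < e^{-(1-\alpha)^{2k}\ln(2)n/16}$.

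Finally, Markov's inequality yields
$$Pr_{\phi}\!\left[|S_{\alpha n}(\phi)| > \tfrac{1}{2}N\right] \le \frac{E_{\phi}\bigl[|S_{\alpha n}(\phi)|\bigr]}{N/2} = 2\pbad < 2\,e^{-(1-\alpha)^{2k}\ln(2)n/16},$$
which is the claim, once we recall $N = |H(\vec{a},\alpha,n)|$. The hypotheses $\alpha < 1/2$, $k > 2$, and $m = (\ln(2)2^k - x)n$ with $x < \ln(2)2^k/2$ are inherited verbatim from Lemma~\ref{lem:boundExistenceOfOutliers} and are used only there. There is no substantive obstacle; the only care needed is the bookkeeping that ties the per-$\vec{h}$ tail probability to the expected cardinality of $S_{\alpha n}(\phi)$, together with the symmetry remark that makes fixing $\vec{a}$ costless.
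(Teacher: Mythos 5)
Your proof is correct and is essentially the same argument as the paper's: the paper also expresses the average fraction of bad assignments in the Hamming ball via Lemma~\ref{lem:boundExistenceOfOutliers}, and then applies a Markov-type counting argument (in the form $p_{FalseNeg} \geq \frac{1}{2}\pbad$) to bound the fraction of bad formulas by twice that average. Your explicit symmetry remark justifying that the bound holds for each fixed $\vec{a}$ is a small clarification the paper leaves implicit, not a different route.
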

\begin{proof}
Let $p_{FalseNeg}$ be the probability that a random small-Hamming-distance assignment in $H(\vec{a},\alpha,n)$ has more than 
$\tilde{T}$ unsatisfied clauses. By Lemma~\ref{lem:boundExistenceOfOutliers}, $p_t\leq e^{-(1-\alpha)^{2k}\ln(2)n/(16)}$. 

%By Lemma \ref{lem:boundExistenceOfOutliers} we know that with probability at most $p_{FN\vec{a}}$, a small Hamming distance assignment leaving $> m (1-(1-\alpha))^{2 k})/(2^k-1)$ clauses unsatisfied.

Let $\Sbad$ be the set of formulas for which 
$$|S_{\alpha n}(\phi)| > \frac{1}{2}|H(\vec{a},\alpha,n)|.$$ Put another way, let $\Sbad$ be the set of formulas with at least half of their small-Hamming-distance assignments leaving more than $\tilde{T}$ clauses unsatisfied. Then, let $\pbad$ be the fraction of formulas in $\Sbad$.

Then, 

$$p_{FalseNeg} = \frac{1}{|D_{pa}(m,n,k,\vec{a})|} \sum_\phi \frac{|S_{\alpha n}(\phi)|}{|H(\vec{a},\alpha,n)|} \ge \frac{1}{2}\pbad.$$

In the above, we use $|D_{pa}(m,n,k,\vec{a})|$ to mean the size of the support of $D_{pa}$. For a visual representation of the relationship between $p_{FalseNeg}$, $\Sbad$, and $\pbad$, see Figure \ref{fig:volumeSalphan}.

 Then $p_{FalseNeg} \geq \pbad/2$, so $\pbad \leq 2p_{FalseNeg} \leq 2e^{-(1-\alpha)^{2k}\ln(2)n/(16)}$.
\end{proof}

\begin{figure}
	\centering
	\includegraphics[width=4in]{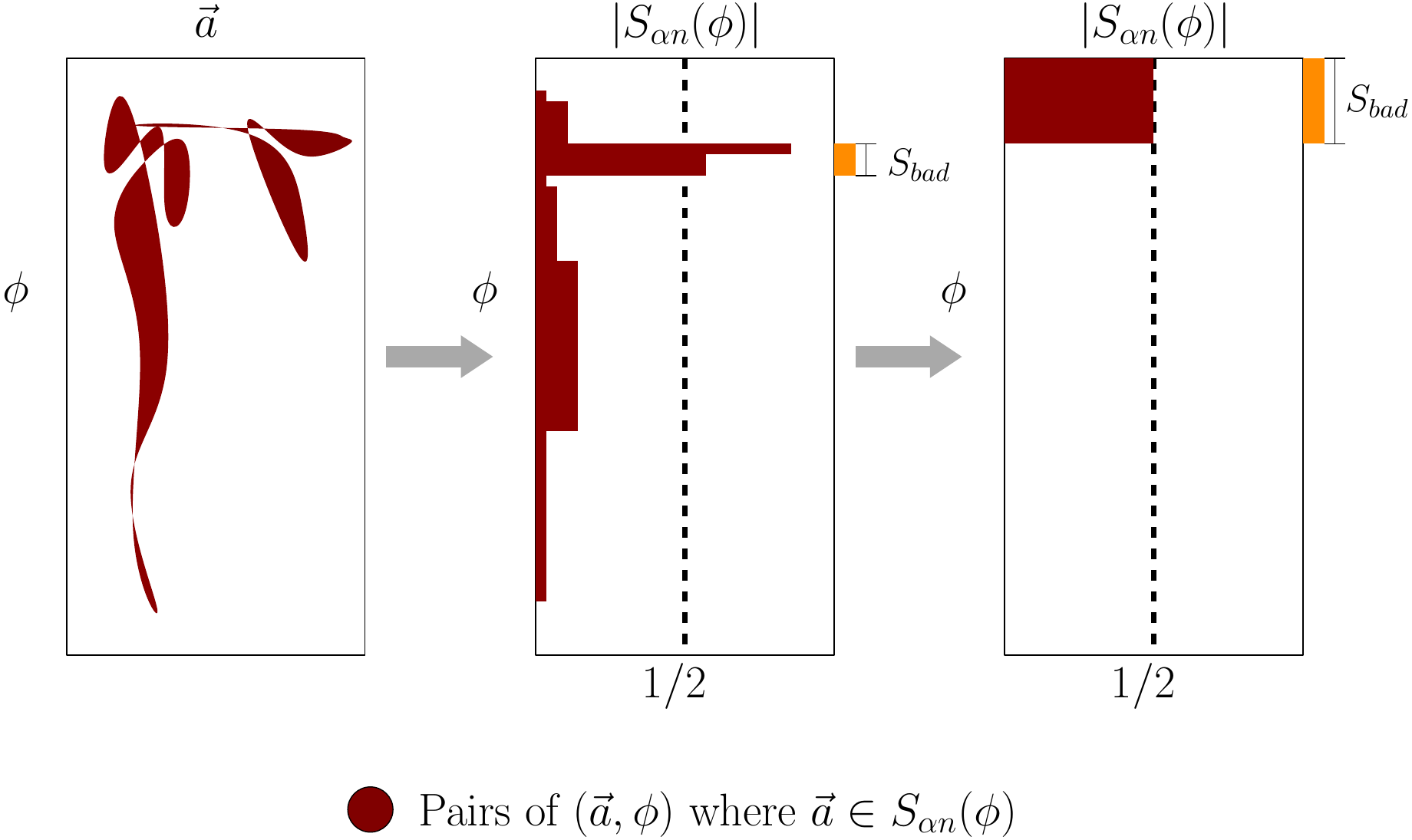}
	\caption{Three representations of $(\phi,\alpha)$ pairs, where pairs such that $\vec{a} \in S_{\alpha n}(\phi)$ are dark red and other pairs are white. Across the three representations, the total dark red area is meant to be fixed. \\
	In the first representation we mark in red every pair $(\phi,\alpha)$ where $\vec{a} \in S_{\alpha n}(\phi)$. The vertical axis represents different formulas, while the horizontal axis represents different assignments. \\
	In the second representation we instead mark the size of $|S_{\alpha n}(\phi)|$ for every $\phi$, starting from the left. The orange bar highlights $\phi \in \Sbad$, i.e. those $\phi$ for which $|S_{\alpha n}(\phi)| > \frac{1}{2}|H(\vec{a},\alpha,n)|$.\\
	Finally, we show a visualization of the worst-case distribution of $|S_{\alpha n}(\phi)|$, given a fixed dark red area. The orange bar highlights which $\phi$ would be in $\Sbad$ in this case. While the length of the orange bar in the middle figure represents the true value of $\pbad$, the length of the orange bar in the rightmost figure represents our upper bound on $\pbad$ as proved in Lemma~\ref{cor:badFormulasBoundedInPlant}.
    }
	\label{fig:volumeSalphan}
\end{figure}

% SOFT TODO put an explanatory diagram with the stuff in the stuff

%But, we need a different statement. We want that conditioned on there being at least one satisfying assignment there is at least a $1/n$ chance that given an assignment of small hamming distance it satisfies at least $(1- \frac{1}{2}(0.367)^{\alpha k})/(2^k-1)$ clauses. 

%Let the set of all $\phi$ of length $m$ for which $|S_{\alpha n}(\phi)| < \binom{n}{\alpha n}/2$ be $F_{bad}$. And the set of all formulas $\phi$  of length $m$ be $F_{all}$. Let $F_{\vec{a}}$ be the set of all formulas $\phi$  of length $m$ satisfied by $\vec{a}$. 

%What we have shown so far is that $|F_{bad} \cap F_{\vec{a}}| / |F_{all} \cap F_{\vec{a}}|$ is very small. Note that this over counts formulas which are satisfied by more assignments rather than less. If we assigned each $\phi$ weight $1/x$ where $x$ is the number of assignments which satisfy it, then this re-weighted version would capture the probability of a randomly generated formula, which satisfies at least one assignment, falling in $F_{bad}$. So now we will seek to bound the new weight of $F_{all}$, if it still has fairly high weight, we can show that $F_{bad}$ is still an unlikely outcome and run the algorithm.

\paragraph{Bounds in the Planted Distribution Imply Bounds in Uniform Distribution}

We will now show that algorithms that work with high enough probability in the planted distribution $\DPa(m, n, k)$ work with high probability given random formulas drawn from $D_R(m, n, k)$. Similar reductions have been shown in previous work \cite{plantedToRandom1,plantedToRandom2,vyas2018super}.

\begin{definition}
Let the distribution $\DPa(m, n, k)$, the planted distribution, be the distribution formed by uniformly selecting from all pairs $(\phi,\vec{a})$, where $\vec{a}$ is a satisfying assignment to $\phi$ and $\phi$ has $m$ clauses.
\end{definition}

\begin{corollary}
Samples from the distribution $\DPa(m,n,k)$ can be generated by first uniformly picking an assignment $\vec{a} \in \{0,1\}^n$, and then picking $m$ clauses uniformly from the set of all $k$-length clauses that $\vec{a}$ satisfies.  
\end{corollary}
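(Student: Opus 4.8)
The plan is to prove this by a direct counting/symmetry argument: show that the set of pairs being sampled uniformly partitions into equal-sized ``slices,'' one per assignment, and that each slice is exactly a copy of the per-assignment planted distribution $\DPa(m,n,k,\vec{a})$ (IID clauses from $\DPc(n,k,\vec{a})$).

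First I would recall the bookkeeping: throughout the paper a formula with $m$ clauses from the support of $D_{replace}$ is an ordered $m$-tuple of clauses, each clause being one of the $2^k n^k$ possibilities ($k$ variables chosen with replacement, $k$ signs). Fix an assignment $\vec{a} \in \{0,1\}^n$. A clause fails to be satisfied by $\vec{a}$ exactly when all $k$ of its literals evaluate to false, which pins down each literal's sign once its variable is chosen; hence exactly $n^k$ clauses are falsified by $\vec{a}$ and exactly $(2^k-1)n^k$ clauses are satisfied by $\vec{a}$ — this is the support size of $\DPc(n,k,\vec{a})$. The key point is that this count does not depend on $\vec{a}$: for any two assignments $\vec{a},\vec{a}'$, flipping the signs of every literal on a variable where $\vec{a}$ and $\vec{a}'$ disagree is a bijection from the set of clauses satisfied by $\vec{a}$ to the set satisfied by $\vec{a}'$.

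Next I would count the pairs. Let $\mathcal{P}$ be the set of pairs $(\phi,\vec{a})$ with $\phi$ an $m$-clause formula in the support of $D_{replace}$ and $\vec{a}$ a satisfying assignment of $\phi$, so that, by the definition just given, $\DPa(m,n,k)$ is the uniform distribution on $\mathcal{P}$. For a fixed $\vec{a}$, a formula $\phi$ is satisfied by $\vec{a}$ iff each of its $m$ clauses is, so there are exactly $\big((2^k-1)n^k\big)^m$ such formulas, independently of $\vec{a}$. Thus $\mathcal{P}$ partitions into the $2^n$ slices $\mathcal{P}_{\vec{a}} = \{(\phi,\vec{a}) : \vec{a} \text{ satisfies } \phi\}$, each of size $\big((2^k-1)n^k\big)^m$, and $|\mathcal{P}| = 2^n\big((2^k-1)n^k\big)^m$.

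Finally I would read off the sampling procedure. A uniform draw from $\mathcal{P}$ lands in slice $\mathcal{P}_{\vec{a}}$ with probability $|\mathcal{P}_{\vec{a}}|/|\mathcal{P}| = 2^{-n}$, so the marginal of the assignment coordinate is $U_{\vec{a}}(n)$; and conditioned on the slice, $\phi$ is uniform over the $\big((2^k-1)n^k\big)^m$ formulas satisfied by $\vec{a}$, which — since the clauses are independent and each is uniform over the $(2^k-1)n^k$ clauses satisfying $\vec{a}$ — is exactly $\DPa(m,n,k,\vec{a})$, as already observed in its definition. That is precisely the claimed two-step procedure. There is no genuine obstacle here; the only step requiring a moment's thought is the equal-slice-size claim, i.e.\ that the number of $m$-clause formulas satisfied by $\vec{a}$ is independent of $\vec{a}$, which follows from the sign-flipping symmetry above.
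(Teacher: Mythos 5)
Your proof is correct and follows the same underlying reasoning as the paper's one-line proof (that the procedure uniformly generates all $(\phi,\vec{a})$ pairs); you have simply made the equal-slice-size counting argument explicit, which the paper leaves implicit.
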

\begin{proof}
This will uniformly generate all pairs of $\vec{a}$ and $\phi$ where $\vec{a}$ satisfies $\phi$. 
\end{proof}

\begin{lemma}
The number of formulas of length $m$ (or, equivalently, the size of the support of $D_R(m, n, k)$) is $c_{r}(m)=(n^k \cdot 2^k)^m$.

The number of pairs $(\phi , \vec{a})$ where $\phi$ is a formula of length $m$ and $\vec{a}$ is a satisfying assignment (or, equivalently, the size of the support of the planted distribution) is $c_{p}(m)= 2^n(n^k \cdot (2^k-1))^m$. 

The ratio $c_{p}(m)/c_{r}(m) \leq 2^n \cdot e^{-m/2^k}$.

\label{lem:bassic numbers}
\end{lemma}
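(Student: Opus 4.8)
The plan is to establish all three claims by direct counting. First I would count $c_r(m)$, the size of the support of $D_R(m,n,k)$: such a formula is an ordered list of $m$ clauses, each lying in the support of $D_{replace}(n,k)$. By the definition of $D_{replace}$, a clause is a $k$-tuple of literals, each obtained by choosing one of the $n$ variables and one of the $2$ signs independently (with replacement), so there are $n^k 2^k$ distinct clauses, and hence $c_r(m) = (n^k 2^k)^m$ distinct length-$m$ formulas.

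Next, for $c_p(m)$ I would fix an assignment $\vec{a} \in \{0,1\}^n$, count the number of length-$m$ formulas satisfied by $\vec{a}$, and then sum over the $2^n$ choices of $\vec{a}$ (this matches the definition of $\DPa(m,n,k)$ as the uniform distribution over pairs $(\phi,\vec{a})$ with $\vec{a}$ satisfying $\phi$). A single clause from $D_{replace}(n,k)$ is \emph{not} satisfied by $\vec{a}$ exactly when all $k$ of its literals evaluate to false under $\vec{a}$; for each of the $k$ positions this leaves $n$ choices of variable but pins down the sign, so exactly $n^k$ clauses are falsified by $\vec{a}$, leaving $n^k 2^k - n^k = n^k(2^k-1)$ clauses satisfied by $\vec{a}$. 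Therefore the number of length-$m$ formulas satisfied by a fixed $\vec{a}$ is $(n^k(2^k-1))^m$, and summing over all $\vec{a}$ gives $c_p(m) = 2^n\,(n^k(2^k-1))^m$. Note this counts each pair $(\phi,\vec{a})$ once, so a formula with several satisfying assignments contributes with multiplicity — which is precisely the support size of the planted distribution over pairs.

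Finally, for the ratio I would just divide and simplify: $c_p(m)/c_r(m) = 2^n\big((2^k-1)/2^k\big)^m = 2^n(1-2^{-k})^m$, and then apply the elementary inequality $1+x \le e^x$ with $x = -2^{-k}$ to get $(1-2^{-k})^m \le e^{-m/2^k}$, hence $c_p(m)/c_r(m) \le 2^n e^{-m/2^k}$.

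There is essentially no hard step here; the argument is pure combinatorial bookkeeping. The only points requiring a little care are (i) observing that "falsified by $\vec{a}$" determines the sign of each literal, so exactly $n^k$ of the $n^k 2^k$ clauses are falsified, and (ii) remembering that $c_p(m)$ counts formula--assignment pairs with multiplicity rather than distinct satisfiable formulas.
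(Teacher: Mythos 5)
Your proof is correct and follows essentially the same counting argument as the paper: count all $((2n)^k)^m = (n^k 2^k)^m$ formulas, count $2^n(n^k(2^k-1))^m$ pairs by fixing $\vec{a}$ and noting exactly $n^k$ clauses are falsified, and bound the ratio via $1-x\le e^{-x}$. The only difference is that you spell out why a fixed assignment falsifies exactly $n^k$ clauses, a detail the paper leaves implicit by referencing the definition of $\DPc$.
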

\begin{proof}

The number of formulas of length $m$ is $c_{r(m)}=((2n)^k)^m$ because for each of the $m$ clauses, there are $k$ choices to be made from $2n$ possible literals.
	 
The support of the planted distribution has size $2^n(n^k \cdot (2^k-1))^m$ because there are $2^n$ possible choices of assignment, and for each of those, there are $(n^k \cdot (2^k-1))^m$ possible choices of formula that are satisfied by that assignment.

We can demonstrate the lemma's final statement as follows:
%Let $k>10$ then $2^k>1024$ then $(1-1/2^k)^{2^k} \leq 0.3677$
\begin{align}
c_{p}(m)/c_{r}(m) &= 2^n(n^k \cdot (2^k-1))^m/(n^k \cdot 2^k)^m\\
c_{p}(m)/c_{r}(m) &= 2^n((2^k-1)/2^k)^m\\
c_{p}(m)/c_{r}(m) &= 2^n \cdot (1-1/2^k)^{m}\\
c_{p}(m)/c_{r}(m) &\leq 2^n \cdot e^{-m/2^k}
\end{align}
\end{proof}

\newcommand{\Sg}{S_{\gamma}^{m,n,k}}

The following lemmas allow us to lower-bound the probability of a formula being bad when drawn from $D_R$, given a bound on the probability of a formula being bad when drawn from $D_P$. 

\begin{lemma}	
Let $\Sg$ be an arbitrary subset of satisfiable formulas with exactly $m$ clauses, at most $n$ variables, and $k$ literals per clause.	

Let $p_{sat}$ be the probability that a $\phi$ drawn from $D_R(m,n,k)$ is satisfiable. 	
	
If the probability that $\phi$ is in $\Sg$ is less than or equal to $p_{\gamma}$ when $(\phi,\vec{a})$ is drawn from $\DPa(m,n,k)$, then the probability that  $\phi$ is in $\Sg$ is at most $p_{\gamma}\cdot (2^n \cdot e^{-m/2^k})$ when $\phi$ is drawn from $D_R(m,n,k)$.
\label{lem:plantedToAllReal}
\end{lemma}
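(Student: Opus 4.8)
The plan is to count, for each satisfiable formula $\phi$, how many pairs $(\phi,\vec{a})$ it contributes to the support of the planted distribution, translate the probability bound in $\DPa$ into a pure cardinality bound on $\Sg$, and then divide by the (larger) support size of $D_R$ to recover a probability bound there.

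First I would recall, via the corollary preceding this lemma and Lemma~\ref{lem:bassic numbers}, that $\DPa(m,n,k)$ is uniform over the set of pairs $(\phi,\vec{a})$ with $\vec{a}$ a satisfying assignment of $\phi$, and that this support has size $c_p(m) = 2^n(n^k(2^k-1))^m$. For a fixed satisfiable $\phi$, the number of pairs in this support with first coordinate $\phi$ is exactly the number $N(\phi) \ge 1$ of satisfying assignments of $\phi$. Hence
$$
Pr_{(\phi,\vec{a}) \sim \DPa}[\phi \in \Sg] \;=\; \frac{1}{c_p(m)} \sum_{\phi \in \Sg} N(\phi) \;\ge\; \frac{1}{c_p(m)} \sum_{\phi \in \Sg} 1 \;=\; \frac{|\Sg|}{c_p(m)},
$$
where the inequality uses $N(\phi) \ge 1$, valid since every formula in $\Sg$ is satisfiable. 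Together with the hypothesis $Pr_{\DPa}[\phi \in \Sg] \le p_{\gamma}$ this yields $|\Sg| \le p_{\gamma}\, c_p(m)$.

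Next, since $D_R(m,n,k)$ is uniform over its support of size $c_r(m) = (n^k 2^k)^m$ (again Lemma~\ref{lem:bassic numbers}), and since $\phi \in \Sg$ implies $\phi$ lies in that support,
$$
Pr_{\phi \sim D_R}[\phi \in \Sg] \;=\; \frac{|\Sg|}{c_r(m)} \;\le\; p_{\gamma}\,\frac{c_p(m)}{c_r(m)} \;\le\; p_{\gamma} \cdot 2^n e^{-m/2^k},
$$
the last inequality being exactly the ratio bound $c_p(m)/c_r(m) \le 2^n e^{-m/2^k}$ established in Lemma~\ref{lem:bassic numbers}. This is the claimed conclusion.

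There is no genuine obstacle here; the only point requiring care is the double-counting step, namely that a formula with many satisfying assignments receives proportionally more weight under $\DPa$, so that passing from "the planted probability of $\Sg$" to "$|\Sg|/c_p(m)$" is a valid lower bound (it would fail if some $\phi \in \Sg$ contributed fewer than one pair, which cannot happen). I would also note that $p_{sat}$, though defined in the statement, is not needed for this bound; it enters only if one wants the analogous estimate conditioned on $\phi$ being satisfiable under $D_R$, obtained by dividing the right-hand side by $p_{sat}$.
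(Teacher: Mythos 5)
Your proof is correct and takes essentially the same approach as the paper's: both bound $|\Sg|$ by $p_\gamma c_p(m)$ using the observation that each satisfiable $\phi$ contributes at least one pair to the support of $\DPa$, then divide by $c_r(m)$ and apply the ratio bound $c_p/c_r \le 2^n e^{-m/2^k}$ from Lemma~\ref{lem:bassic numbers}. Your write-up makes the counting via $N(\phi)$ slightly more explicit, and your closing remark that $p_{sat}$ is not needed here is also consistent with the paper's proof.
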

\begin{proof}
By Lemma \ref{lem:bassic numbers} we have that the support of $\DPa(m,n,k)$ has size $c_p =2^n (n^k(2^k-1))^m$. Furthermore, $\DPa(m,n,k)$ draws uniformly over this support. 
	
By Lemma \ref{lem:bassic numbers} we have that the support of $D_R(m,n,k)$ has size $c_r =(n^k2^k)^m$. Furthermore, $D_R(m,n,k)$ draws uniformly over this support. 

Because $D_R(m,n,k)$ draws uniformly over formulas $\phi$ in its support, the probability that a $\phi$ drawn from $D_R(m,n,k)$ is in $\Sg$ is equal to $|\Sg|/c_r$.

Draw a tuple $(\phi, \vec{a})$ from $\DPa(m,n,k)$. Given our condition on $p_\gamma$ in the lemma, the total number of tuples $(\phi, \vec{a})$ in the support of $\DPa(m, n, k)$ where $\phi \in \Sg$ is less than or equal to $c_p p_\gamma.$
	
$|\Sg| \le c_p p_\gamma$, because every $\phi \in \Sg$ appears in at least one tuple in $\DPa(m,n,k)$ (and possibly in several). We know this because $|\Sg|$ can contain only satisfied formulas, and every satisfied formula must appear at least once in a tuple in $\DPa(m,n,k)$. For a visual representation of the connection between $\DPa(m,n,k)$ and $D_R(m,n,k)$, see Figure \ref{fig:sgammaCPCR}.

%If the probability that $\phi$ when drawn from $\DPa(m,n,k)$ is in $\Sg$ is less than or equal to $p_{\gamma}$, it follows that $|\Sg| \le p_\gamma c_p.$   
    
%In the support of $\DPa(m,n,k)$ each  $\phi$ occurs $t$ times in $(\phi, \vec{a})$ tuples if $\phi$ has $t$ satisfying assignments. 

%The average number of satisfying assignments is $2^n \cdot e^{-m/2^k}/p_{sat}$. 

%At worst every $(\phi,\vec{a})$ tuple where $\phi \in S_{\gamma}$ involves a $\phi$ with exactly one satisfying assignment. So the number of bad $\phi$ is at most $p_{\gamma} \cdot c_p$.

Therefore, the probability that $\phi$ drawn from $D_R(m,n,k)$ are in $\Sg$ is $|\Sg|/c_r \le p_{\gamma} \cdot c_p/c_r$. From Lemma~\ref{lem:bassic numbers}, we have that $|\Sg| \leq p_{\gamma}\cdot (2^n \cdot e^{-m/2^k})$.
\end{proof}

\begin{figure}
	\centering
	\includegraphics[width=4in]{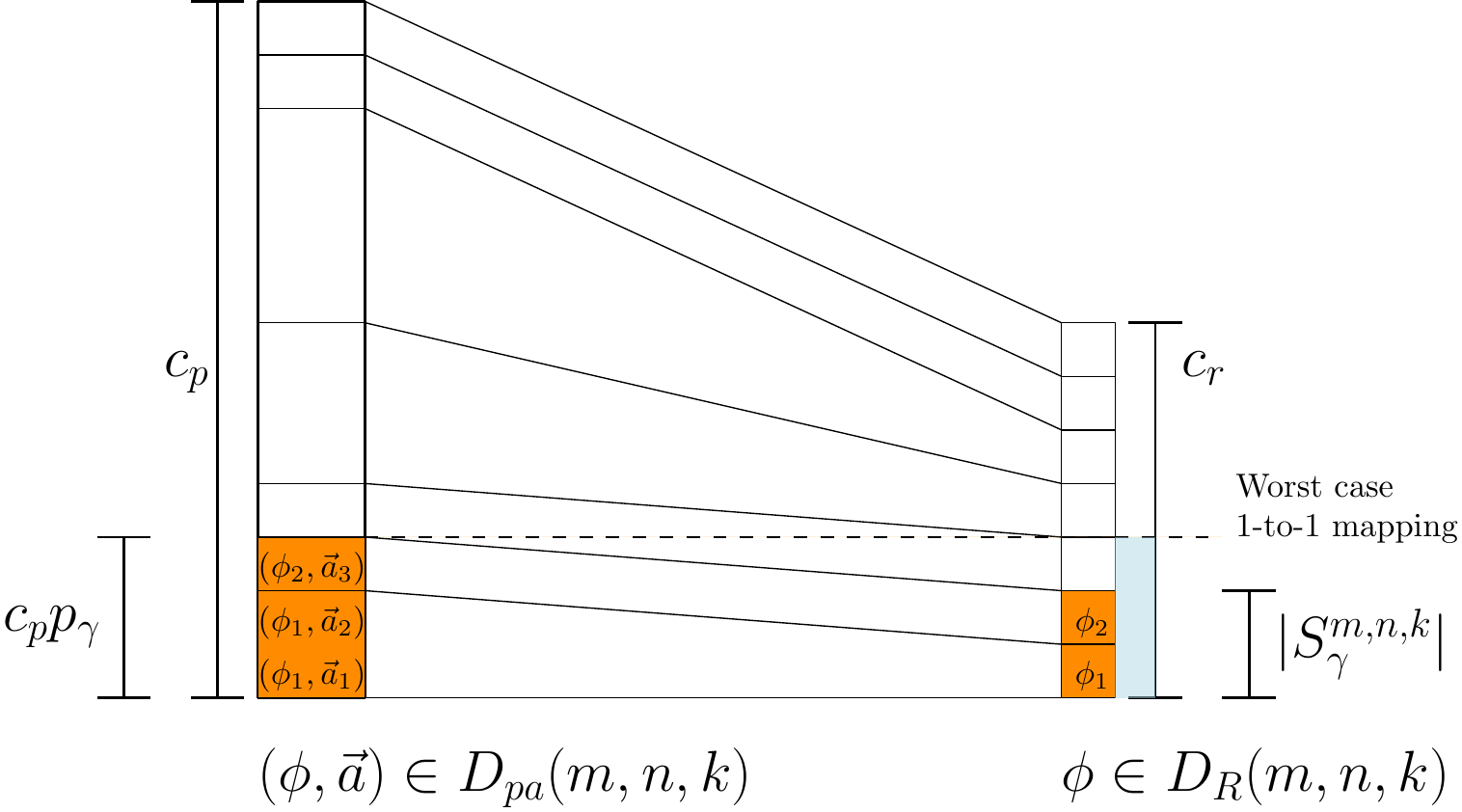}
	\caption{The left side of this diagram represents all pairs of $(\phi, \vec{a})$ in the support of $\DPa(m,n,k)$. The right side of this diagram represents all $\phi$ in the support of $D_R(m,n,k)$. The orange in the diagram represents pairs $(\phi,\vec{a})$ where $\phi \in \Sg$ (on the left) and formulas $\phi$ in the set $\Sg$ (on the right). The diagram shows an example of a possible mapping. The dotted line shows the worst-case 1-to-1 mapping that we assume to get our upper bound on $|\Sg|$.}
	\label{fig:sgammaCPCR}
\end{figure}

\begin{corollary}
	Let $p_{sat}$ be the probability that a $\phi$ drawn from $D_R(m,n,k)$ is satisfiable.
    
	Let $\Sg$ be a set of formulas which only contains formulas with at least one satisfying assignment. 
	
	If the probability that $\phi$ is in $\Sg$ is at most $p_{\gamma}$ when $(\phi,\vec{a})$ is drawn from $\DPa(m,n,k)$, then  the probability that $\phi$ is in $\Sg$ is at most $p_{\gamma}\cdot (2^n \cdot e^{-m/2^k}) \cdot \frac{1}{p_{sat}}$ when $\phi$ is drawn from the satisfied formulas of $D_S(m,n,k)$.
	\label{lem:plantedToSatReal}
\end{corollary}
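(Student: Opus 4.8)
The plan is to reduce Corollary~\ref{lem:plantedToSatReal} to Lemma~\ref{lem:plantedToAllReal} by observing that the satisfied formulas of $D_S(m,n,k)$ are just a conditional version of $D_R(m,n,k)$. Concretely, $D_S(m,n,k)$ is the uniform distribution over satisfiable formulas with $m$ clauses, which is exactly the distribution obtained by drawing $\phi \sim D_R(m,n,k)$ and conditioning on $\phi$ being satisfiable. So for any event $E$ restricted to satisfiable formulas,
$$\Pr_{\phi \sim D_S(m,n,k)}[\phi \in E] = \frac{\Pr_{\phi \sim D_R(m,n,k)}[\phi \in E]}{\Pr_{\phi \sim D_R(m,n,k)}[\phi \text{ satisfiable}]} = \frac{\Pr_{\phi \sim D_R(m,n,k)}[\phi \in E]}{p_{sat}}.$$

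First I would invoke Lemma~\ref{lem:plantedToAllReal} with $\Sg$ as given (noting $\Sg$ contains only satisfiable formulas, so the hypothesis applies): if $\Pr_{(\phi,\vec{a}) \sim \DPa(m,n,k)}[\phi \in \Sg] \le p_\gamma$, then $\Pr_{\phi \sim D_R(m,n,k)}[\phi \in \Sg] \le p_\gamma \cdot (2^n e^{-m/2^k})$. Then I would apply the conditioning identity above with $E = \Sg$ to divide through by $p_{sat}$, yielding $\Pr_{\phi \sim D_S(m,n,k)}[\phi \in \Sg] \le p_\gamma \cdot (2^n e^{-m/2^k}) \cdot \frac{1}{p_{sat}}$, which is exactly the claimed bound.

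The only subtlety — and the one thing worth spelling out carefully — is justifying that $D_S(m,n,k)$ really is $D_R(m,n,k)$ conditioned on satisfiability. This follows because $D_R(m,n,k)$ places equal mass on every length-$m$ formula in the support of $D_{replace}$ (by Lemma~\ref{lem:bassic numbers}, each of the $c_r(m) = (n^k 2^k)^m$ formulas gets mass $1/c_r(m)$), so conditioning on the satisfiable ones redistributes mass uniformly over precisely the satisfiable formulas, which is the definition of $D_S(m,n,k)$. Given that, the rest is a one-line division, so there is no real obstacle here; the corollary is essentially bookkeeping on top of Lemma~\ref{lem:plantedToAllReal}. One could alternatively prove it directly by the same counting argument as in Lemma~\ref{lem:plantedToAllReal}, replacing the denominator $c_r$ with the number of satisfiable length-$m$ formulas, $p_{sat} \cdot c_r$; I would mention this as the more self-contained route if a referee objects to the conditioning phrasing.
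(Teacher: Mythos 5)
Your proof is correct and matches the paper's: both invoke Lemma~\ref{lem:plantedToAllReal} to get the bound over $D_R(m,n,k)$, then pass to $D_S(m,n,k)$ by noting that it is the uniform distribution on the satisfiable subset of $D_R$'s support, which costs at most a factor of $1/p_{sat}$. Your framing via conditioning and the paper's framing via relative support sizes are the same one-line argument.
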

\begin{proof}
Using Lemma \ref{lem:plantedToAllReal}, we have that the probability of $\phi \in \Sg$ over $D_R(m,n,k)$ is $\leq p_{\gamma}\cdot (2^n \cdot e^{-m/2^k})$. 

The support of $D_R(m,n,k)$ is $1/p_{sat}$ times larger than the support of $D_S(m,n,k)$, and both are uniform over their support. Furthermore, the support of $D_S(m,n,k)$ is a subset of the support of $D_R(m,n,k)$. The prevalence of $\phi \in \Sg$ can therefore only increase by a factor of $1/p_{sat}$.
\end{proof}

%Now we plug in the value of $m$ at the threshold. At the threshold, $m = $

\begin{corollary}
	If $m\geq  (2^k \ln(2) - c)n$ for some constant $c$, then the following holds:
	
	Let $p_{sat}$ be the probability that a $\phi$ drawn from $D_R(m,n,k)$ is satisfiable. 	
	Let $\Sg$ be a set of formulas which only contains formulas with at least one satisfying assignment. 
	
	If the probability that $\phi$ is in $\Sg$ is at most $p_{\gamma}$ when $(\phi,\vec{a})$ is drawn from $\DPa(m,n,k)$, then  the probability that  $\phi$ is in $\Sg$ is at most $p_{\gamma}\cdot (e^{cn/2^k}) \cdot \frac{1}{p_{sat}}$ when $\phi$ is drawn from $D_S(m,n,k)$.
	\label{lem:plantedToSatRealSetM}
\end{corollary}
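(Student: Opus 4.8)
The plan is to obtain this as a one-line specialization of Corollary~\ref{lem:plantedToSatReal}. That corollary already states that if $\phi$ is in $\Sg$ with probability at most $p_\gamma$ when $(\phi,\vec{a}) \sim \DPa(m,n,k)$, then $\phi \in \Sg$ with probability at most $p_\gamma \cdot (2^n \cdot e^{-m/2^k}) \cdot \frac{1}{p_{sat}}$ when $\phi$ is drawn from $D_S(m,n,k)$. So the only work left is to show that the density hypothesis $m \geq (2^k\ln(2)-c)n$ turns the factor $2^n \cdot e^{-m/2^k}$ into at most $e^{cn/2^k}$.

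To do this I would write $2^n = e^{n\ln(2)}$, so that $2^n \cdot e^{-m/2^k} = e^{\,n\ln(2) - m/2^k}$. Substituting $m \geq (2^k\ln(2)-c)n$ gives $m/2^k \geq n\ln(2) - cn/2^k$, hence $n\ln(2) - m/2^k \leq cn/2^k$, and exponentiating yields $2^n \cdot e^{-m/2^k} \leq e^{cn/2^k}$. Plugging this bound into the conclusion of Corollary~\ref{lem:plantedToSatReal} gives exactly $p_\gamma\cdot (e^{cn/2^k})\cdot\frac{1}{p_{sat}}$, as claimed.

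There is essentially no obstacle: all the real content (counting the supports of $\DPa$ and $D_R$, the worst-case injective mapping from $\Sg$ into the planted pairs, and the extra $1/p_{sat}$ loss when restricting from $D_R$ to $D_S$) is already packaged in Lemma~\ref{lem:bassic numbers} and Corollary~\ref{lem:plantedToSatReal}. The only point worth a sentence of care is that the hypothesis is $m \geq (2^k\ln(2)-c)n$ rather than an equality; since a larger $m$ only decreases $e^{-m/2^k}$, the inequality $2^n \cdot e^{-m/2^k} \leq e^{cn/2^k}$ still holds, so the stated bound is valid for the whole range of admissible $m$.
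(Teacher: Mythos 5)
Your proposal is correct and takes essentially the same approach as the paper: both invoke Corollary~\ref{lem:plantedToSatReal} and then simplify $2^n \cdot e^{-m/2^k}$ to $e^{cn/2^k}$ using $m \geq (2^k\ln(2)-c)n$. The extra sentence you add about monotonicity in $m$ is a nice touch but not a departure from the paper's argument.
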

\begin{proof}
From Corollary \ref{lem:plantedToSatReal}, we have that the probability $q$ that $\phi$ is in $\Sg$ when $\phi$ is drawn from $D_S(m,n,k)$ is at most $p_{\gamma}\cdot (2^n \cdot e^{-m/2^k}) \cdot \frac{1}{p_{sat}}$.

	\begin{align}
	q& \leq p_{\gamma}\cdot (2^n \cdot e^{-m/2^k}) \cdot \frac{1}{p_{sat}}\\
	&\leq p_{\gamma}\cdot (2^n \cdot e^{-((2^k \ln(2) - c)n)/2^k}) \cdot \frac{1}{p_{sat}}\\
	&\leq p_{\gamma}\cdot (2^n \cdot 2^{-n} \cdot e^{cn/2^k}) \cdot \frac{1}{p_{sat}}\\
	&\leq p_{\gamma}\cdot e^{cn/2^k} \cdot \frac{1}{p_{sat}}
	\end{align}
\end{proof}

\paragraph{Putting Everything Together to get Algorithm Correctness}

\begin{corollary}
Let $k\geq \kstr$.\\	
Let $p_{sat}$ be the probability that a $\phi$ drawn from $D_R(m,n,k)$ is satisfiable. 

Let $m=(\ln(2)2^k-c)n$. Let $T = (1-\frac{1-(1-\alpha)^{2k}}{2^k-1})m)$.
	
If the probability that $\phi$ is in 
$S_{bad}(m,n,k,\alpha,T)$ 
is at most $p_{bad}$ when $(\phi,\vec{a})$ is drawn from $\DPa(m,n,k)$, then the probability that $\phi$ is in $S_{bad}(m,n,k,\alpha,T)$ is at most $(p_{bad}\cdot e^{cn/2^k} \cdot \frac{1}{p_{sat}})$ when $\phi$ is drawn from the satisfied formulas of $D_S(m,n,k)$.
\label{cor:helpfulPlantedTo}
\end{corollary}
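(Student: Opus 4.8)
The plan is to observe that this statement is nothing more than the instantiation of Corollary~\ref{lem:plantedToSatRealSetM} with the generic set $\Sg$ taken to be $S_{bad}(m,n,k,\alpha,T)$. So the proof will be essentially a one-line invocation, with the substantive content deferred to that corollary (and, upstream of it, to Lemma~\ref{lem:bassic numbers}, Lemma~\ref{lem:plantedToAllReal}, and Corollary~\ref{lem:plantedToSatReal}, which supply the support-size ratio $c_p/c_r \le 2^n e^{-m/2^k}$, the worst-case one-to-one embedding of the planted support into the support of $D_R$, and the extra factor of $1/p_{sat}$ coming from conditioning on satisfiability).

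Concretely, I would first check that the hypotheses of Corollary~\ref{lem:plantedToSatRealSetM} are met. It requires $m \ge (2^k\ln(2) - c)n$ for some constant $c$; here $m = (\ln(2)2^k - c)n$ satisfies this with equality. It also requires that the set playing the role of $\Sg$ contain only satisfiable formulas (with the right number of clauses, literals, and variables); the very first bullet in the definition of $S_{bad}(m,n,k,\alpha,T)$ guarantees that every $\phi \in S_{bad}$ has at least one satisfying assignment, and the third bullet fixes the formula shape, so $S_{bad}$ is a legitimate choice of $\Sg$. Finally, the hypothesis of the present corollary is exactly that $\Pr_{(\phi,\vec a)\sim\DPa(m,n,k)}[\phi \in S_{bad}] \le p_{bad}$, i.e. the quantity called $p_\gamma$ in Corollary~\ref{lem:plantedToSatRealSetM} is $p_{bad}$.

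Then I would simply quote the conclusion of Corollary~\ref{lem:plantedToSatRealSetM}: with $\Sg = S_{bad}(m,n,k,\alpha,T)$ and $p_\gamma = p_{bad}$, the probability that $\phi \in S_{bad}(m,n,k,\alpha,T)$ when $\phi$ is drawn from $D_S(m,n,k)$ (the uniform distribution over satisfied formulas) is at most $p_{bad}\cdot e^{cn/2^k}\cdot \frac{1}{p_{sat}}$, which is precisely the claimed bound. There is no real obstacle in this proof; the only point that deserves an explicit sentence is the verification that $S_{bad}$ is a set of satisfiable formulas of the correct shape, so that the worst-case counting argument behind Lemma~\ref{lem:plantedToAllReal} (each $\phi \in \Sg$ appears in at least one planted pair) applies without change.
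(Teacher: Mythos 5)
Your proposal is correct and matches the paper's own proof, which likewise consists of a direct invocation of Corollary~\ref{lem:plantedToSatRealSetM} with $\Sg = S_{bad}(m,n,k,\alpha,T)$ and $p_\gamma = p_{bad}$. Your explicit verification that $S_{bad}$ contains only satisfiable formulas of the required shape is a useful and correct elaboration of the paper's terse remark that each bad $\phi$ in the planted support maps to at most one bad $\phi$ in the real support.
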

\begin{proof}
This comes from applying Corollary \ref{lem:plantedToSatRealSetM}.
Every $\phi$ that in $S_{bad}$ in the planted distribution can map to \emph{at most} one formula in $S_{bad}$ in the real distribution. 
\end{proof}

\begin{lemma}
Let $k\geq \kstr$. Let $m=(\ln(2)2^k-c)n$. \\
Let $p_{sat}$ be the probability that a formula drawn from $D_R(m, n, k)$ is satisfiable. 		
	
Let $\phi$ be a formula drawn from $D_S(m,n,k)$. Let $S$ be a sequence of $2n^2 2^n/\binom{n}{\alpha n}$ assignments, each drawn independently and uniformly at random from $U_{\vec{a}}(n)$. 

Then, with probability at least $1-e^{cn/2^k}2 e^{-(1-\alpha)^{2k}\ln(2)n/(16)}\frac{1}{p_{sat}}-e^{-n^2}$, there is an assignment $\vec{h}\in S$ that simultaneously 
\begin{enumerate}[(a)]
	\item has Hamming distance at most $\alpha n$ from a satisfying assignment to $\phi$ and
	\item satisfies at least $(1-\frac{1-(1-\alpha)^{2k}}{2^k-1})m$ clauses in $\phi$. 
\end{enumerate}
\label{lem:thisIsALabel}
\end{lemma}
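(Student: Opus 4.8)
The plan is to split the failure event into a \emph{structural} part --- the sampled formula $\phi$ is pathological, in that no satisfying assignment of $\phi$ has many ``true positives'' in its Hamming ball --- and a \emph{sampling} part --- $\phi$ is fine but the $2n^2 2^n/\binom{n}{\alpha n}$ independent uniform draws happen to miss all of those true positives --- bound each separately, and finish with a union bound. The structural part is handled by the planted-distribution estimate of Lemma~\ref{cor:badFormulasBoundedInPlant} together with the planted-to-uniform transfer of Corollary~\ref{cor:helpfulPlantedTo}; the sampling part is a one-line Chernoff / coupon-collector estimate.

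First I would bound the structural failure in the planted distribution. By Lemma~\ref{cor:badFormulasBoundedInPlant}, drawing $\vec a$ uniformly and then $\phi\sim\DPa(m,n,k,\vec a)$ (equivalently $(\phi,\vec a)\sim\DPa(m,n,k)$), with probability at least $1-2e^{-(1-\alpha)^{2k}\ln(2)n/16}$ at least half the assignments in the ball $H(\vec a,\alpha,n)$ satisfy at least $T:=\bigl(1-\frac{1-(1-\alpha)^{2k}}{2^k-1}\bigr)m$ clauses of $\phi$ (since such an assignment leaves at most $\tilde T=m\frac{1-(1-\alpha)^{2k}}{2^k-1}=m-T$ clauses unsatisfied). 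Because membership of $\phi$ in $\Sbad(m,n,k,\alpha,T)$ requires \emph{every} satisfying assignment to be deficient --- in particular the planted $\vec a$ --- this yields $Pr_{(\phi,\vec a)\sim\DPa(m,n,k)}[\phi\in\Sbad(m,n,k,\alpha,T)]\le 2e^{-(1-\alpha)^{2k}\ln(2)n/16}$.

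Next I would transfer this to $D_S(m,n,k)$. The hypotheses of Corollary~\ref{cor:helpfulPlantedTo} hold ($k\ge\kstr$, $m=(\ln(2)2^k-c)n$, and $\Sbad$ contains only satisfiable formulas), so with $p_{bad}=2e^{-(1-\alpha)^{2k}\ln(2)n/16}$ we get
\[
Pr_{\phi\sim D_S(m,n,k)}\bigl[\phi\in\Sbad(m,n,k,\alpha,T)\bigr]\;\le\;2e^{-(1-\alpha)^{2k}\ln(2)n/16}\cdot e^{cn/2^k}\cdot\frac{1}{p_{sat}},
\]
which is precisely the first term subtracted in the lemma. Conditioned on $\phi\notin\Sbad$, there is a satisfying assignment $\vec a^\star$ of $\phi$ for which at least $\tfrac12|H(\vec a^\star,\alpha,n)|\ge\tfrac12\binom{n}{\alpha n}$ assignments lie within Hamming distance $\alpha n$ of $\vec a^\star$ and also satisfy at least $T$ clauses; each such assignment meets both (a) and (b). Hence the true-positive set has density at least $\binom{n}{\alpha n}/2^{n+1}$ in $\{0,1\}^n$, so each of the $|S|=2n^2 2^n/\binom{n}{\alpha n}$ independent uniform draws hits it with probability at least $\binom{n}{\alpha n}/2^{n+1}$, and the probability that all of them miss is at most $\bigl(1-\binom{n}{\alpha n}/2^{n+1}\bigr)^{|S|}\le e^{-n^2}$. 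A union bound over $\{\phi\in\Sbad\}$ and this miss event gives the stated probability.

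The one genuinely delicate point is the bookkeeping in the first two steps: $\Sbad$ quantifies over \emph{all} satisfying assignments (whereas Lemma~\ref{cor:badFormulasBoundedInPlant} speaks of the planted one) and is stated with a slightly different radius/threshold convention, so one must check carefully that ``the planted assignment is non-deficient'' really certifies $\phi\notin\Sbad$, and that ``at least half of $H(\vec a^\star,\alpha,n)$ is above threshold'' really supplies the $\tfrac12\binom{n}{\alpha n}$ true positives that the sampling step consumes. Once these quantifiers are matched up, the remaining estimates --- the Chernoff/coupon-collector bound and verifying the hypotheses of Corollary~\ref{cor:helpfulPlantedTo} --- are routine.
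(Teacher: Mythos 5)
Your proposal follows the same decomposition and invokes the same two key lemmas (Lemma~\ref{cor:badFormulasBoundedInPlant} for the planted bound, Corollary~\ref{cor:helpfulPlantedTo} for the transfer to $D_S$) and the same coupon-collector finish as the paper's proof, so this is essentially the paper's argument written out in more detail. The one place you are more careful than the paper is in flagging the mismatch between the conventions of $\Sbad$ (which is stated with $H_=$ and a threshold written as ``satisfy at least $m-T$'') and Lemma~\ref{cor:badFormulasBoundedInPlant} (which uses $H$ and ``leave more than $\tilde T$ unsatisfied''); that caution is warranted, since the paper's $\Sbad$ definition as literally written contains what appears to be a typo ($m-T=\tilde T$ is a lower bound far weaker than intended — it should read ``leave at most $m-T$ unsatisfied,'' i.e.\ satisfy at least $T$ clauses), but resolving it the way you indicate is exactly how the paper implicitly uses it.
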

\begin{proof}
First we apply both Corollary \ref{cor:badFormulasBoundedInPlant} and Corollary \ref{cor:helpfulPlantedTo} to show that with probability at least $1-e^{cn/2^k}2 e^{-(1-\alpha)^{2k}\ln(2)n/(16)}\frac{1}{p_{sat}}$, the set $X$ of assignments $\vec{h}$ that satisfy both (a) and (b) is of size $|X|\geq \frac{1}{2}|H(\vec{a},\alpha,n)|$.

Now, when we are sampling uniformly at random from this set, the probability that we draw no elements from $X$ is at most:
$$\left(1-\frac{1}{2}|H(\vec{a},\alpha,n)|\cdot \frac{1}{2^n}\right)^{2n^2 2^n/\binom{n}{\alpha n}} \leq \left(1-\frac{1}{2}\binom{n}{\alpha n}\cdot \frac{1}{2^n}\right)^{2n^2 2^n/\binom{n}{\alpha n}} \leq e^{-n^2}.$$

So the probability that a vector $\vec{h}$ that satisfies both (a) and (b) is drawn is at least $1-e^{cn/2^k}2 e^{-(1-\alpha)^{2k}\ln(2)n/(16)}\frac{1}{p_{sat}}-e^{-n^2}$ by the union bound.
\end{proof}

Now, for any given \emph{fixed} choice of $m$ we have the desired result, which is that the false negative rate is very low. However, recall that by the definition of the threshold used by Ding, Sly and Sun~\cite{threasholdSAT}, the value of $m$ is chosen randomly from a Poisson distribution. We will now show that with high probability, $m$ takes a value such that our algorithms still has a low false negative rate.

\begin{lemma}
Let $k\geq \kstr$.\\
Let $\phi$ be a formula drawn uniformly at random from $D_{\Phi}(n,k)$. Let $S$ be a sequence of $2n^2 2^n/\binom{n}{\alpha n}$ assignments drawn independently and uniformly at random from $U_{\vec{a}}(n)$. 

Let $1/p_{sat}^{\star} = \max_m\left(\frac{1}{p_{sat,m}}\right)$ be the maximum value of $1/p_{sat,m}$ for $m\in [(\delta(k)-1)n,(\delta(k)+1)n]$, where $p_{sat,m}$ is the probability a formula is satisfiable when the formula is drawn from $D_{R}(n,m,k)$.

Then with probability at least
$$1-e^{-n/(3\ln(2)2^k)}-\frac{e^{3n/2^k}2 e^{-(1-\alpha)^{2k}\ln(2)n/(16)}}{p_{sat}^\star}-e^{-n^2},$$
 there is an assignment $\vec{h}\in S$ that simultaneously 
\begin{enumerate}[(a)]
	\item has Hamming distance at most $\alpha n$ from a satisfying assignment to $\phi$ and
	\item satisfies at least $(1-\frac{1-(1-\alpha)^{2k}}{2^k-1})m$ clauses in $\phi$. 
\end{enumerate}
\label{lem:phiIsSolvedIfSatProbOkay}
\end{lemma}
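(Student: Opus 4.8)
The plan is to deduce the statement from the fixed-$m$ version, Lemma~\ref{lem:thisIsALabel}, by conditioning on the Poisson-chosen number of clauses landing in the typical window $[(d_k-1)n,(d_k+1)n]$ and then taking a worst case over that window. Drawing $\phi\sim D_{\Phi}(n,k)$ amounts to drawing $m\sim\mathrm{Pois}[d_kn]$ and then (conditioned on that $m$, and on satisfiability) a formula from $D_S(m,n,k)$, so the first step is to invoke Lemma~\ref{lem:boundingM} --- whose hypothesis $\eps_k<\tfrac{1+\ln(2)}{2}$ holds because $k\geq\kstr\geq k_{\eps}$ --- giving that $m\notin[(d_k-1)n,(d_k+1)n]$ with probability at most $2\cdot 2^{-n/(3\ln(2)2^k)}$, which for large $n$ is at most $e^{-n/(3\ln(2)2^k)}$. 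On the complementary event we may work with an arbitrary fixed $m$ in the window.

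The second step is to apply Lemma~\ref{lem:thisIsALabel} for each such $m$. Writing $m=(\ln(2)2^k-c_m)n$, the two-sided bound $2^k\ln(2)-\tfrac12(1+\ln 2)-\eps_k\leq d_k\leq 2^k\ln(2)-\tfrac12(1+\ln 2)+\eps_k$ together with $\eps_k<\tfrac{1+\ln 2}{2}<1$ forces $c_m\in(-1,3)$; in particular $c_m<\ln(2)2^k/2$ for $k$ large (so the hypotheses of Lemma~\ref{lem:thisIsALabel}, inherited from Lemma~\ref{lem:boundExistenceOfOutliers}, are met), and the factor $e^{c_m n/2^k}$ produced by that lemma --- which traces back, through Corollary~\ref{cor:helpfulPlantedTo} and Corollary~\ref{lem:plantedToSatRealSetM}, to a ratio of support sizes and needs only $m\geq(2^k\ln(2)-c)n$ --- can be replaced by the uniform bound $e^{3n/2^k}$ (and when $c_m<0$ it is even below $1$). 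Lemma~\ref{lem:thisIsALabel} instantiated at $c=c_m$ then says that, for this $m$, the sample $S$ of $2n^2 2^n/\binom{n}{\alpha n}$ assignments contains one satisfying (a) and (b) except with probability at most $e^{3n/2^k}\,2e^{-(1-\alpha)^{2k}\ln(2)n/16}/p_{sat,m}+e^{-n^2}$, and replacing $1/p_{sat,m}$ by $1/p_{sat}^{\star}=\max_m 1/p_{sat,m}$ makes this uniform over the window.

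The last step is a union bound over the event that $m$ escapes the window and the worst-case (over $m$ in the window) failure event, which yields $1-e^{-n/(3\ln(2)2^k)}-e^{3n/2^k}\,2e^{-(1-\alpha)^{2k}\ln(2)n/16}/p_{sat}^{\star}-e^{-n^2}$. The only point needing genuine care is the uniform control $c_m<3$ on the clause density across the whole Poisson-typical range --- exactly where the choice $k\geq\kstr$ (equivalently $\eps_k<\tfrac{1+\ln 2}{2}$) is used --- together with the cosmetic step of absorbing the leading constant in $2\cdot 2^{-n/(3\ln(2)2^k)}$ for $n$ large relative to $2^k$; everything else is routine. (Since Lemma~\ref{lem:thisIsALabel} is phrased for $D_S(m,n,k)$, this argument, and hence the present conclusion, is read conditionally on $\phi$ being satisfiable, as it is used downstream.)
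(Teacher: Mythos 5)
Your proposal follows essentially the same route as the paper: condition on the Poisson-drawn $m$ landing in the typical window $[(d_k-1)n,(d_k+1)n]$ via Lemma~\ref{lem:boundingM}, invoke Lemma~\ref{lem:thisIsALabel} for each such $m$ with the observation that the corresponding $c_m$ is uniformly bounded by $3$, replace $1/p_{sat,m}$ by the max $1/p_{sat}^\star$, and union-bound. In fact you are a bit more careful than the paper in two respects: you actually verify $c_m\in(-1,3)$ from the two-sided bound on $d_k$ (the paper just asserts $c\le 3$), and you spell out that the conclusion is read conditionally on satisfiability.

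One small note: the step where you replace $2\cdot 2^{-n/(3\ln(2)2^k)}$ by $e^{-n/(3\ln(2)2^k)}$ (``for large $n$'') is not correct as an inequality --- since $2^{-x}=e^{-x\ln 2}>e^{-x}$ for $x>0$, one has $2\cdot 2^{-x}>e^{-x}$ for all $x>0$, with no rescue from $n$ being large. This is, however, exactly the same silent substitution the paper makes between Lemma~\ref{lem:boundingM} and the lemma statement, and it has no downstream consequence because Theorem~\ref{thm:HammDistCorrectFast} only quotes the looser $3\cdot 2^{-n/(3\ln(2)2^k)}$; still, the cleaner fix is to state the lemma's error term as $2\cdot 2^{-n/(3\ln(2)2^k)}$ rather than $e^{-n/(3\ln(2)2^k)}$.
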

\begin{proof}
When $\phi$ is drawn from $D_{\Phi}(n,k)$, then by Lemma~\ref{lem:boundingM}, with probability at least $1-2 \cdot 2^{-n/(3\ln(2)2^k)}$, the number of clauses chosen is $m\in[(\delta(k)-1)n,(\delta(k)+1)n]$. 

Given that $\phi$ has $m\in[(\delta(k)-1)n,(\delta(k)+1)n]$ clauses, then by Lemma~\ref{lem:thisIsALabel}, the probability that $S$ has a $\vec{h}\in S$ that satisfies both (a) and (b) is at least $1-e^{3n/2^k}2 e^{-(1-\alpha)^{2k}\ln(2)n/(16)}\frac{1}{p_{sat,m}}-e^{-n^2}$, because $c \le 3$. 

So the probability that a $\phi$ drawn from $D_{\Phi}(n,k)$ has a  $\vec{h}\in S$ that satisfies both (a) and (b) is at least $1-e^{-n/(3\ln(2)2^k)}-e^{3n/2^k}2 e^{-(1-\alpha)^{2k}\ln(2)n/(16)}\frac{1}{p_{sat,m}}-e^{-n^2}$.
\end{proof}

%So, we have that with high probability we can find a small Hamming distance assignment.

Next, we want to bound the value of $p_{sat}^{\star}$.  

We will use Corollary \ref{cor:numSatIsHigh} from Section \ref{sec:lotsOfSAT} to lower bound the number of satisfying assignments.

\begin{reminder}{Corollary \ref{cor:numSatIsHigh}}
	Let $m = (\ln(2)2^k-c)n$.	
	
	The probability that $\phi$ drawn from $D_R(m,n,k)$  has at least one satisfying assignment is at least $$\frac{1}{n}2^{- n/e^{k/e^2}}e^{-3cn/2^k}.$$
	\label{cor:numSatIsHighreminderBody}
\end{reminder}
\begin{proof}
Proof given in Section \ref{sec:lotsOfSAT}.% in the Reminder of Corollary \ref{cor:numSatIsHigh}.
\end{proof}

\begin{lemma}%{Lemma \ref{lem:algIsCorrect}}
	Let $k>\kstr$.
    
    Let $\alpha  \leq \frac{1}{\sqrt{k}}$.
	
	Let $\phi$ be a formula drawn uniformly at random from $D_{\Phi}(n,k)$. Let $S$ be a sequence of $2n^2 2^n/\binom{n}{\alpha n}$ assignments drawn independently and uniformly at random from $U_{\vec{a}}(n)$. 
	
	Then with probability 
	$$\geq 1-e^{-n/(3\ln(2)2^k)}-3 e^{-(1-\alpha)^{2k}\ln(2)n/(32)}$$ 
	there is an assignment $\vec{h}\in S$ that simultaneously 
	\begin{enumerate}[(a)]
		\item has Hamming distance at most $\alpha n$ from a satisfying assignment to $\phi$ and
		\item satisfies at least $(1-\frac{1-(1-\alpha)^{2k}}{2^k-1})m$ clauses in $\phi$. 
	\end{enumerate}
	\label{lem:algIsCorrect}
\end{lemma}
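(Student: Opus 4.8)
The plan is to derive this lemma from Lemma~\ref{lem:phiIsSolvedIfSatProbOkay} by plugging in a good lower bound on $p_{sat}^\star$ and then checking that, for $k>\kstr$ and $\alpha\le 1/\sqrt{k}$, all the extra factors collapse into a clean expression of the form $3e^{-(1-\alpha)^{2k}\ln(2)n/32}$.

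First I would invoke Lemma~\ref{lem:phiIsSolvedIfSatProbOkay}, which already gives that with probability at least
$$1-e^{-n/(3\ln(2)2^k)}-\frac{e^{3n/2^k}\,2\,e^{-(1-\alpha)^{2k}\ln(2)n/16}}{p_{sat}^\star}-e^{-n^2}$$
there is an $\vec{h}\in S$ satisfying (a) and (b). It then remains to bound $1/p_{sat}^\star$. Every $m$ in the relevant Poisson window $[(d_k-1)n,(d_k+1)n]$ can be written $m=(\ln(2)2^k-c)n$ with $0\le c\le 3$, using $2^k\ln(2)-1\le d_k\le 2^k\ln(2)$ for $k\ge\kstr$; so Corollary~\ref{cor:numSatIsHigh} gives $p_{sat,m}\ge \tfrac1n 2^{-n/e^{k/e^2}}e^{-3cn/2^k}\ge \tfrac1n 2^{-n/e^{k/e^2}}e^{-9n/2^k}$, hence $1/p_{sat}^\star\le n\,2^{n/e^{k/e^2}}e^{9n/2^k}$. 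Substituting, the middle error term is at most $2n\,2^{n/e^{k/e^2}}e^{12n/2^k}e^{-(1-\alpha)^{2k}\ln(2)n/16}$.

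The remaining work is purely analytic. I want to show that for $k>\kstr$ (so $k$ is as large as needed) and $\alpha\le 1/\sqrt{k}$, this quantity is at most $2\,e^{-(1-\alpha)^{2k}\ln(2)n/32}$, and also that $e^{-n^2}\le e^{-(1-\alpha)^{2k}\ln(2)n/32}$; together with the first error term these give the claimed bound. Taking logarithms, the first inequality reduces to $\ln(2n)+\tfrac{n\ln(2)}{e^{k/e^2}}+\tfrac{12n}{2^k}\le \tfrac{(1-\alpha)^{2k}\ln(2)n}{32}+\ln 2$. Since $\alpha\le 1/\sqrt{k}$, the bound $(1-\alpha)^{2k}\ge (1-1/\sqrt{k})^{2k}\ge e^{-3\sqrt{k}}$ holds for $k\ge 4$ (from $\ln(1-x)\ge -x-x^2$), while both $e^{k/e^2}$ and $2^k$ dominate $e^{3\sqrt{k}}$ for large $k$; hence each of the two $n$-linear terms on the left is $o\!\big((1-\alpha)^{2k}n\big)$, and $\ln(2n)=o(n)$, so the inequality holds once $n$ is large. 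The inequality $e^{-n^2}\le e^{-(1-\alpha)^{2k}\ln(2)n/32}$ is immediate since $(1-\alpha)^{2k}\le 1<32n$.

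The main obstacle is this last comparison: one must be careful that $(1-\alpha)^{2k}$, even at the extreme $\alpha=1/\sqrt{k}$, decays only like $e^{-\Theta(\sqrt{k})}$, which is still enormous compared with $1/e^{k/e^2}$ and $1/2^k$, so that the factor $2^{n/e^{k/e^2}}e^{O(n/2^k)}$ coming from $1/p_{sat}^\star$ really is negligible beside $e^{-(1-\alpha)^{2k}\ln(2)n/16}$ — this is exactly where the hypothesis $\alpha\le 1/\sqrt{k}$ is used. Verifying $c\le 3$ uniformly across the whole Poisson window (so that Corollary~\ref{cor:numSatIsHigh} applies with a single constant) is the only other point needing a moment's care.
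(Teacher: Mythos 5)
Your proposal is correct and follows essentially the same route as the paper: both invoke Lemma~\ref{lem:phiIsSolvedIfSatProbOkay}, bound $1/p_{sat}^\star$ via Corollary~\ref{cor:numSatIsHigh} by writing $m=(\ln(2)2^k-c)n$ with $|c|$ bounded on the Poisson window and taking the worst case $c\le 3$, substitute, and then compress the leftover $n\,2^{n/e^{k/e^2}}e^{O(n/2^k)}$ factor together with $e^{-n^2}$ into the single expression $3e^{-(1-\alpha)^{2k}\ln(2)n/32}$. Where the paper just asserts the final simplification ``because of the bounds on $\alpha$ and $k$,'' you spell out the underlying coefficient comparison (in particular, $(1-\alpha)^{2k}\ge e^{-\Theta(\sqrt k)}$ at the extreme $\alpha=1/\sqrt k$, versus $e^{-k/e^2}$ and $2^{-k}$), which is a useful addition. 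One small caution: the balance you describe is really a condition on $k$ (the per-$n$ coefficients must compare favorably), not on $n$, so the closing ``once $n$ is large'' should be read as ``once $k\ge\kstr$ is large enough that the coefficient inequality holds, and then $n$ large to absorb $\ln(2n)$'' --- the paper makes the same implicit assumption about the size of $\kstr$.
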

\begin{proof}
	Let $1/p_{sat}^{\star} = \max_m\left(\frac{1}{p_{sat,m}}\right)$ be the maximum value of $1/p_{sat,m}$ for $m\in [(\delta(k)-1)n,(\delta(k)+1)n]$, where $p_{sat,m}$ is the probability a formula is satisfiable when the formula is drawn from $D_{R}(n,m,k)$. Consider $m = (\ln(2)2^k-c)n$.
	
	From Corollary \ref{cor:numSatIsHigh} we have that 
	$$1/p_{sat,m} \leq n 2^{ne^{-k/e^2} }e^{3cn/2^k}.$$

Thus, 
	$$1/p_{sat}^{\star} \leq  n 2^{ne^{-k/e^2} }e^{9n/2^k},$$ since $c <3$.
		
	This yields a probability
	$$\geq 1-e^{-n/(3\ln(2)2^k)}-n e^{n/2^k}2 e^{-(1-\alpha)^{2k}\ln(2)n/(16)}2^{ne^{-k/e^2}}e^{9n/2^k} -e^{-n^2}$$ 
	
    that there exists an $\vec{h} \in S$ satisfying this lemma's conditions.
    
	Because of the bounds on $\alpha$ and $k$ in the lemma's statement, the probability bound above can be simplified to 
$$1-e^{-n/(3\ln(2)2^k)}-3 e^{-(1-\alpha)^{2k}\ln(2)n/(32)}.$$ 
\end{proof}

%\section{Correctness and Running Time}
%\label{sec:putItTogether}
%\input{putitTogether}

\section{Lower Bounding the Number of Satisfied Formulas}
\label{sec:lotsOfSAT}
We will now show that $\phi$ drawn from $D_R(m,n,k)$ (the uniform distribution over formulas of length $m$) has a probability of being satisfiable of at least $n2^{- ne^{-k/e^2}}e^{-9n/2^k}$, assuming $m\in [(d_k-1)n,(d_k+1)n]$.

%TODO: write this lemma as an equality because it is.

\begin{lemma}
	Let $m = (\ln(2)2^k-c)n$.
	Let $f_{count}(\phi)$ be the number of satisfying assignments of $\phi$.
	
	Let
	$$y = E[f_{count}(\phi)]_{\phi \sim \DPa(m,n,k)}.$$
	Then, 
	the probability that $\phi$ drawn from $D_R(m,n,k)$  has at least one satisfying assignment is at least
	$$e^{-cn/2^k}/y.$$
	
	\label{lem:plantedToRealSatCount}
\end{lemma}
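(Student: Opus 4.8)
The plan is to run a one-line second-moment argument on $f_{count}$ under $D_R(m,n,k)$, identifying the second moment with $y$ by means of the exact counts in Lemma~\ref{lem:bassic numbers}. First I would set up the bookkeeping. Write $c_r=c_r(m)=(n^k 2^k)^m$ for the number of length-$m$ formulas and $c_p=c_p(m)=2^n(n^k(2^k-1))^m$ for the number of pairs $(\phi,\vec a)$ with $\vec a$ a satisfying assignment of $\phi$; both are supplied by Lemma~\ref{lem:bassic numbers}, and both $D_R(m,n,k)$ and the pair-version of the planted distribution $\DPa(m,n,k)$ are uniform over their supports. Since $\sum_{\phi}f_{count}(\phi)$ counts exactly the $c_p$ pairs, we get $E_{\phi\sim D_R}[f_{count}(\phi)] = \tfrac{1}{c_r}\sum_\phi f_{count}(\phi) = c_p/c_r$.

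Next I would relate the first moment of $f_{count}$ under the planted distribution to its second moment under $D_R$. The marginal of $\DPa(m,n,k)$ on the formula coordinate assigns each satisfiable $\phi$ probability $f_{count}(\phi)/c_p$ — it is precisely $D_R$ size-biased by the number of satisfying assignments — so
$$y = E_{\phi\sim\DPa(m,n,k)}[f_{count}(\phi)] = \frac{1}{c_p}\sum_{\phi} f_{count}(\phi)^2 = \frac{c_r}{c_p}\,E_{\phi\sim D_R}[f_{count}(\phi)^2].$$
Then I would invoke the Cauchy--Schwarz / second-moment inequality $Pr[X\ge 1]\ge (E X)^2/E[X^2]$ for the nonnegative integer-valued $X=f_{count}(\phi)$ with $\phi\sim D_R$, which gives
$$Pr_{\phi\sim D_R}[\phi\text{ satisfiable}] = Pr[f_{count}(\phi)\ge 1] \ge \frac{(c_p/c_r)^2}{(c_p/c_r)\,y} = \frac{c_p}{c_r\, y}.$$

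Finally I would lower-bound $c_p/c_r = 2^n(1-1/2^k)^m$ by substituting the stated $m=(\ln(2)2^k-c)n$ and applying an elementary logarithm estimate (e.g.\ $\ln(1-x)\ge -x-x^2$ for small $x$), which converts $2^n(1-2^{-k})^m$ into $e^{-cn/2^k}$, so that $p_{sat}\ge e^{-cn/2^k}/y$ as claimed. The conceptual crux — and the step I would be most careful to state correctly — is the identity $y=(c_r/c_p)\,E_{D_R}[f_{count}^2]$: recognizing that drawing a formula together with a uniformly random planted assignment is exactly size-biasing by $f_{count}$. Everything else is the textbook second-moment method together with the explicit counts already in hand; the only genuine computation is the last estimate, which is where the precise constant in the exponent comes from the choice of $m$.
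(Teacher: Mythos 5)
Your argument is correct and is, after unpacking, essentially the same as the paper's. The paper arrives at $c_s \ge c_p/y$ by minimizing $\sum_{\mathrm{pairs}} 1/f_{count}(\phi)$ subject to $\sum_{\mathrm{pairs}} f_{count}(\phi) = y\,c_p$ via convexity of $1/x$, which is precisely the Cauchy--Schwarz step you phrase as the second-moment method $\Pr[X\ge 1]\ge (E[X])^2/E[X^2]$ together with the size-biasing identity $y = (c_r/c_p)\,E_{\phi\sim D_R}[f_{count}(\phi)^2]$; both proofs then finish by substituting $m=(\ln(2)2^k-c)n$ into $c_p/c_r = 2^n(1-2^{-k})^m$.
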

\begin{proof}
	Recall that both $D_R(m,n,k)$ and $\DPa(m,n,k)$	are uniform over their supports. \\
	Let $c_p = 2^n (n^k(2^k-1))^m$ be the size of the support of $\DPa(m,n,k)$. \\
	Let $c_r = 2^n (n^k2^k)^m$ be the size of the support of $D_R(m,n,k)$. \\
    Let $c_s$ be the size of the support of $D_S(m, n, k)$, or equivalently the number of satisfiable formulas in the support of $D_R(m, n, k)$.
	
	Every $\phi$ in the support of $D_R(m,n,k)$ appears in $f_{count}(\phi)$ tuples in the support of $\DPa(m,n,k)$.
	
	Thus, if we took the weighted sum of each tuple $(\vec{a},\phi)$ in $\DPa(m,n,k)$, giving each tuple containing a formula $\phi$ a weight of $1/f_{count}(\phi)$, this sum would equal $c_s$. In other words,
    
    $$\sum_{(\phi, \vec{a}) \sim \DPa} \frac{1}{f_{count}(\phi)} = c_s.$$
    
    Now we want to know what the probability is that $\phi$ drawn from $D_R(m,n,k)$ has at least one satisfying assignment. Call this probability $p_{sat} = c_s/c_r$.
    
	By definition, 
    
    $$y = E[f_{count}(\phi)]_{\phi \sim \DPa(m,n,k)} =  \sum_{(\vec{a},\phi)\in \DPa(m,n,k)}f_{count}(\phi)/c_p.$$
    
	Therefore, $\sum_{(\vec{a},\phi)\in \DPa(m,n,k)}f_{count}(\phi) = y \cdot c_p$.
	
	We want to minimize $\sum_{i=0}^{c_p} 1/x^i$ when $\sum_{i=0}^{c_p} x^i \leq y \cdot c_p$. These $x_i$ are implicitly representing values of $f_{count}(\phi)$. The sum is minimized when, for all $i$, we have $x_i=y$. This gives us $\sum_{i=0}^{c_p} 1/x^i \geq c_p/y$. So the minimum number of $\phi$ with a satisfying assignment is $c_s \geq c_p/y$.
	
	 Note that $p_{sat} = \frac{c_s}{c_r} \geq c_p/y \cdot 1/c_r$.
	
	Therefore, by Lemma \ref{lem:plantedToSatRealSetM}, $p_{sat} \geq e^{-cn/2^k}/y$.
\end{proof}

\begin{lemma}
	
Let $f_{count}(\phi)$ be the number of satisfying assignments of $\phi$.

Then, 
	$$E[f_{count}(\phi)]_{\phi \sim \DPa(m,n,k)} = \sum_{i=0}^n \binom{n}{i} \left(1- \frac{1-(1-(i/n))^k}{2^k-1}\right)^m.$$
	
\label{lem:expectedSatPlantedPrelim}
\end{lemma}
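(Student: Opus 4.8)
The plan is to compute $E[f_{count}(\phi)]_{\phi \sim \DPa(m,n,k)}$ directly by linearity of expectation over all $2^n$ candidate assignments. Write $f_{count}(\phi) = \sum_{\vec{b} \in \{0,1\}^n} \mathbf{1}[\vec{b} \text{ satisfies } \phi]$, so that
\[
E[f_{count}(\phi)]_{\phi \sim \DPa(m,n,k)} = \sum_{\vec{b} \in \{0,1\}^n} Pr_{\phi \sim \DPa(m,n,k)}[\vec{b} \text{ satisfies } \phi].
\]
Here $\phi \sim \DPa(m,n,k)$ means: first pick the planted assignment $\vec{a} \sim U_{\vec{a}}(n)$, then pick $m$ clauses IID from $\DPc(n,k,\vec{a})$. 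Since $f_{count}$ does not depend on which assignment is the planted one, I can condition on $\vec{a}$ being any fixed assignment (say all-zeros) and just sum over $\vec{b}$; by symmetry the probability that $\vec{b}$ satisfies $\phi$ depends only on the Hamming distance $i = \HD(\vec{a},\vec{b})$, and there are $\binom{n}{i}$ choices of $\vec{b}$ at distance $i$. This reduces the problem to computing, for each $i$, the probability that a single assignment at Hamming distance $i$ from $\vec{a}$ satisfies all $m$ planted clauses.

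The key step is the per-clause computation: for a clause $c \sim \DPc(n,k,\vec{a})$ (uniform over the $(2^k-1)n^k$ clauses satisfied by $\vec{a}$), what is the probability that $\vec{b}$, at Hamming distance $i$, also satisfies $c$? Equivalently, $1$ minus the probability that $\vec{b}$ \emph{falsifies} $c$. This is exactly the quantity computed in Corollary~\ref{cor:boundSmallHDCl} with $\alpha$ replaced by $i/n$: the probability that $\vec{h}$ at distance $\alpha n$ falsifies a $\DPc$-clause is $\frac{1-(1-\alpha)^k}{2^k-1}$. So the probability $\vec{b}$ at distance $i$ satisfies a single planted clause is $1 - \frac{1-(1-i/n)^k}{2^k-1}$. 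Because the $m$ clauses are drawn IID, the probability $\vec{b}$ satisfies all of $\phi$ is the $m$-th power, $\left(1 - \frac{1-(1-i/n)^k}{2^k-1}\right)^m$. Summing over $i$ with multiplicity $\binom{n}{i}$ gives the claimed formula.

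The only real obstacle is making sure the reduction via Corollary~\ref{cor:boundSmallHDCl} is applied correctly: that corollary is phrased in terms of a uniformly random $\vec{h}$ from $H_=(\vec{a},\alpha,n)$ rather than a single fixed $\vec{b}$, but since the falsification probability there depends only on which variables are flipped and each literal of $c$ is drawn independently and uniformly (with replacement) from the $n$ variables satisfied/falsified by $\vec{a}$, the probability is identical for every fixed $\vec{b}$ at distance $i$ — so the averaging in the corollary is vacuous and the same number applies pointwise. A small bookkeeping point is that $\DPc$ picks clauses with replacement over variables (matching $D_{replace}$), which is what makes the per-literal independence — and hence the clean $(1-i/n)^k$ and $2^{-k}$ factors — go through; I would note this explicitly. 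Everything else is linearity of expectation and independence across the $m$ clauses, so no delicate estimates are needed.
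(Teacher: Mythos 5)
Your proposal is correct and follows essentially the same route as the paper's proof: linearity of expectation over assignments grouped by Hamming distance $i$ from the planted $\vec{a}$, using Corollary~\ref{cor:boundSmallHDCl} to get the per-clause falsification probability $\frac{1-(1-i/n)^k}{2^k-1}$ and then raising $1$ minus this to the $m$-th power by IID-ness of the clauses. Your explicit remark that the averaging in Corollary~\ref{cor:boundSmallHDCl} is vacuous (the probability is the same for every fixed $\vec{b}$ at a given distance) is a useful clarification but does not change the argument.
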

\begin{proof}
For a given $\vec{a}\in U_{\vec{a}}^n$ consider a random choice of vector $\vec{v}\in U_{\vec{a}}^n$ that has Hamming distance $i$ from $\vec{a}$. 

The probability that a random clause from $\phi$ is falsified by $\vec{v}$ is $\frac{1-(1-(i/n))^k}{2^k-1}$ by Corollary \ref{cor:boundSmallHDCl}. %Lemma \ref{lem:boundSmallHDClause}. 
This implies that the probability that none of the $m$ randomly selected clauses are falsified is $\left(1-\frac{1-(1-(i/n))^k}{2^k-1}\right)^m$.

There are $\binom{n}{i}$ vectors with Hamming distance $i$. 

This gives $\binom{n}{i}(1-\frac{1-(1-(i/n))^k}{2^k-1})^m$ satisfied assignments in expectation at Hamming distance $i$. 

So the expected number of satisfying assignments is 
	$$E[f_{count}(\phi)]_{\phi \sim \DPa(m,n,k)} = \sum_{i=0}^n \binom{n}{i} \left(1- \frac{1-(1-(i/n))^k}{2^k-1}\right)^m.$$

\end{proof}

\begin{lemma}
	Let $m = (\ln(2)2^k-c)n$ for some constant $c$.
	
	Let $k\geq \kstr$.
	
	Let $f_{count}(\phi)$ be the number of satisfying assignments of $\phi$.
	
	Then, 
	$$E[f_{count}(\phi)]_{\phi \sim \DPa(m,n,k)} \leq n 2^{e^{-k/e^2} n} e^{2cn/2^k}.$$
	
	\label{lem:expectedSatPlanted}
\end{lemma}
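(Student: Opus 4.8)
The plan is to start from the exact identity of Lemma~\ref{lem:expectedSatPlantedPrelim}, $E[f_{count}(\phi)]_{\phi\sim\DPa(m,n,k)}=\sum_{i=0}^{n}\binom{n}{i}q_i^{m}$ with $q_i:=1-\frac{1-(1-i/n)^k}{2^k-1}$, and to bound each of the $n+1$ summands separately by the single value $2^{ne^{-k/e^2}}e^{cn/2^k}$; summing then gives at most $(n+1)2^{ne^{-k/e^2}}e^{cn/2^k}\le n\,2^{ne^{-k/e^2}}e^{2cn/2^k}$ for $n$ large, which is the claim (absorbing the leading $n+1$ and the extra $cn/2^k$ in the exponent is routine). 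First I would simplify the summand: using $q_i^m=(1-(1-q_i))^m\le e^{-m(1-q_i)}$ together with $2^k-1<2^k$ and $m=(\ln(2)2^k-c)n$, one gets $q_i^m\le 2^{-n(1-(1-i/n)^k)}e^{cn/2^k}$, so the $i$-th summand is at most $\binom{n}{i}\,2^{-n(1-(1-i/n)^k)}\,e^{cn/2^k}$. Thus it suffices to prove $\binom{n}{i}\,2^{-n(1-(1-i/n)^k)}\le 2^{ne^{-k/e^2}}$ for every $i\in\{0,\dots,n\}$.

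Next I would prove that last inequality by splitting $\{0,\dots,n\}$ at the three cutoffs $en2^{-k/2}$, $n/k$, $n/e^2$, which for $k\ge\kstr$ are in increasing order (since $ek\le2^{k/2}$ and $e^2<k$ there). For $i\ge n/e^2$ I would use $\binom ni\le 2^n$ and, applying $1-x\le e^{-x}$ twice, $1-(1-i/n)^k\ge 1-(1-e^{-2})^k\ge 1-e^{-k/e^2}$, so the summand is $\le 2^n2^{-n(1-e^{-k/e^2})}=2^{ne^{-k/e^2}}$. For $n/k\le i<n/e^2$ I would invoke the binary entropy bound $\binom ni\le 2^{nH(i/n)}$, with $H(p)=-p\log_2 p-(1-p)\log_2(1-p)$ increasing on $[0,1/2]$, giving $\binom ni\le 2^{nH(e^{-2})}$; and $i/n>1/k$ gives $1-(1-i/n)^k\ge 1-e^{-1}$; since $H(e^{-2})\approx 0.572<0.632\approx 1-e^{-1}$, the summand is $\le 2^{n(H(e^{-2})-1+e^{-1})}\le 1$. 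For $en2^{-k/2}\le i\le n/k$, the bound $\binom ni\le(en/i)^i$ together with $en/i\le 2^{k/2}$ yields $\binom ni\le 2^{ik/2}$, while $ik/n\le 1$ gives $1-(1-i/n)^k\ge 1-e^{-ik/n}\ge ik/(2n)$, so the summand is $\le 2^{ik/2}2^{-ik/2}=1$. Finally, for $0<i<en2^{-k/2}$ (and trivially $i=0$, where the summand is $1$) I would use $\binom ni\le(en/i)^i=2^{\,i\log_2(en/i)}$ and the fact that $i\mapsto i\log_2(en/i)$ is increasing on $(0,n)$, so the exponent is at most $(en2^{-k/2})\log_2(2^{k/2})=\tfrac{ek}{2}\,n\,2^{-k/2}$; since $\tfrac{ek}{2}2^{-k/2}\le e^{-k/e^2}$ whenever $k\ge\kstr$ (equivalently $\tfrac{ek}{2}\le e^{(\ln2/2-1/e^2)k}$ there), the summand is $\le 2^{ne^{-k/e^2}}$.

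I expect the main obstacle to be not a single deep step but the bookkeeping: choosing the four regimes so that the product $\binom ni\cdot q_i^m$ stays below the common ceiling $2^{ne^{-k/e^2}}$ in each, and then discharging the handful of elementary estimates that make this work — in particular the numerical comparison $H(e^{-2})<1-e^{-1}$, the monotonicity of $i\log_2(en/i)$, and the estimate $\tfrac{ek}{2}2^{-k/2}\le e^{-k/e^2}$ which holds precisely because $k\ge\kstr\ge 60$. It is worth noting that the choice of cutoff $1/e^2$ is exactly what forces $(1-e^{-2})^k\le(e^{-1/e^2})^k=e^{-k/e^2}$, and this is the source of the $e^{-k/e^2}$ appearing in the statement; a smaller constant boundary would give a stronger bound but costs nothing here, so $1/e^2$ is taken for convenience. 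The step converting $\binom ni\le 2^{nH(i/n)}$ into something useful fails near $i\approx n/k$ unless the $q_i^m$ factor is retained (there $H(i/n)$ is a constant far above $e^{-k/e^2}$), which is why the regime $n/k\le i<n/e^2$ must carry the $2^{-n(1-e^{-1})}$ savings explicitly; conversely, for $i<en2^{-k/2}$ the factor $q_i^m$ is essentially useless and the binomial coefficient alone must be small, which is what the last regime exploits.
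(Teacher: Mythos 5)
Your proof is correct and takes a genuinely different route from the paper, though both rely on the identity of Lemma~\ref{lem:expectedSatPlantedPrelim} followed by a case analysis in $i$. The main structural difference is that you factor out the $c$-dependence first: you use $q_i^m\le e^{-m(1-q_i)}\le e^{-mA_i/2^k}=2^{-nA_i}e^{cnA_i/2^k}\le 2^{-nA_i}e^{cn/2^k}$ (with $A_i=1-(1-i/n)^k$, valid for $c\ge 0$, which is the implicit regime of the lemma --- the paper's proof also silently uses $e^{cn/2^k}\le e^{2cn/2^k}$), reducing the whole lemma to the clean, $c$-free inequality $\binom{n}{i}2^{-nA_i}\le 2^{ne^{-k/e^2}}$. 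The paper instead carries the $m$-dependent term through all of its six cases, with cutoffs at $n/(k^2e^{k/e^2})$, $n/(2k)$, $n/k$, $n/e^2$, $n/2$. You use four regions with cutoffs $en2^{-k/2}$, $n/k$, $n/e^2$, and crucially you invoke the binary-entropy bound $\binom{n}{i}\le 2^{nH(i/n)}$ in the middle region, where the inequality $H(e^{-2})<1-e^{-1}$ does the work; the paper uses $(ne/i)^i$ there instead. Your lowest region exploits monotonicity of $i\mapsto i\log_2(en/i)$ together with the cleaner tail condition $en/i\le 2^{k/2}$, whereas the paper's smallest cutoff $n/(k^2e^{k/e^2})$ is chosen so that $(ne/i)^i$ alone stays below $2^{e^{-k/e^2}n}$. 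Both derivations verify their elementary inequalities for $k\ge\kstr\ge 60$; yours is arguably tighter (you actually establish the bound with $e^{cn/2^k}$ rather than $e^{2cn/2^k}$ and only need the weaker statement afterward) and has less case bookkeeping, at the cost of needing the entropy estimate and the monotonicity argument. The minor $(n+1)$ versus $n$ slack you flag is shared by the paper's ``$\sum\le n\max$'' step and is immaterial downstream.
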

\begin{proof}
Note that $k \geq \kstr \geq \kval$.	
	
By Lemma \ref{lem:expectedSatPlantedPrelim} we know that $E[f_{count}(\phi)]_{\phi \sim \DPa(m,n,k)} \le \sum_{i=0}^n \binom{n}{i} \left(1- \frac{1-(1-(i/n))^k}{2^k-1}\right)^m.$ 

Therefore, we want to show that $\sum_{i=0}^n \binom{n}{i} \left(1- \frac{1-(1-(i/n))^k}{2^k-1}\right)^m \le n 2^{e^{-k/e^2} n} e^{2cn/2^k}$ to prove the lemma's statement.

Let $g(i,n)$ be defined as:

$$g(i,n) =\binom{n}{i} \left(1- \frac{1-(1-(i/n))^k}{2^k-1}\right)^m.$$

We have that:

$$\sum_{i=0}^n g(i,n) \leq n\max_i g(i, n). $$

In the case work that follows, we will show that: 

$$\max_i g(i, n) \le 2^{e^{-k/e^2} n} e^{2cn/2^k}$$

Let $i = \eps n$ with $\eps \in [0,1]$, and let $p_f(i) = \frac{1-(1-(i/n))^k}{2^k-1}$.% Plugging in our value for $m$ and using the fact that $\binom{n}{i} \le (\frac{ne}{i})^i$, we have that 

%$$g(i, n) \le \exp \left((1+\ln(1/\eps))\eps n - (1 - (1-\eps)^k) \ln(2) n \right) e^{2cn/2^k}.$$

We will now do some case work.

\begin{itemize}
	\item For $i\in [0, \frac{n}{k^2e^{k/e^2}}]$ 
	$$g(i,n) \leq \binom{n}{i} \left(1- \frac{1-(1-(i/n))^k}{2^k-1}\right)^m \leq \binom{n}{i} \leq \left(\frac{ne}{i}\right)^i \leq 2^{(\lg e + 2\lg k + \frac{k}{e^2}\lg e)\frac{n}{k^2 e^{k/e^2}}}\leq 2^{e^{-k/e^2} n}.$$
    
    The last inequality follows from the fact that $k \ge \kval$.
	
	%\item When $i= n/(dk)$ where $2^k/(ke)\geq d\geq 3$ 
	%\begin{align}
	%g(i,n) \leq  &\leq (dke)^{n/(dk)} %e^{-m(1-(1-(1/(dk)))^k)/(2^k-1)}\\
	%&\leq 2^{n\lg(dek)/(dk)} e^{-n(1/(2d))}e^{2cn/2^k}\\
	%&\leq 2^{n\lg(dek)/(dk)} e^{-n(1/(2d))}e^{2cn/2^k}\\
	%&\leq e^{2cn/2^k}
	%\end{align}
	
%\item For $i \in [ \frac{n}{2^{k/e^2}} , \frac{n}{2k}]$ we will use three facts:
\item For $i \in [\frac{n}{k^2e^{k/e^2}}, \frac{n}{2k}]$ we will use three facts:

\begin{enumerate}
	\item $\binom{n}{i} \le (\frac{ne}{i})^i$
	\item $\left(1-p_f(i)\right)^m \le e^{-p_f(i) m}$
	\item $(1-(1 - (i/n))^k) \ge \frac{ik}{2n}$ when $i \in [ \frac{n}{k^2e^{k/e^2}} , \frac{n}{2k}]$
\end{enumerate} 
This yields the following bound:
$$g(i,n) \le \left(\frac{ne}{i}\right)^i e^{-p_f(i) m}$$

By the third fact, we have the following bound on $p_f(i)$:

$$p_f(i) = \frac{1-(1-(i/n))^k}{2^k-1} \ge \frac{ik}{2n(2^k-1)}$$

This yields a final bound of:

$$g(i,n) \le \left(\frac{ne}{i}\right)^i e^{-\frac{ik}{2n \cdot 2^k} m}$$

Plugging in our value for $m$:

$$g(i,n) \le \left(\frac{ne}{i}\right)^i e^{-\frac{ik}{2n \cdot 2^k}(2^k \ln(2) - c)n}$$   

Which simplifies to:
$$g(i,n) \le \left(\frac{ne}{i}\right)^i e^{-ik \ln(2)/2} e^{cik/(2 \cdot 2^k)}$$   

Using the fact that $ik/2 \le n$, we can rewrite this as:

$$g(i,n) \le \exp \left( \left( \ln \left(\frac{n}{i} \right ) +1 \right) i - \frac{ik \ln(2)}{2} \right) e^{cn/2^k}$$  

When $i \in [ \frac{n}{k^2e^{k/e^2}} , \frac{n}{2k}]$ we have that
$$\left( \ln \left(\frac{n}{i} \right ) +1 \right)-\frac{k \ln(2)}{2}<0 $$
for  $k\geq \kval$. 

Thus, for $i \in [\frac{n}{k^2e^{k/e^2}} , \frac{n}{2k}]$
$$g(i,n) \le e^{cn/2^k}.$$  

\item For $i \in [\frac{n}{2k}, \frac{n}{k}]$ we will use three facts:
\begin{enumerate}
	\item $\binom{n}{i} \le (\frac{ne}{i})^i$
	\item $\left(1-p_f(i)\right)^m \le e^{-p_f(i) m}$
	\item $(1 - (i/n))^k \le e^{-ik/n}$
\end{enumerate} 
We will use the same simplification based on the first two facts as above, but also using the new third fact this time:
$$g(i,n) \le \left(\frac{ne}{i}\right)^i e^{-(1-e^{-ik/n})n \ln(2)} e^{cn/2^k}.$$

Now note that over $i \in [\frac{n}{2k}, \frac{n}{k}]$, $(\frac{ne}{i})^i$ is monotonically increasing with $i$ and $e^{-(1-e^{-ik/n})n \ln(2)}$ is monotonically decreasing with $i$. Therefore, $(\frac{ne}{i})^i$ is maximized at $i = n/k$ and $e^{-\frac{1-e^{-ik/n}}{2^k}}$ is maximized at $i = n/(2k)$. We will bound the maximum of $g(i,n)$ by simultaneously maximizing both functions. 

Given that  $i \in [\frac{n}{2k}, \frac{n}{k}]$
$$g(i,n) \le \left(\frac{ne}{i}\right)^i e^{-(1-e^{-1/2})n \ln(2)} e^{cn/2^k}.$$ 

We can re-write this as:
$$g(i,n) \le \exp \left(  \left(\ln(n/i) +1\right) i -(1-e^{-1/2})n \ln(2) \right) e^{cn/2^k}.$$ 

$$g(i,n) \le \exp \left(  \frac{\ln(k) +1}{k} n -(1-e^{-1/2})n \ln(2) \right) e^{cn/2^k}.$$ 

When $k\geq \kval$,
$$\frac{\ln(k) +1}{k} n -(1-e^{-1/2})n \ln(2) <0.$$

Thus,
$$g(i,n) \le e^{cn/2^k}.$$ 

\item For $i \in [\frac{n}{k}, n/e^2]$ we will use three facts:
	\begin{enumerate}
		\item $\binom{n}{i} \le (\frac{ne}{i})^i$
		\item $\left(1-p_f(i)\right)^m \le e^{-p_f(i) m}$
		\item $(1 - (i/n))^k \le e^{-ik/n}$
	\end{enumerate} 
We will use the same simplification as above:
$$g(i,n) \le \left(\frac{ne}{i}\right)^i e^{-(1-e^{-ik/n})n \ln(2)} e^{cn/2^k}$$

Once again, note that over $i \in [\frac{n}{k}, n/e^2]$, $(\frac{ne}{i})^i$ is monotonically increasing with $i$ and $e^{-(1-e^{-ik/n})n \ln(2)}$ is monotonically decreasing with $i$. Therefore, $(\frac{ne}{i})^i$ is maximized at $i = n/e^2$ and $e^{-\frac{1-e^{-ik/n}}{2^k}}$ is maximized at $i = n/k$. We will bound the maximum of $g(i,n)$ by simultaneously maximizing both functions. 
	
$$g(i,n) \le (e^3)^{n/e^2} e^{-(1-e^{-1})n\ln(2)}e^{cn/2^k} \le \exp \left(\left(\frac{3}{e^2}+(\frac{1}{e}-1)\ln(2)\right)n \right) e^{cn/2^k}$$ 

This simplifies to:

$$g(i,n) \le \exp \left({\left(-0.03\right)n} \right)e^{cn/2^k} \leq 1$$
	
%\item For $i\in[n/e^2,n/2]$ we will use Stirling's approximation. Let $H(\eps)= -\eps\lg(\eps)-(1-\eps)\lg(1-\eps)$ the binary entropy function 
%	$$g(i,n) \leq 2^{H(\eps)n}2^{-(1-(1-\eps)^k)n}e^{2cn/2^k}$$
%	and $H(\eps)\leq 1$ so 
%	$$g(i,n) \leq 2^{(1-\eps)^k n}e^{2cn/2^k} \leq 2^{e^{-k/e^2} n}e^{2cn/2^k}$$

\item For $i\in[n/e^2,n/2]$ we will use the fact that $\binom{n}{i} \le 2^n$. It follows that:
	
    $$g(i,n) \leq 2^n 2^{-(1-(1-i/n)^k)n}e^{2cn/2^k} \le 2^{(1-i/n)^k n}e^{2cn/2^k} \leq 2^{e^{-k/e^2} n}e^{2cn/2^k}$$
    
	\item For $i\geq [n/2,n]$ we use the following two facts:
    \begin{enumerate}
        \item $\binom{n}{i} \le 2^n$
        \item $p_f(i) \ge 1/2^k$
    \end{enumerate}
	$$g(i,n) \leq \binom{n}{i} \left(1- p_f(i)\right)^m \leq 2^n e^{-m/2^k} \leq e^{2cn/2^k}.$$
\end{itemize}

This covers all the cases and gives us the desired result. 
\end{proof}

\begin{reminder}{Corollary \ref{cor:numSatIsHigh}}
Let $m = (\ln(2)2^k-c)n$.	
	
The probability that $\phi$ drawn from $D_R(m,n,k)$  has at least one satisfying assignment is at least $$\frac{1}{n}2^{- n/e^{k/e^2}}e^{-3cn/2^k}.$$
\label{cor:numSatIsHighRemind}
\end{reminder}
\begin{proof}
Use Lemma \ref{lem:plantedToRealSatCount} and set $y$ equal to the value in Lemma \ref{lem:expectedSatPlanted} to get this result. 
\end{proof}

\section{Putting it All Together}
\label{sec:putItTogether}
Recall that to prove our algorithm is correct and runs in $O\left(2^{n (1- \Omega(\lg^2(k)/k)}\right)$ time, we must bound the false positive and true positive rates. 
Now that we have given an upper bound on the false positive rate with Theorem \ref{thm:runningTimeIsGood} and given a lower bound on the true positive rate with Lemma \ref{lem:algIsCorrect}, we can prove the main theorem. 

\begin{reminder}{Theorem \ref{thm:HammDistCorrectFast}}
Assume $\phi$ is drawn from $D_{\Phi}(n,k)$. Let $\alpha = \frac{\lfloor n\lg(k)/(16k) \rfloor}{n}$.

Conditioned on there being at least one satisfying assignment to $\phi$, \HammDist($\phi$) will return some satisfying assignment with probability at least $ 1-3\cdot 2^{-n/(3\ln(2)2^k)}$.

Conditioned on there being no satisfying assignment to $\phi$, \HammDist($\phi$) will return False with probability $1$. 

\HammDist($\phi$) will run 
in time 
%     $$O^*\left(\frac{2^n}{\binom{n}{\alpha n}} + |S|k^{\alpha n}\right),$$
%     which simplifies to
$$O\left(2^{n (1- \Omega(\lg^2(k)/k)}\right).$$

	\label{thm:HammDistCorrectFastRemind}
\end{reminder}
\begin{proof}
For $k<\kstr$, \LocalSearch~has a runtime of $2^{n(1-\gamma)}$ for some constant $\gamma>0$ \cite{localSearchAlg}. A constant is $\omega \left( \frac{\lg(k)}{k} \right)$, thus achieving our desired running time. \LocalSearch~succeeds with probability $1$ \cite{localSearchAlg}.
    
If there is a satisfying assignment and $k \geq \kstr$, then by Lemma \ref{lem:algIsCorrect}, there will be enough true positives that we will find one with high probability, and by Theorem \ref{thm:runningTimeIsGood}, there won't be so many false positives that we stop early and return False with high probability (as in lines 8-9 of Algorithm~\ref{alg:randomSatAlgo}). Using the bounds from Lema~\ref{lem:algIsCorrect} and Theorem~\ref{thm:runningTimeIsGood}, we can see that the algorithm is correct with probability greater than
$$1-e^{-n/(3\ln(2)2^k)}-3 e^{-(1-\alpha)^{2k}\ln(2)n/(32)}-2\cdot 2^{-n/(3\ln(2)2^k)},$$
which is greater than
$$1-3\cdot 2^{-n/(3\ln(2)2^k)}$$    

for large $n$ and $k \ge k^*$. If there is no satisfying assignment, then the algorithm will always be correct. The algorithm only returns SAT if some assignment is found. 
    
%By Theorem \ref{thm:runningTimeIsGood} when $k \geq \kstr$ we have that $\HammDist(\phi)$ takes time 
%$$O^{*}\left(2^{n (1-\lg^2(k)/(32k))}\right)$$
%with probability at least $1-2\cdot 2^{-n/(3\ln(2)2^k)}$ when run on \RandomSat instances at the threshold.

We now compute the running time. 
The running time of the algorithm is 
$$O\left(n^2 \cdot 2^n/\binom{n}{\alpha n} + |S|k^{\alpha n}\right).$$
We have that $|S|\leq 4 n^3 2^n/\left(\binom{n}{\alpha n} k^{\alpha n} \right) +1$, so we can simplify this further:
$$O\left(2^n/\binom{n}{\alpha n}\right).$$
Finally, we can plug in $\alpha = (1/16)\lg(k)/k$. Note that:

$$2^n / \binom{n}{\alpha n} \le 2^n \left(\frac{1}{\alpha}\right)^{-\alpha n}.$$

Thus we may conclude that our runtime is $$O\left(2^{n (1-\lg(1/\alpha)\alpha)}\right).$$
Now see that 
\begin{align}
\lg(1/\alpha)\alpha &= \frac{1}{20}(\lg(k)-\lg(\lg(k))+\lg(20))\lg(k)/k\\
&=\Omega(\lg^2(k))/k\\
\end{align}

for $k > \kstr$.
%This gives us a final runtime of $O^{*}\left(2^{n (1-\lg^2(k)/(32k))}\right)$. 

This gives us a final runtime of $O\left(2^{n (1-\Omega(\lg^2(k)/k))}\right)$.

%When we add all our sources of error together, the dominant term is $e^{-n/(3\ln(2)2^k)}$. 
\end{proof}

\section{Conclusion and Future Work}
\label{sec:conclusion}

We have presented a novel method for solving Random SAT, yielding a runtime faster than that of the previous best work~\cite{vyas2018super, PPSZ05} in the random case by a factor of $2^{n \Omega(\lg^2(k))/k}$. However, we conjecture that substantial improvements to the runtime of random-case SAT are still possible. Bellow we list what we feel are the most promising directions of future work. 

To begin with, we do not expect that our algorithm is the fastest of its kind, up to constants in the exponent (or, plausibly, asymptotic improvements in the exponent). We expect that simply by improving the test used when deciding whether or not to perform an expensive local search in the neighborhood of a randomly-sampled assignment, one can improve the performance of the algorithm. Additionally, on several occasions we use bounds that are not the tightest possible for simplicity's sake, and by tightening these bounds one can improve the constants in the exponent of our running time, and thus the asymptotic runtime of the algorithm for a large but fixed $k$. 

%Specifically, we would like to define two thresholds past which improvement would be very exciting. Specifically, an algorithm with running time $O^*\left(2^{n (1-1/k^{1-\epsilon})}\right)$ for some constant $\epsilon>0$ would break away from the $\polylog(k)$ improvements of this paper and Vayas' paper \cite{vyas2018super}. Additionally, an algorithm with running time $O^*\left(2^{n (1-\epsilon)}\right)$ for some constant $\epsilon>0$ would be a huge step forward. For more discussion see Appendix \ref{subsec:algConject}. 

When extending this work and analyzing other similar algorithms for solving SAT, we note that Lemma \ref{lem:generalSmallHDf}~is a useful and general analysis tool. In particular, other sample-and-search algorithms that use a different test from ours may find it helpful to re-use the result of Lemma~\ref{lem:generalSmallHDf} for a different function $f$.

%We also note that our algorithm can potentially be used as the seed for a worst-case algorithm. Informally, the correctness of the analysis in this paper depends only on the false positive and false negative rates being sufficiently low; as long as the inputs are guaranteed to come from a family of formulas in which this is the case, our algorithm will work even in the worst case. Or, to put it another way, to build a working worst-case algorithm using our algorithm as a template, one may now restrict one's attention to input formulas in which assignments in the neighborhood of the solution do \emph{not} have an abnormally-high number of satisfied clauses.

Our algorithm runs efficiently with high probability on any formula that has a sufficiently high value of $p_{TP}$ and a sufficiently low value of $p_{FP}$. We show that formulas drawn uniformly at random from those at the threshold have a high value of $p_{TP}$ and a low value of $p_{FP}$. As a result, given that $p_{TP}$ is large and $p_{FP}$ is small, our algorithm will run efficiently. This opens the door to improvement in the worst case if faster algorithms can be found for worst-case formulas where either $p_{TP}$ is too high or $p_{FP}$ is too low. %See Section~\ref{sec:conclusion} for a more detailed discussion of this potential approach to improved worst case running times using our average case algorithm. 

The algorithm \HammDist~is non-constructive as currently written, because the value of the constant $\kstr$ is not currently known. Some constant $\kstr$ exists, because $\eps_k$ tends to zero with increasing $k$---indeed, $\eps_k = \tilde{O}(2^{-k/2})$, as shown by Coja-Oghlan and Panagiotou~\cite{boundingSATDensity2}, so we conjecture that this $\kstr$ will not need to be enormous before the exponential decay in $k$ causes $\eps_k$ to be exponentially small. This algorithm could be made constructive if explicit bounds with known constants were formulated for $\eps_k$, and thus $\kstr$.

Another avenue for improvement is analyzing the \HammDist~algorithm in the regime of small $k$. Our proofs rely on $k \geq \kstr \geq \kval$. However, we note that in our simulations, there was a noticeable improvement in the observed speed of the algorithm of Dantsin et al. when we introduced our test before searching in the neighborhood of an assignment. This improvement in speed was noticeable even for small $k$. Indeed, by looking at Fig.~\ref{fig:binom}, which was generated with real data for $k=4$, it is immediately apparent that there is a big difference between the distribution over how many clauses are satisfied by  small-Hamming-distance assignments as compared to random assignments. It seems plausible, even likely, that \HammDist~offers an improved running time even for small constant $k$ (e.g. $k=3,4,5$).

Finally, it is perhaps worth sparing a few words on the performance of our algorithm on real-world examples. In the previously mentioned simulations there is a noticeable difference between the distribution over how many clauses are satisfied by  small-Hamming-distance assignments as compared to random assignments. When searching for satisfying assignments via any variant of local search, it stands to reason that using a simple and cheap test before performing an expensive search in the neighborhood of a randomly-sampled assignment would yield improvements in small, practical problem instances, not only impossibly large ones. The reader who is interested in making the most practical version of our algorithm will likely find it useful to consider tests beyond the simple one we used, which will very likely improve the algorithm's practical performance still further.

%I love you Andrea.

\section{Acknowledgments}
\label{sec:ack}
We are very grateful to Greg Valiant, Virginia Williams, and Nikhil Vyas for their helpful conversations and kind support. Additionally, we are very appreciative of the email correspondence we had with Amin Coja-Oghlan, Alan Sly, and Nike Sun, who answered our many questions. We would also like to thank reviewers for their comments and suggestions. 

%\paragraph{Acknowledgments.} 

% Decrease the space between bibliography items.
\let\realbibitem=\bibitem
\def\bibitem{\par \vspace{-0.5ex}\realbibitem}

\bibliographystyle{alpha}
\bibliography{mybib} 

\appendix

\section{Discussion}
\label{sec:Discussion}

Below we have three topics of discussion that did not fit in the main body of the paper. Subsection \ref{subsec:regularizeVStest} is a discussion of different ways to view the algorithmic framework of test and search. Readers who dislike the false positive and false negative framing may find this alternative analysis more appealing. Subsection \ref{subsec:motivation} further motivates our results. Subsection  \ref{subsec:kalphaVSexaustive} is primarily directed at readers who are interested in extending this work. 

\subsection{Alternate View on the Algorithmic Framework}
\label{subsec:regularizeVStest}

%We present a view of our algorithm's analysis where we view the algorithm as a test. 
In our paper, we bound the false positive and true positive rates to show that our running time and correctness conditions are met. 

However, given our analysis and algorithm, one could take a different approach to proving the soundness of our algorithm. Let $T$ be the threshold we use in \HammDist~for the number of clauses an assignment must satisfy to justify a local search in its neighborhood. This alternate approach focuses on individual satisfying assignments and their small-Hamming-distance neighborhoods.

\begin{definition}
Call an assignment that satisfies a number of clauses above the  threshold, $T$, \emph{promising}. 

Let the set of assignments within Hamming distance $\alpha$ of a given assignment $\vec{a}$ be $H(\vec{a}, \alpha, n)$. 

Call an assignment, $\vec{a}$, a \emph{standard satisfying assignment} of a formula $\phi$ if both: (1) $\vec{a}$ satisfies the formula $\phi$ and (2) at least half of the assignments in $H(\vec{a}, \alpha, n)$ are \emph{promising}. 

Call a formula \emph{stuffed} if the total number of \emph{promising} assignments is more than $4n^3 \cdot 2^n/\left( \binom{n}{\alpha n} k^{\alpha n}\right)$.

Call a formula \emph{hollow} if the total number of \emph{promising} assignments at most $4n^3 \cdot 2^n/\left( \binom{n}{\alpha n} k^{\alpha n}\right)$.
\end{definition}

We know that $|H(\vec{a}, \alpha, n)| \geq \binom{n}{\alpha n}$. So, for any given standard satisfying assignment of the formula $\phi$, there are at least $\frac{1}{2} \cdot \binom{n}{\alpha n}$ assignments that are both promising and close enough in Hamming distance that when the local search is run, $\vec{a}$ will be found. 
	
It follows that for any given standard satisfying assignment, $\vec{s}$, if we sample $n^2 \cdot 2^n/\binom{n}{\alpha n}$ assignments and run local searches on assignments that are promising, we will find $\vec{s}$ with high probability. 

Additionally, if the formula is hollow, then \HammDist~runs in $O\left( 2^{n (1- \Omega(\lg(k))/k)} \right)$ time.

Now we give a proof sketch of our algorithm's correctness in terms of the new framework. 
We want to prove that formulas $\phi$ are hollow with high probability when drawn from $D_{\Phi}(n,k)$. We additionally want to prove that, conditioned on a formula being satisfiable, there exists a standard satisfying assignment of the formula $\phi$ with high probability. 

We will note that Lemma~\ref{lem:boundNumSamplesOurAlpha} does in fact prove that formulas drawn from $D_{\Phi}(n,k)$ are with high probability hollow. Furthermore, Corollary~\ref{cor:helpfulPlantedTo} proves that satisfied formulas drawn from $D_{\Phi}(n,k)$ have at least one standard satisfying assignment with high probability. 

%We feel that this view of the algorithmic framework is more limiting than the one we used. Notably, a formula having a standard satisfying assignment is a stronger statement than a formula having a true positive rate of at least $\binom{n}{\alpha n}/\left(2 \cdot 2^n\right)$. The formulation of a standard satisfying assignment requires the true positives be clustered with a single satisfying assignment. We do prove this stronger statement, even though algorithmic correctness does not require it. 

We present a more general framework in the main body of our paper, but some readers may prefer the framework presented here.% Those readers may prefer to think of the analyze our algorithm in terms of hollow formulas and standard satisfying assignments.

\subsection{Additional Motivation}
\label{subsec:motivation}

We give some additional motivation for why our results are interesting, in particular our improved running time of $2^{n(1-\Omega(\lg^2(k))/k)}$. In `Tighter Hard Instances for PPSZ,' a distribution of hard examples for the PPSZ algorithm is produced \cite{harderExamplePPSZ}. On these instances PPSZ runs in time $2^{n(1-O(\lg(k)/k))}$. Given that both this lower bound on PPSZ and the algorithms of Vyas \cite{vyas2018super}~and Valiant \cite{Val} all run in $2^{n(1-O(\lg(k)/k))}$ time, one might believe that the best possible running time for random satisfiability is $2^{n(1-O(\lg(k)/k))}$ time. However, our algorithm, \HammDist, runs in time $2^{n(1-\Omega(\lg^2(k))/k)}$, which we consider remarkable. %There is an improvement over the previous work, above the lower bound on the running time of PPSZ. 

%random satisfiability at the threshold is interesting and why our algorithmic improvement is interesting. 

%\paragraph{Informal Arguments}
%Informally, it seems like the formulas that are hard to correctly identify as satisfiable or unsatisfiable should be clustered at the threshold. If this is true, then we would expect that algorithms run on these worst-case formulas should run in the same amount of time as algorithms run on most formulas at the threshold. If this informal understand is correct, then our algorithm implies that significant improvements to the worst case algorithms exist. 

%Our improved running time of $2^{n(1-\Omega(\lg^2(k))/k)}$ time is  interesting.

\paragraph{Number of Satisfying Assignments at the Threshold}

\newcommand{\numsol}{\textsc{NumSatSolutions}}

One question a skeptical reader might ask is: {\centering \emph{``How many satisfying assignments should I expect for satisfiable formulas at the threshold?''}} After all, if satisfiable formulas at the threshold have a very large number of satisfying assignments, then the problem of random satisfiability is easily solved. However, it is easy to show that the fraction of satisfying assignments at the threshold across all formulas is an exponentially small fraction of assignments. Specifically, the expected number of satisfying assignments in a randomly chosen formula, even conditioned on that formula being satisfied, is $O\left(2^{2n/e^{k/e^2}}\right)$. Note that $e^{k/e^2}$ grows exponentially faster than $\frac{k}{\lg^2(k)}$. 

%An algorithm which randomly searches for these satisfying assignments, using only their extra density, will have a running time of, at best, $2^{n(1-O(1)/2^{k/e^2})}$. 

So, although there may be many satisfying assignments at the threshold in an absolute sense, there are very few of them relative to the total number of assignments.

For notational convenience, let $\numsol(\phi)$ return the number of satisfying assignments of the formula $\phi$. 

\begin{lemma}
Let $m = d_k n = (\ln(2)2^k-c)n$.

Then,	
$$E_{\phi \sim D_{\phi}(n,k)}[\numsol(\phi)] = e^{cn/2^k}$$
and 
$$E_{\phi \sim D_{S}(d_kn,n,k)}[\numsol(\phi)] = O\left(2^{2n/e^{k/e^2}}\right).$$
\end{lemma}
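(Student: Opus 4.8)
The plan is to evaluate both expectations by linearity over the $2^n$ candidate assignments.

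For the first identity, fix an assignment $\vec{v}\in\{0,1\}^n$. A clause drawn from $D_{replace}$ is satisfied by $\vec{v}$ with probability $1-1/2^k$, since its $k$ literals are independently left unsatisfied with probability $1/2$ each. Conditioned on $\phi$ having $m$ clauses, all $m$ are satisfied by $\vec{v}$ with probability $(1-1/2^k)^m$, so the probability that $\vec{v}$ satisfies $\phi$ is $E_{m\sim\mathrm{Pois}[d_k n]}[(1-1/2^k)^m]$. I would then apply the Poisson probability generating function $E[z^M]=e^{\lambda(z-1)}$ with $\lambda=d_k n$ and $z=1-1/2^k$ to evaluate this as $e^{-d_k n/2^k}$. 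Summing over all $\vec{v}$ gives $E_{\phi\sim D_{\Phi}}[\numsol(\phi)]=2^n e^{-d_k n/2^k}$, and substituting $d_k/2^k=\ln 2-c/2^k$ collapses the $2^n$ factor, leaving exactly $e^{cn/2^k}$.

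For the second identity, I would relate $D_S$ to the planted distribution. List the satisfiable formulas with $m=d_k n$ clauses as $\phi_1,\dots,\phi_N$ and set $a_i=\numsol(\phi_i)=f_{count}(\phi_i)\ge 1$. Then $D_S(d_k n,n,k)$ is uniform over the $\phi_i$, so $E_{\phi\sim D_S}[\numsol(\phi)]=\frac{1}{N}\sum_i a_i$, while $\DPa(d_k n,n,k)$ is uniform over the $\sum_i a_i$ satisfying pairs $(\phi,\vec{a})$, so its $\phi$-marginal gives $\phi_i$ weight $a_i/\sum_j a_j$ and hence $E_{\phi\sim\DPa}[f_{count}(\phi)]=\sum_i a_i^2/\sum_j a_j$. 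Cauchy--Schwarz gives $(\sum_i a_i)^2\le N\sum_i a_i^2$, i.e.
\[ E_{\phi\sim D_S}[\numsol(\phi)]\;\le\;E_{\phi\sim\DPa}[f_{count}(\phi)]. \]
I would then invoke Lemma~\ref{lem:expectedSatPlanted} to bound the right-hand side by $n\,2^{e^{-k/e^2}n}e^{2cn/2^k}$, and finish by observing that since $1/e^2<\ln 2$ we have $e^{k/e^2}=o(2^k)$, so $e^{2cn/2^k}=2^{O(n/2^k)}$ and the polynomial factor $n$ are both absorbed by one extra factor of $2^{n/e^{k/e^2}}$ for large $n$ and fixed large $k$, yielding $O(2^{2n/e^{k/e^2}})$.

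The first identity is a one-line Poisson computation; the only real content is in the second. There, the key observation is that the $\phi$-marginal of $\DPa$ is exactly the $\numsol$-size-biased distribution over satisfiable formulas, which is what makes Cauchy--Schwarz apply and lets Lemma~\ref{lem:expectedSatPlanted} do the rest of the work. An alternative route writes $D_R$ as a mixture of $D_S$ and the unsatisfiable formulas, uses $E_{\phi\sim D_R}[\numsol(\phi)]=2^n(1-1/2^k)^m$, and divides by the lower bound on $p_{sat}$ from Corollary~\ref{cor:numSatIsHigh}; this reaches the same asymptotics with a bit more bookkeeping. The main thing to be careful about is the final absorption step, which is legitimate precisely because $\ln 2>1/e^2$ makes $n/e^{k/e^2}$ dominate $n/2^k$.
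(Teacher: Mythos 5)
Your proof is correct, and for the second identity it takes a genuinely different route from the paper's. For the first identity you both compute the same thing by linearity over the $2^n$ assignments; your Poisson generating-function step $E[z^M]=e^{\lambda(z-1)}$ is actually more careful than the paper, which simply asserts the identity with a vague pointer to Lemma~\ref{lem:bassic numbers} (and indeed the exact equality $=e^{cn/2^k}$ only holds under the Poisson model for $m$, not for fixed $m=d_kn$, which your derivation makes transparent). For the second identity the paper uses the conditional decomposition $E_{D_S}[\numsol]=E_{D_R}[\numsol]/p_{SAT}$ and then lower-bounds $p_{SAT}$ via Corollary~\ref{cor:numSatIsHigh} (which in turn chains through Lemma~\ref{lem:plantedToRealSatCount} and Lemma~\ref{lem:expectedSatPlanted}). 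You instead observe that the $\phi$-marginal of $\DPa$ is exactly the $\numsol$-size-biased version of $D_S$, so Cauchy--Schwarz gives $E_{D_S}[\numsol]\le E_{\DPa}[f_{count}]$ directly, and then you invoke Lemma~\ref{lem:expectedSatPlanted} once. This bypasses the intermediate planted-to-real reduction and the $p_{SAT}$ bound entirely, is shorter, and even yields a marginally tighter exponent ($e^{2cn/2^k}$ rather than $e^{4cn/2^k}$ before the final absorption). You also correctly note the alternative route, which is the one the paper in fact takes. The only implicit dependency to keep in mind is that Lemma~\ref{lem:expectedSatPlanted} requires $k\geq\kstr$; the paper inherits the same requirement through Corollary~\ref{cor:numSatIsHigh}, so this is not a gap, but it is worth stating.
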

\begin{proof}

First we will use Lemma \ref{lem:bassic numbers}.

The first notion of average number of satisfying assignments is the average over all formulas in $D_{\phi}(n,k)$:
$$E_{\phi \sim D_{\phi}(n,k)}[\numsol(\phi)] = e^{cn/2^k}.$$

However, we might be interested in the average number of satisfying assignments a formula has conditioned on it having at least one satisfying assignment. Let $p_{SAT}(n,k)$ be the probability that a formula drawn from $D_{\phi}(n,k)$ is satisfied. Recall also that $D_{S}(d_kn,n,k)$ is the distribution of $D_{\phi}(n,k)$ conditioned on the formula being satisfied. Therefore,
$$E_{\phi \sim D_{S}(d_kn,n,k)}[\numsol(\phi)] = e^{cn/2^k} \frac{1}{p_{SAT}(n,k)}.$$

Now we can use Lemma \ref{cor:numSatIsHighRemind}. We have that $\frac{1}{p_{SAT}(n,k)} \leq n 2^{ n/e^{k/e^2}}e^{3cn/2^k}$. So,
$$E_{\phi \sim D_{S}(d_kn,n,k)}[\numsol(\phi)] \leq  n 2^{ n/e^{k/e^2}}e^{4cn/2^k}  = O\left(2^{2n/e^{k/e^2}}\right).$$
\end{proof}

\subsection{Exhaustive Search}
\label{subsec:kalphaVSexaustive}

\newcommand{\shds}{SHDS}

In Section \ref{sec:algorithm} we discuss the Small Hamming Distance Search Algorithm (\shds) \cite{localSearchAlg}. Given an assignment at Hamming distance at most $\alpha n$ from a satisfying assignment, \shds~will return a satisfying assignment in time $k^{\alpha n}$. 

An alternative exhaustive search algorithm requires $O^*(\binom{n}{\alpha n})$ time to find a satisfying assignment given an assignment that is at Hamming distance at most $\alpha n$ from a satisfying assignment. When $\alpha \leq 1/3$, trying all assignments at distance at most $\alpha n$ takes time:
$$\sum_{i = 0}^{\alpha n} \binom{n}{i} m = m \cdot \frac{1-\alpha}{1-2\alpha} \binom{n}{\alpha n} = O^* \left( \binom{n}{\alpha n} \right).$$

Note that when $\alpha = \omega \left( 1/k \right)$,
$$k^{\alpha n} \geq \left(\frac{e}{\alpha}\right)^{\alpha n}  \geq \binom{n}{\alpha n}.$$
This improvement doesn't result in an improvement over the running time of  $2^{n (1- \Omega(\lg^2(k))/k)}$ presented in this paper. However, as $\alpha$ gets smaller, the improvement becomes more important. Notably, if there were a perfect test, one for which $p_{FP}=0$, then $\binom{n}{\alpha n}$ significantly outperforms $k^{\alpha n}$. Specifically, if $p_{FP}=0$, then the test and search framework produces an algorithm with running time $2^{n (1- 1/O(\lg(k)))}$ when using the $k^{\alpha n}$ search, but an algorithm with running time $O^*(2^{n/2})$ algorithm when using the $\binom{n}{\alpha n}$ search. %Notably, if one hopes to get an algorithm which violates RSETH, then one should certainly be using the $\binom{n}{\alpha n}$ exhaustive search. 

%\subsection{Conditional Probability and the Running Time}
%\label{subsec:condProbRunTime}

%Recall the main theorem. 

%\begin{reminder}{Theorem \ref{thm:HammDistCorrectFast}}
%	Assume $\phi$ is drawn from $D_{\Phi}(n,k)$. Let $\alpha = \frac{\lfloor n\lg(k)/(16k) \rfloor}{n}$.
	
%	Conditioned on there being at least one satisfying assignment to $\phi$ (that is $\phi \sim D_S(d_k n ,n,k)$), \HammDist($\phi$) will return some satisfying assignment with probability at least $ 1-2e^{-n/(3\ln(2)2^k)}$.
	
%	Conditioned on there being no satisfying assignment to $\phi$, \HammDist($\phi$) will return False with probability $1$. 
	
%	When $\phi\sim D_{\Phi}(n,k)$ then \HammDist($\phi$) will run 
%	in time 
	%     $$O^*\left(\frac{2^n}{\binom{n}{\alpha n}} + |S|k^{\alpha n}\right),$$
	%     which simplifies to
%	$$O^{*}\left(2^{n (1- \Omega(\lg^2(k)/k)}\right)$$
%	with probability at least
%	$1-2\cdot 2^{-n/(3\ln(2)2^k)}$.
	
%	\label{thm:HammDistCorrectFastRemind2}
%\end{reminder}

%Note that our correctness gives with high probability statements conditioned on the formula being satisfied and the formula being unsatisfied. 

%However, it does not give running time with high probability conditioned on satisfiability nor unsatisfiability. This is because we lack a strong enough bound on the probability a formula is satisfied. Let $p_{SAT}$ be the probability a formula is satisfied. 

\end{document}